\def\game{\calg}
\def\mov{\Gamma}
\def\moves{\ita{A}}
\def\movs{\mov}
\def\dest{\mathit{Dest}}
\def\stra{\pi}
\def\bigstra{\Pi}
\def\distr{\cald}
\def\pat{\omega}
\def\pats{\Omega}
\def\seq#1{\langle #1 \rangle}
\def\Prb{\mbox{\rm Pr}}
\def\supp{\mbox{\rm Supp}}
\def\N{\hspace{4pt}\raise 3pt \hbox{\circle{7}} \hspace{4pt} }
\def\go{\rightarrow}
\def\wh{\widehat}
\def\dashlinestretch{30}
\def\outcome{\mbox{\it Outcomes\/}}
\def\trans{\delta}
\def\pr{P}
\def\setm{\setminus}
\def\ve{\varepsilon}
\def\win#1#2#3{\langle\!\langle #1 \rangle\!\rangle_{#2} \big(#3\big)}
\def\bigo{\calo}
\def\sure{{\text{\textit{sure}}}}
\def\almost{{\text{\textit{Almost}}}}
\def\limit{{\text{\textit{Limit}}}}
\def\bounded{{\text{\textit{bounded}}}}
\def\wm{\zeta}
\def\sd{\chi}
\def\dis{\xi}
\def\dopp#1#2#3#4#5{\xi_{#2,#1}^{#3}[#4](#5)}
\def\pre{{\text{\textrm{Pre}}}}
\def\epre{{\text{\textrm{Pospre}}}}
\def\apre{{\text{\textrm{Apre}}}}
\def\lpre{{\text{\textrm{Lpre}}}}
\def\fpre{{\text{\textrm{Fpre}}}}
\newcommand{\mem}{{\mathcal M}}
\newcommand{\DM}{\mathit{DM}}
\def\Inf{{\text{\textit{Inf}}}}
\newcommand{\Reach}{{\text{\textrm{Reach}}}}
\newcommand{\Safe}{{\text{\textrm{Safe}}}}
\newcommand{\Buchi}{{\text{\textrm{B\"uchi}}}}
\newcommand{\coBuchi}{{\text{\textrm{co-B\"uchi}}}}
\newcommand{\Parity}{{\text{\textrm{Parity}}}}
\newcommand{\ParityCond}{{\text{\textrm{Parity}}}(p)}
\newcommand{\coParityCond}{{\text{\textrm{coParity}}}(p)}
\newcommand{\Nats}{\mathbb{N}}
\newcommand{\ov}{\overline}
\def\lpreodd{\text{\textrm{APreOdd}}}
\def\lpreeven{\text{\textrm{APreEven}}}
\def\fpreodd{\text{\textrm{PosPreOdd}}}
\def\fpreeven{\text{\textrm{PosPreEven}}}
\def\frpreodd{\text{\textrm{FrPreOdd}}}
\def\frpreeven{\text{\textrm{FrPreEven}}}
\newcommand{\dcup}{{\makebox[0em][c]{$\bigcup$}\makebox[0em]{$\ast$} }}
\newcommand{\dcap}{{\makebox[0em][c]{$\bigcap$}\makebox[0em]{$\ast$} }}
\newcommand{\sdcup}{\ \ \dcup \ \ }
\newcommand{\sdcap}{\ \ \dcap \ \ }
\newcommand{\FP}{\mathit{FP}}
\newcommand{\IP}{\mathit{IP}}
\newcommand{\FM}{\mathit{FM}}
\newcommand{\IM}{\mathit{IM}}
\newcommand{\wt}{\widetilde}
\newcommand{\unif}{\mathsf{unif}}
\newtheorem{remark}{Remark}
\newcommand{\slopefrac}[2]{\leavevmode\kern.1em
  \raise .5ex\hbox{\the\scriptfont0 #1}\kern-.1em
  /\kern-.15em\lower .25ex\hbox{\the\scriptfont0 #2}}
\newcommand{\half}{\slopefrac{1}{2}}
\begin{document}

\sloppy 

\title{\bf Bounded Rationality in Concurrent Parity Games\thanks{We are indebted to and thank 
anonymous reviewers for extremely helpful comments.}}

\author{Krishnendu Chatterjee \\
\normalsize
  IST Austria (Institute of Science and Technology Austria) \\
  \texttt{Krishnendu.Chatterjee@ist.ac.at}
}

\date{ 
}

\maketitle

\begin{abstract}
We study two-player {\em concurrent} games on finite-state graphs played for an 
infinite number of rounds, where in  each round, the two players (player~1 and 
player~2) choose their moves independently and simultaneously; the current 
state and the two moves determine the successor state. 
The objectives are $\omega$-regular winning conditions specified as 
{\em parity\/} objectives. 
We consider the \emph{qualitative analysis} problems: the computation of 
the \emph{almost-sure} and \emph{limit-sure} winning set of states, where 
player~1 can ensure to win with probability~1 and with probability arbitrarily 
close to~1, respectively. 
In general the almost-sure and limit-sure winning strategies 
require both \emph{infinite-memory} as well as \emph{infinite-precision}
(to describe probabilities).
While the qualitative analysis problem for concurrent parity games with 
infinite-memory, infinite-precision randomized strategies was studied 
in~\cite{dAH00,CdAH11}, we study the \emph{bounded-rationality} problem for 
qualitative analysis of concurrent parity games, where the strategy set for 
player~1 is restricted to bounded-resource strategies. 
In terms of precision, strategies can be deterministic, uniform, 
finite-precision, or infinite-precision; and in terms of memory, strategies 
can be memoryless, finite-memory, or infinite-memory.
We present a precise and complete characterization of the qualitative 
winning sets for all combinations of classes of strategies.
In particular, we show that uniform memoryless strategies are as 
powerful as finite-precision infinite-memory strategies, 
and infinite-precision memoryless strategies are as powerful 
as  infinite-precision finite-memory strategies. 
We show that the winning sets can be computed in $\bigo(n^{2d+3})$ time, 
where $n$ is the size of the game structure and $2d$ is the number of 
priorities (or colors), and our algorithms are symbolic.
The membership problem of whether a state belongs to a winning set
can be decided in NP $\cap$ coNP. 
Our symbolic algorithms are based on characterization of the winning 
sets as $\mu$-calculus formulas, however, our $\mu$-calculus formulas 
are crucially different from the ones for concurrent parity games 
(without bounded rationality); and our memoryless witness strategy 
constructions are significantly different from the infinite-memory witness 
strategy constructions for concurrent parity games.
\end{abstract}

\section{Introduction}\label{sec-intro}

In this work we consider the qualitative analysis (computation of 
almost-sure and limit-sure winning sets) for concurrent parity games.
In prior works~\cite{dAH00,CdAH11} the qualitative analysis for concurrent 
parity games have been studied for the general class of infinite-memory, 
infinite-precision randomized strategies.
In this work, we study the \emph{bounded rationality} problem where the 
resources of the strategies are limited, and establish precise and complete 
characterization of the qualitative analysis of concurrent parity games for 
combinations of resource-limited strategies.
We start with the basic background of concurrent games, parity objectives,
qualitative analysis, and the previous results.

\smallskip\noindent{\em Concurrent games.}
A two-player (player~1 and player~2) concurrent game is played on a finite-state 
graph for an infinite number of rounds, where in each round, the players independently 
choose moves, and the current state and the two chosen moves determine the successor state. 
In \emph{deterministic} concurrent games, the successor state is unique;
in \emph{probabilistic} concurrent games, the successor state is given by 
a probability distribution.
The outcome of the game (or a \emph{play}) is an infinite sequence of states.
These games were introduced by Shapley~\cite{Shapley53}, and have been one of 
the most fundamental and well studied game models in stochastic graph games.
We consider $\omega$-regular objectives; where given an $\omega$-regular set 
$\Phi$ of plays, player~1 wins if the outcome of the game lies in~$\Phi$.
Otherwise, player~2 wins, i.e., the game is zero-sum.
Such games occur in the synthesis and verification of reactive systems
\cite{Church62,RamadgeWonham87,PnueliRosner89}, and $\omega$-regular
objectives (that generalizes regular languages to infinite words) 
provide a robust specification language that can express all 
specifications (such as safety, liveness, fairness) that arise in the analysis 
of reactive systems (see also~\cite{ALW89,Dill89book,ATL-FOCS97}).
Concurrency in moves is necessary for modeling the synchronous interaction of 
components~\cite{AHM00a,AHM01a}.
Parity objectives can express all $\omega$-regular conditions, and we
consider concurrent games with parity objectives.

\smallskip\noindent{\em Qualitative and quantitative analysis.}
The player-1 \emph{value} $v_1(s)$ of the game at a state $s$ is the 
limit probability with which player~1 can guarantee $\Phi$ against all strategies of player~2.
The player-2 \emph{value} $v_2(s)$ is analogously the limit 
probability with which player~2 can ensure that the outcome of 
the game lies outside~$\Phi$.
The \emph{qualitative} analysis of games asks for the computation of the 
set of \emph{almost-sure} winning states where player~1 can ensure $\Phi$ 
with probability~1, and the set of \emph{limit-sure} winning states where 
player~1 can ensure $\Phi$ with probability arbitrarily close to~1 (states with value~1); 
and 
the \emph{quantitative} analysis asks for precise computation of values.
Concurrent (probabilistic) parity games are determined~\cite{Mar98}, i.e., 
for each state $s$ we have $v_1(s)+v_2(s)=1$.
The qualitative analysis for concurrent parity games was studied in~\cite{dAH00,CdAH11} 
and the quantitative analysis in~\cite{dAM04}.

\smallskip\noindent{\em Difference of turn-based and concurrent games.}
Traditionally, the special case of \emph{turn-based\/} games has received 
most attention.
In turn-based games, in each round, only one of the two players has a 
choice of moves.
In turn-based deterministic games, all values are 0 or~1 and can be 
computed using combinatorial algorithms \cite{Thom90,Schewe07,JPZ06};
in turn-based probabilistic games, values can be computed by iterative 
approximation \cite{CH06,Condon92,GH08}.
Concurrent games significantly differ from turn-based games in requirement
of strategies to play optimally.
A \emph{pure} strategy must, in each round, choose a move based on the current state 
and the history (i.e., past state sequence) of the game, 
whereas, a \emph{randomized} strategy in each round chooses a probability
distribution over moves (rather than a single move).
In contrast to turn-based deterministic and probabilistic games with parity 
objectives, where pure memoryless (history-independent) optimal strategies 
exist~\cite{EJ88,Zie98,DJW97,CJH03,KrishThesis}, 
in concurrent games, both randomization and infinite-memory are required 
for limit-sure winning~\cite{dAH00} (also see~\cite{EY06} for results on 
pushdown concurrent games, \cite{HKM09,crg-tcs07} on complexity of 
strategies required in concurrent reachability games, and \cite{EY10,HKLMT11}
on complexity of related concurrent game problems).

\smallskip\noindent{\em Bounded rationality.} 
The qualitative analysis for concurrent parity games with infinite-memory,
infinite-precision randomized strategies was studied in~\cite{dAH00,CdAH11}. 
The strategies for qualitative analysis for concurrent games require two 
different types of infinite resource: (a) infinite-memory, and (b) infinite-precision 
in describing the probabilities in the randomized strategies; 
(see example in~\cite{dAH00} that limit-sure winning in concurrent B\"uchi 
games require both infinite-memory and infinite-precision). 
In many applications, such as synthesis of reactive systems, infinite-memory 
and infinite-precision strategies are not implementable in practice. 
Thus though the theoretical solution of infinite-memory and infinite-precision 
strategies was established in~\cite{dAH00}, the strategies obtained are not 
realizable in practice, and the theory to obtain implementable strategies in 
such games has not been studied before.
In this work we consider the \emph{bounded rationality} problem for qualitative
analysis of concurrent parity games, where player~1 (that represents the controller) 
can play strategies with bounded resource.
To the best of our knowledge this is the first work that considers the bounded 
rationality problem for concurrent $\omega$-regular graph games.
The motivation is clear as controllers obtained from infinite-memory and
infinite-precision strategies are not implementable.

\smallskip\noindent{\em Strategy classification.}
In terms of precision, strategies can be classified as pure (deterministic), uniformly 
random, bounded-finite-precision, finite-precision, and infinite-precision 
(in increasing order of precision to describe probabilities of a randomized strategy).
In terms of memory, strategies can be classified as memoryless, finite-memory and 
infinite-memory. 
In~\cite{dAH00} the almost-sure and limit-sure winning characterization 
under infinite-memory, infinite-precision strategies were presented. 
In this work, we present (i)~a complete and precise characterization of the qualitative 
winning sets for bounded resource strategies, (ii)~symbolic algorithms to compute 
the winning sets, and (iii)~complexity results to determine whether a given 
state belongs to a qualitative winning set. 

\smallskip\noindent{\em Our results.}
Our contributions for bounded rationality in concurrent parity games 
are summarized below.
\begin{enumerate}

\item We show that pure memoryless strategies are as powerful as 
pure infinite-memory strategies in concurrent games.
This result is straight-forward, obtained by a simple reduction to turn-based 
probabilistic games.

\item Uniform memoryless strategies are more powerful than 
pure infinite-memory strategies (the fact that randomization is more powerful
than pure strategies follows from the classical matching pennies game), and we
show that uniform memoryless strategies are as powerful as finite-precision infinite-memory 
strategies. 
Thus our results show that if player~1 has only finite-precision strategies,
then no memory is required and uniform randomization is sufficient.
Hence very simple (uniform memoryless) controllers can be obtained for the
entire class of finite-precision infinite-memory controllers.
The result is obtained by a reduction to turn-based stochastic games, and 
the main technical contribution is the characterization of the winning 
sets for uniform memoryless strategies by a $\mu$-calculus formula.
The $\mu$-calculus formula not only gives a symbolic algorithm, but is also 
in the heart of other proofs of the paper.

\item In case of bounded-finite-precision strategies, the almost-sure and limit-sure 
winning sets coincide. 
For almost-sure winning, uniform memoryless strategies are also as 
powerful as infinite-precision finite-memory strategies. 
In contrast infinite-memory infinite-precision 
strategies are more powerful than uniform memoryless strategies for 
almost-sure winning. 
For limit-sure winning, we show that infinite-precision memoryless strategies 
are more powerful than bounded-finite-precision infinite-memory strategies, and 
infinite-precision memoryless strategies are as powerful as infinite-precision
finite-memory strategies. 
Our results show that if infinite-memory is not available, then no memory is required 
(memoryless strategies are as powerful as finite-memory strategies). 
The result is obtained by using the $\mu$-calculus formula for the uniform 
memoryless case: we show that a $\mu$-calculus formula that combines the 
$\mu$-calculus formula for almost-sure winning for uniform memoryless strategies
and limit-sure winning for reachability with memoryless strategies 
exactly characterizes the limit-sure winning for parity objectives 
for memoryless strategies.
The fact that we show that in concurrent parity games, though infinite-memory strategies
are necessary, memoryless strategies are as powerful as finite-memory strategies, is in contrast
with many other examples of graph games which require infinite-memory.
For example, in multi-dimensional games (such as multi-dimensional mean-payoff games) infinite-memory
strategies are necessary and finite-memory strategies are strictly more powerful than memoryless
strategies~\cite{CDHR10}.

\item As a consequence of the characterization of the winning sets as $\mu$-calculus
formulas we obtain symbolic algorithms to compute the winning sets. 
We show that the winning sets can be computed in $\bigo(n^{2d+3})$ time, 
where $n$ is the size of the game structure and $2d$ is the number of 
priorities (or colors), and our algorithms are symbolic.

\item 
The membership problem of whether a state belongs to a winning set
can be decided in NP $\cap$ coNP.

\end{enumerate}
In short, our results show that if infinite-memory is not available, then 
memory is useless, and if infinite-precision is not available, then uniform 
memoryless strategies are sufficient. 
Let $P,U,b\FP,\FP,\IP$ denote pure, uniform, bounded-finite-precision with bound $b$, 
finite-precision, and infinite-precision strategies, respectively, and $M,\FM,\IM$ denote memoryless, finite-memory, and 
infinite-memory strategies, respectively. 
For $A \in \set{P,U,b\FP,\FP,\IP}$ and $B \in \set{M,\FM,\IM}$, let $\almost_1(A,B,\Phi)$ 
denote the almost-sure winning set under player~1 strategies that are restricted to 
be both $A$ and $B$ for a parity objective $\Phi$ (and similar notation for $\limit_1(A,B,\Phi)$).
Then our results can be summarized by the following equalities and strict inclusion:
(first set of equalities and inequalities)
\[
\begin{array}{rcl}
\almost_1(P,M,\Phi) & = & \almost_1(P,\IM,\Phi) = \limit_1(P,\IM,\Phi) \\
& \subsetneq & \almost_1(U,M,\Phi) = \almost_1(\FP,\IM,\Phi)   \\
& = & \bigcup_{b>0}\limit_1(b\FP,\IM,\Phi) =\almost_1(\IP,\FM,\Phi) \subsetneq \almost_1(\IP,\IM,\Phi);
\end{array}
\]
and (the second set of equalities and inequalities)
\[
\begin{array}{rcl}
\bigcup_{b>0}\limit_1(b\FP,\IM,\Phi) & \subsetneq & \limit_1(\IP,M, \Phi) = \limit_1(\IP,\FM,\Phi) \\
& = & \limit_1(\FP,M,\Phi)=\limit_1(\FP,\IM,\Phi)\subsetneq \limit_1(\IP,\IM,\Phi).
\end{array}
\]

\noindent{\em Comparison with turn-based games and~\cite{CdAH11}.}
Our $\mu$-calculus formulas and the correctness proofs are non-trivial 
generalizations of both the result of~\cite{EJ91} for turn-based deterministic 
parity games and the result of~\cite{crg-tcs07} for concurrent reachability 
games.
Our algorithms, that are obtained by characterization of the winning sets 
as $\mu$-calculus formulas, are considerably more involved than those for 
turn-based games.
Our proof structure of using $\mu$-calculus formulas to characterize the winning sets,
though similar to~\cite{CdAH11}, has several new aspects. 
In contrast to the proof of~\cite{CdAH11} that constructs witness infinite-memory 
strategies for both players from the $\mu$-calculus formulas, our proof constructs memoryless 
witness strategies for player~1 from our new $\mu$-calculus formulas, and furthermore, we 
show that in the complement set of the $\mu$-calculus formulas for every finite-memory 
strategy for player~1 there is a witness memoryless spoiling strategy of the opponent.
Thus the witness strategy constructions are different from~\cite{CdAH11}.
Since our $\mu$-calculus formulas and the predecessor operators are different 
from~\cite{CdAH11} the proofs of the complementations of the $\mu$-calculus formulas are also 
different.
Moreover~\cite{CdAH11} only concerns limit-sure winning and not almost-sure winning.
Note that in~\cite{dAH00} both almost-sure and limit-sure winning was considered,
but as shown in~\cite{CdAH11} the predecessor operators suggested for limit-sure
winning (which was a nested stacked predecessor operator) in~\cite{dAH00} require 
modification for correctness proof, and similar modification is also required for 
almost-sure winning. Thus some results from~\cite{dAH00} related to almost-sure
winning require a careful proof (such as Example~\ref{examp-counter-almost}).

\smallskip\noindent{\em Techniques.} 
All results in the study of concurrent parity 
games~\cite{crg-tcs07,dAH00,dAM04,CdAH11} rely on $\mu$-calculus formulas.
One of the key difficulty in concurrent parity games is that the recursive 
characterization of turn-based games completely fail for concurrent games, and
thus leaves $\mu$-calculus as the only technique for analysis (see page~12 
for the explanation of why the recursive characterization fails). 
A $\mu$-calculus formula is a succinct way of describing a nested iterative 
algorithm, and thus provide a very general technique.
The key challenge and ingenuity is always to come up with the appropriate 
$\mu$-calculus formula with the right predecessor operators (i.e., the right 
algorithm), establish duality (or complementation of the formulas), and  
then construct from $\mu$-calculus formulas the witness strategies in concurrent
games (i.e., the correctness proof). 
Our results are also based on $\mu$-calculus formula characterization (i.e., 
nested iterative algorithms), however, the predecessor operators required 
and construction of witness strategies (which are the heart of the proofs) 
are quite different from the previous results.

\section{Definitions}

In this section we define game structures, strategies, objectives,  
winning modes and give other preliminary definitions; and the basic
definitions are exactly as in~\cite{CdAH11}.

\subsection{Game structures}\label{sec-gamestructures}
\noindent{\bf Probability distributions.}
For a finite set~$A$, a {\em probability distribution\/} on $A$ is a
function $\trans\!:A\mapsto[0,1]$ such that $\sum_{a \in A} \trans(a) = 1$.
We denote the set of probability distributions on $A$ by $\distr(A)$. 
Given a distribution $\trans \in \distr(A)$, we denote by $\supp(\trans) = 
\{x\in A \mid \trans(x) > 0\}$ the {\em support\/} of the distribution 
$\trans$.

\medskip\noindent{\bf Concurrent game structures.} 
A (two-player) {\em concurrent stochastic game structure\/} 
$\game = \langle S, \moves,\mov_1, \mov_2, \trans \rangle$ consists of the 
following components.

\begin{itemize}

\item A finite state space $S$.

\item A finite set $\moves$ of moves (or actions).

\item Two move assignments $\mov_1, \mov_2 \!: S\mapsto 2^\moves
	\setm \emptyset$.  For $i \in \{1,2\}$, assignment
	$\mov_i$ associates with each state $s \in S$ the nonempty
	set $\mov_i(s) \subs \moves$ of moves available to player $i$
	at state $s$.  For technical convenience, we assume that 
	$\mov_i(s) \inters \mov_j(t) = \emptyset$ unless $i=j$ and
	$s=t$, for all $i,j \in \{1,2\}$ and $s,t \in S$. 
	If this assumption is not met, then   
        the moves can be trivially renamed to satisfy the assumption.

\item A probabilistic transition function
	$\trans\!:S\times\moves\times\moves\mapsto \distr(S)$, which
	associates with every state $s \in S$ and moves $a_1 \in
	\mov_1(s)$ and $a_2 \in \mov_2(s)$ a probability
	distribution $\trans(s,a_1,a_2) \in \distr(S)$ for the
	successor state.
\end{itemize}

\medskip\noindent{\bf Plays.}
At every state $s\in S$, player~1 chooses a move $a_1\in\mov_1(s)$,
and simultaneously and independently
player~2 chooses a move $a_2\in\mov_2(s)$.  
The game then proceeds to the successor state $t$ with probability
$\trans(s,a_1,a_2)(t)$, for all $t \in S$. 
For all states $s \in S$ and moves $a_1 \in
\mov_1(s)$ and $a_2 \in \mov_2(s)$, we indicate by 
$\dest(s,a_1,a_2) = \supp(\trans(s,a_1,a_2))$ 
the set of possible successors of $s$ when moves $a_1$, $a_2$
are selected. 
A {\em path\/} or a {\em play\/} of $\game$ is an infinite sequence
$\pat = \langle s_0,s_1,s_2,\ldots\rangle$ of states in $S$ such that for all 
$k\ge 0$, there are moves $a^k_1 \in \mov_1(s_k)$ and $a^k_2 \in \mov_2(s_k)$
such that $s_{k+1} \in \dest(s_k,a^k_1,a^k_2)$.
We denote by $\pats$ the set of all paths. 
For a play $\pat = \seq{s_0, s_1, s_2,\ldots} \in \pats$,   
we define $\Inf(\pat) = 
\set{s \in S \mid \mbox{$s_k = s$ for infinitely many $k \geq 0$}}$
to be the set of states that occur infinitely often in~$\pat$.

\medskip\noindent{\bf Size of a game.} The \emph{size} of a concurrent 
game is the sum of the size of the state space and the number of the 
entries of the transition function. Formally the size of a game is 
$|S| + \sum_{s\in S,a \in \mov_1(s), b\in \mov_2(s)} |\dest(s,a,b)|$.

\smallskip\noindent{\bf Turn-based stochastic games and MDPs.} 
A game structure $\game$ is {\em turn-based stochastic\/} if at every
state at most one player can choose among multiple moves; that is, for
every state $s \in S$ there exists at most one $i \in \{1,2\}$ with
$|\mov_i(s)| > 1$. 
A game structure is a player-2 \emph{Markov decision process} if for all 
$s \in S$ we have $|\mov_1(s)|=1$, i.e., only player-2 has choice of 
actions in the game.

\smallskip\noindent{\bf Equivalent game structures.} 
Given two game structures $\game_1 = \langle S, \moves,\mov_1, \mov_2, \trans_1 \rangle$ and  
$\game_2 = \langle S, \moves,\mov_1, \mov_2, \trans_2 \rangle$ on the same 
state and action space, with a possibly different transition function, we say 
that $\game_1$ is equivalent to $\game_2$ (denoted $\game_1 \equiv \game_2$) if 
for all $s \in S$ and all $a_1 \in \mov_1(s)$ and $a_2 \in \mov_2(s)$ 
we have $\supp(\trans_1(s,a_1,a_2))= \supp(\trans_2(s,a_1,a_2))$.

\subsection{Strategies}\label{sec-strategy}
A {\em strategy\/} for a player is a recipe that describes how to 
extend a play.
Formally, a strategy for player $i\in\{1,2\}$ is a mapping 
$\stra_i\!:S^+\mapsto\distr(\moves)$ that associates with every nonempty 
finite sequence $x \in S^+$ of states, 
representing the past history of the game, 
a probability distribution $\stra_i(x)$ used to select
the next move. 
The strategy $\stra_i$ can prescribe only moves that are available to 
player~$i$;
that is, for all sequences $x\in S^*$ and states $s\in S$, we require that
$\supp(\stra_i(x\cdot s)) \subs \mov_i(s)$.  
We denote by $\bigstra_i$ the set of all strategies for player $i\in\{1,2\}$.

Given a state $s\in S$ and two strategies $\stra_1\in\bigstra_1$ and
$\stra_2\in\bigstra_2$, we define
$\outcome(s,\stra_1,\stra_2)\subseteq\pats$ to be the set of paths
that can be followed by the game, when the game starts {f}rom $s$ and
the players use the strategies $\stra_1$ and~$\stra_2$.  Formally,
$\seq{s_0, s_1, s_2, \ldots} \in \outcome(s,\stra_1,\stra_2)$ if $s_0=s$ and
if for all $k\ge 0$ there exist moves $a^k_1 \in \mov_1(s_k)$ and
$a^k_2 \in \mov_2(s_k)$ such that
\[
  \stra_1(s_0,\ldots,s_k)(a^k_1) > 0, 
  \eqspa 
  \stra_2(s_0,\ldots,s_k)(a^k_2) > 0, 
  \eqspa 
  s_{k+1} \in \dest(s_k,a^k_1,a^k_2). 
  \eqspa 
\]
Once the starting state $s$ and the strategies $\stra_1$ and $\stra_2$
for the two players have been chosen, 
the probabilities of events are uniquely defined~\cite{VardiP85}, where an {\em
event\/} $\cala\subseteq\pats$ is a measurable set of
paths\footnote{To be precise, we should define events as
measurable sets of paths {\em sharing the same initial state,} and
we should replace our events with families of events, indexed by their
initial state \cite{Kemeny}.  
However, our (slightly) improper definition leads to 
more concise notation.}. 
For an event $\cala\subseteq\pats$, we denote by
$\Prb_s^{\stra_1,\stra_2}(\cala)$ the probability that a path belongs to 
$\cala$ when the game starts {f}rom $s$ and the players use the strategies 
$\stra_1$ and~$\stra_2$.

\paragraph{Classification of strategies.}

We classify strategies according to their use of \emph{randomization} 
and \emph{memory}. 
We first present the classification according to randomization.
\begin{enumerate}

\item \emph{(Pure).} A strategy $\stra$ is {\em pure (deterministic)\/} if for 
all $x \in S^+$ there exists $a \in \moves$ such that $\stra(x)(a) = 1$.
Thus, deterministic strategies are equivalent to functions $S^+ \mapsto
\moves$.

\item \emph{(Uniform).} A strategy $\stra$ is {\em uniform\/} if for 
all $x \in S^+$ we have $\stra(x)$ is uniform over its support, i.e., 
for all $a \in \supp(\stra(x))$ we have 
$\stra(x)(a)= \frac{1}{ |\supp(\stra(x))|}$.

\item \emph{(Finite-precision).} For $b \in \Nats$, a strategy $\stra$ is 
{\em $b$-finite-precision\/} 
if for all $x \in S^+$ and all actions $a$ we have $\stra(x)(a)=\frac{i}{j}$, 
where $i,j \in \Nats$ and $0 \leq i \leq j \leq b$ and $j >0$, i.e.,
the probability of an action played by the strategy is a multiple of some 
$\frac{1}{\ell}$, with $\ell \in \Nats$ such that $\ell \leq b$.

\end{enumerate}
We denote by $\bigstra_i^P, \bigstra_i^U, \bigstra_i^{b\FP}, \bigstra_i^{\FP}=
\bigcup_{b>0} \bigstra_i^{b\FP}$ and $\bigstra_i^{\IP}$ the set of pure 
(deterministic), uniform, bounded-finite-precision with bound $b$, 
finite-precision, and infinite-precision (or general) strategies for 
player~$i$, respectively. 
Observe that we have the following strict inclusion: 
$\bigstra_i^P \subset \bigstra_i^U \subset \bigstra_i^{\FP} \subset \bigstra_i^{\IP}$.

\begin{enumerate}
\item \emph{(Finite-memory).} Strategies in general are \emph{history-dependent} and
can be represented as follows:
let $\mem$ be a set called \emph{memory} to remember the history of plays (the set
$\mem$ can be infinite in general).
A strategy with memory can be described as a pair of functions:
(a) a \emph{memory update} function $\stra_{u}: S \times \mem \mapsto \mem$,
that given the memory $\mem$ with the information about the history and
the current state updates the memory; and
(b) a \emph{next move} function $\stra_{n}: S\times\mem \mapsto \distr(\moves)$
that given the memory and the current state specifies the next move of
the player.
A strategy is \emph{finite-memory} if the memory $\mem$ is finite.

\item \emph{(Memoryless).} A \emph{memoryless} strategy is independent of the history of play and 
only depends on the current state.
Formally, for a memoryless strategy $\stra$ we have $\stra(x\cdot s) =
\stra (s)$ for all $s \in S$ and all $x \in S^*$.
Thus memoryless strategies are equivalent to functions 
$S \mapsto \distr(\moves)$.
\end{enumerate}
We denote by $\bigstra_i^M, \bigstra_i^{\FM}$ and 
$\bigstra_i^{\IM}$ the set of memoryless, 
finite-memory, and infinite-memory (or general) strategies 
for player~$i$, respectively. 
Observe that we have the following strict inclusion: 
$\bigstra_i^M \subset \bigstra_i^{\FM} \subset \bigstra_i^{\IM}$.

\subsection{Objectives}
We specify objectives for the players by providing 
the set of \emph{winning plays} $\Phi \subseteq \pats$ for each player.
In this paper we study only zero-sum games \cite{RagFil91,FilarVrieze97}, 
where the objectives of the two players are complementary.
A general class of objectives are the Borel objectives~\cite{Kechris}. 
A \emph{Borel objective} $\Phi \subseteq S^\omega$ is a Borel set in the 
Cantor topology on~$S^\omega$. 
In this paper we consider \emph{$\omega$-regular objectives}~\cite{Thom90},
which lie in the first $2\half$ levels of the Borel hierarchy
(i.e., in the intersection of $\Sigma_3$ and~$\Pi_3$).
We will consider the following $\omega$-regular objectives.

\begin{itemize}
\item
  \emph{Reachability and safety objectives.}
  Given a set $T\subseteq S$ of ``target'' states, the reachability
  objective requires that some state of $T$ be visited.
  The set of winning plays is thus 
  $\Reach(T) = 
  \set{\pat=\seq{s_0, s_1, s_2,\ldots} \in \pats \mid
    \exists k \ge 0.\ s_k \in T}$. 
  Given a set $F \subseteq S$, the safety objective requires that only 
  states of  $F$ be visited.
  Thus, the set of winning plays is 
  $\Safe(F) = \set{ \pat=\seq{s_0, s_1, s_2, \ldots} \in \Omega \mid
   \forall k \ge 0. \ s_k \in F}$.

\item \emph{B\"uchi and co-B\"uchi objectives.} 
  Given a set $B \subseteq S$ of ``B\"uchi'' states, 
  the B\"uchi objective requires that  $B$ is visited 
  infinitely often. 
  Formally, the set of winning plays is
  $\Buchi(B)= \set{ \pat \in \pats \mid \ \Inf(\pat) \cap
    B\neq \emptyset}$.
  Given $C \subseteq S$, the co-B\"uchi objective requires that all
  states visited infinitely often are in $C$. 
  Formally, the set  of winning plays is 
  $\coBuchi(C) = \set{ \pat \in \pats \mid \ \Inf(\pat) \subseteq C}$.

\item
  \emph{Parity objectives.}
  For $c,d \in \Nats$, we let $[c..d] = \set{c, c+1, \ldots, d}$. 
  Let $p : S \mapsto [0..d]$ be a function that assigns a \emph{priority}
  $p(s)$ to every state $s \in S$, where $d \in \Nats$.
  The \emph{Even parity objective} requires that the maximum 
  priority visited infinitely often is even. Formally, the set
  of winning plays is defined as
  $\ParityCond= 
  \set{\pat \in \pats \mid
  \max\big(p(\Inf(\pat))\big) \text{ is even }}$.
  The dual  \emph{Odd parity objective} is defined as
  $\coParityCond =
  \set{\pat \in \pats \mid
  \max\big(p(\Inf(\pat))\big) \text{ is odd }}$.
Note that for a priority function $p : S \mapsto \set{1,2}$, an even
parity objective $\ParityCond$ is equivalent to the B\"{u}chi
objective $\Buchi(p^{-1}(2))$, i.e., the B\"uchi set consists of the
states with priority~$2$.
Hence B\"uchi and co-B\"uchi objectives are simpler and special cases
of parity objectives.
\end{itemize}

Given a set $U \subs S$ we use usual LTL notations 
$\Box U, \Diamond U, \Box \Diamond U$ and $\Diamond \Box U$ to denote 
$\Safe(U), \Reach(U), \Buchi(U)$ and $\coBuchi(U)$, respectively.
Parity objectives are of special importance as they can express all
$\omega$-regular objectives, and hence all commonly used specifications 
in verification~\cite{Thom90}.

\subsection{Winning modes}

Given an objective $\Phi$,  
for all initial states $s \in S$, the set of paths $\Phi$
is measurable for all choices of the strategies of the player~\cite{VardiP85}. 
Given an initial state $s \in S$, an objective $\Phi$, and a class
$\bigstra_1^{\calc}$ of strategies we
consider the following {\em winning modes\/} for player~1:
\begin{description} 

\item[Almost.] 
We say that player~1 {\em wins almost surely\/} with the class 
$\bigstra_1^{\calc}$ if the player has a
strategy in $\bigstra_1^{\calc}$ to win with probability~1, or
$
  \exists \stra_1 \in \bigstra_1^{\calc} \qdot 
  \forall \stra_2 \in \bigstra_2 \qdot
  \Prb_s^{\stra_1,\stra_2}(\Phi) = 1
$.

\item[Limit.] 
We say that player~1 {\em wins limit surely\/} with the class 
$\bigstra_1^{\calc}$ if the player can ensure
to win with probability arbitrarily close to~1 with $\bigstra_1^{\calc}$, 
in other words, for all $\varepsilon>0$ there is a strategy for player~1 
in $\bigstra_1^{\calc}$ that ensures to win
with probability at least $1-\varepsilon$. Formally we have 
$
  \sup_{\stra_1 \in \bigstra_1^{\calc}}
    \inf_{\stra_2 \in \bigstra_2}
    \Prb_s^{\stra_1,\stra_2}(\Phi) = 1
$.

\end{description}
%
%
We abbreviate the winning modes by $\almost$ and $\limit$, respectively.
We call these winning modes the {\em qualitative\/} winning modes.
Given a game structure $G$,
for $C_1 \in \set{P,U,\FP,\IP}$ and $C_2\in \set{M,\FM,\IM}$ we denote 
by $\almost_1^G(C_1,C_2,\Phi)$ (resp. $\limit_1^G(C_1,C_2,\Phi)$) 
the set of almost-sure (resp. limit-sure) winning states for player~1 in $G$ 
when the strategy set for player~1 is restricted to 
$\bigstra_1^{C_1} \cap \bigstra_1^{C_2}$.
If the game structure $G$ is clear from the context we omit the superscript $G$.
Note that there is a subtle difference between the set 
$\bigcup_{b>0}\limit_1(b\FP,C_2,\Phi)$ that asks for a global bound on 
precision independent of $\varepsilon>0$ (i.e., a bound $b$ that is sufficient
for every $\ve>0$), whereas for the set 
$\limit_1(\bigcup_{b>0}b\FP,C_2,\Phi)=
\limit_1(\FP,C_2,\Phi)$ the bound on precision may depend on $\varepsilon>0$
(i.e., for every $\ve>0$ a bound $b$ on precision).

\subsection{Mu-calculus, complementation, and levels} 
\label{sec-levels}

Consider a mu-calculus expression $\Psi = \mu X \qdot \psi(X)$ over a
finite set $S$, where $\psi: 2^S \mapsto 2^S$ is monotonic.
The least fixpoint $\Psi = \mu X \qdot \psi(X)$ is equal
to the limit $\lim_{k \go \infty} X_k$, where $X_0 = \emptyset$,
and $X_{k+1} = \psi(X_k)$.  
For every state $s \in \Psi$, we define the {\em level\/} $k \geq 0$
of $s$ to be the integer such that $s \not\in X_k$ and $s \in X_{k+1}$.  
The greatest fixpoint $\Psi = \nu X \qdot \psi(X)$ is
equal to the limit $\lim_{k \go \infty} X_k$, where $X_0 = S$, and
$X_{k+1} = \psi(X_k)$.   
For every state $s \not\in \Psi$, we define the {\em level\/} $k \geq 0$ of
$s$ to be the integer such that $s \in X_k$ and $s \not\in X_{k+1}$.  
The {\em height\/} of a mu-calculus expression 
$\lambda X \qdot \psi(X)$, where $\lambda \in \set{\mu, \nu}$, 
is the least integer $h$ such that $X_h = \lim_{k \go \infty} X_k$.
An expression of height $h$ can be computed in $h+1$ iterations. 
Given a mu-calculus expression $\Psi=\lambda X \qdot \psi(X)$, where 
$\lambda \in \set{\mu, \nu}$, the complement $\no \Psi = S \setm \Psi$ of
$\lambda$ is given by 
$\overline{\lambda} X \qdot \no \psi (\no X)$, 
where $\overline{\lambda} = \mu$ if $\lambda = \nu$, and 
$\overline{\lambda} = \nu$ if $\lambda = \mu$.
For details of $\mu$-calculus see~\cite{Kozen83mu,EJ91}.

\smallskip\noindent{\bf Mu-calculus formulas and algorithms.}
As descrived above that $\mu$-calculus formulas $\Psi = \mu X \qdot \psi(X)$ 
(resp.  $\Psi = \nu X \qdot \psi(X)$) represent an iterative algorithm 
that successively iterates $\psi(X_k)$ till the least (resp. greatest)
fixpoint is reached. 
Thus in general, a $\mu$-calculus formulas with nested $\mu$ and $\nu$ 
operators represents a nested iterative algorithm.
Intuitively, a $\mu$-calculus formula is a succinct representation of
a nested iterative algorithm.

\medskip\noindent{\bf Distributions and one-step transitions.} 
Given a state $s \in S$, we denote by $\sd^s_1 = \distr(\mov_1(s))$
and $\sd^s_2 = \distr(\mov_2(s))$ the sets of probability
distributions over the moves at $s$ available to player 1 and~2, respectively. 
Moreover, for $s \in S$, $X \subs S$, $\dis_1 \in \sd^s_1$, and 
$\dis_2 \in \sd^s_2$ we denote by 
\[
  \pr_s^{\dis_1,\dis_2}(X) = 
	\sum_{a \in \mov_1(s)} \; 
	\sum_{b \in \mov_2(s)} \;
	\sum_{t \in X} \dis_1(a) \cdot \dis_2(b) \cdot \trans(s,a,b)(t) 
\]
the one-step probability of a transition into $X$ when players 1 and~2
play at $s$ with distributions $\dis_1$ and $\dis_2$, respectively. 
Given a state $s$ and distributions $\dis_1 \in \sd^s_1$ and 
$\dis_2 \in \sd^s_2$ we denote by 
$\dest(s,\dis_1,\dis_2)=\set{t \in S \mid \pr_2^{\dis_1,\dis_2}(t) >0}$ 
the set of states that have positive probability of transition from $s$ 
when the players play $\dis_1$ and $\dis_2$ at $s$.
For actions $a$ and $b$ we have $\dest(s,a,b) =
\set{t \in S \mid \trans(s,a,b)(t) >0}$ as the set of possible successors 
given $a$ and $b$.
For $A \subs \mov_1(s)$ and $B \subs \mov_2(s)$ we have 
$\dest(s,A,B)= \bigcup_{a\in A, b \in B} \dest(s,a,b)$.

\begin{theo}{}\label{theo-turn-based}
The following assertions hold:
\begin{enumerate}
\item \emph{\cite{CJH03}} For all turn-based stochastic game structures $G$ with a parity objective
$\Phi$ we have 
\[
\almost_1(P,M,\Phi) 
=\almost_1(\IP,\IM,\Phi) 
=\limit_1(P,M,\Phi)
=\limit_1(\IP,\IM,\Phi)
\] 
\item \emph{\cite{dAH00}} Let $G_1$ and $G_2$ be two equivalent game structures with a 
parity objective $\Phi$, then we have 
\[
1.\ \almost_1^{G_1}(\IP,\IM,\Phi) 
=\almost_1^{G_2}(\IP,\IM,\Phi); \quad
2.\ \limit_1^{G_1}(\IP,\IM,\Phi)
=\limit_1^{G_2}(\IP,\IM,\Phi)
\]
\end{enumerate}
\end{theo}

\section{Pure, Uniform and Finite-precision Strategies}
In this section we present our results for pure, uniform and 
finite-precision strategies. 
We start with the characterization for pure strategies.

\subsection{Pure strategies}
The following result shows that for pure strategies, memoryless strategies 
are as strong as infinite-memory strategies, and the almost-sure and
limit-sure sets coincide.
  
\begin{prop}{}\label{prop_pure}
Given a concurrent game structure $G$ and a parity objective $\Phi$ we have 
\[
\begin{array}{l}
\almost_1^G(P,M, \Phi) = \almost_1^G(P, \FM,\Phi) =\almost_1^G(P, \IM,\Phi) = \\\limit_1^G(P,M, \Phi) = \limit_1^G(P, \FM,\Phi) =\limit_1^G(P, \IM,\Phi). 
\end{array}
\]
\end{prop}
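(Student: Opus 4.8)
The plan is to reduce the pure-strategy analysis of $G$ to a turn-based stochastic game and then invoke Theorem~\ref{theo-turn-based}(1). First note that the six sets in the statement are ordered by trivial inclusions: enlarging the memory class only helps player~1, so $\almost_1^G(P,M,\Phi)\subseteq\almost_1^G(P,\FM,\Phi)\subseteq\almost_1^G(P,\IM,\Phi)$ and the same for $\limit$; and every almost-sure winning strategy is a fortiori limit-sure winning, so $\almost_1^G(P,X,\Phi)\subseteq\limit_1^G(P,X,\Phi)$ for each $X\in\set{M,\FM,\IM}$. Hence it is enough to establish the single reverse inclusion $\limit_1^G(P,\IM,\Phi)\subseteq\almost_1^G(P,M,\Phi)$, which forces the whole chain to collapse to equalities.

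To obtain it I would sequentialize one round of $G$ into a turn-based stochastic game $\wh G$: keep every $s\in S$ as a player-1 state, and for every $a\in\mov_1(s)$ add an auxiliary player-2 state $(s,a)$; at $s$ player~1 picks $a\in\mov_1(s)$ and the game moves deterministically to $(s,a)$, and at $(s,a)$ player~2 picks $b\in\mov_2(s)$ and the game moves according to $\trans(s,a,b)$. Setting $p((s,a))=p(s)$ and keeping the original priorities, the projection of any play of $\wh G$ to its $S$-states is a play of $G$ with the same set of infinitely-often-visited priorities, so the induced parity objective $\wh\Phi$ of $\wh G$ agrees with $\Phi$ under this projection.

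The heart of the reduction is a correspondence lemma: for a \emph{pure} player-1 strategy $\stra_1$ in $G$ and its natural copy $\wh\stra_1$ in $\wh G$ (commit at $s$ to the move prescribed by $\stra_1$), one has $\inf_{\stra_2\in\bigstra_2}\Prb_s^{\stra_1,\stra_2}(\Phi)=\inf_{\wh\stra_2}\wh\Prb_s^{\wh\stra_1,\wh\stra_2}(\wh\Phi)$. The inequality $\le$ is routine: any opponent strategy in $G$ is imitated in $\wh G$ by ignoring the committed move $a$ at $(s,a)$. The inequality $\ge$ is the main obstacle and is precisely where purity is used: because $\stra_1$ is deterministic, the move $a$ that player~2 observes at $(s,a)$ is a deterministic function of the history, so a history-dependent opponent in $G$ can reconstruct it and imitate any $\wh G$-opponent, producing identical outcome distributions. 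For a randomized player~1 this step fails---the move is not predictable and the informed turn-based opponent is genuinely stronger, which is exactly why randomization adds power, as matching pennies shows. This correspondence respects the memory class in both directions, in particular sending pure memoryless strategies to pure memoryless strategies.

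Finally I would assemble the inclusion. If $s\in\limit_1^G(P,\IM,\Phi)$, then transferring the pure $\ve$-witnesses through the lemma gives $s\in\limit_1^{\wh G}(\IP,\IM,\wh\Phi)$. Applying Theorem~\ref{theo-turn-based}(1) to the turn-based game $\wh G$ yields $\limit_1^{\wh G}(\IP,\IM,\wh\Phi)=\almost_1^{\wh G}(P,M,\wh\Phi)$, so player~1 has a pure memoryless almost-sure winning strategy $\wh\stra_1$ in $\wh G$ from $s$. Reading $\wh\stra_1$ back to a pure memoryless $\stra_1$ in $G$ and using the lemma once more gives $\inf_{\stra_2}\Prb_s^{\stra_1,\stra_2}(\Phi)=1$, i.e. $s\in\almost_1^G(P,M,\Phi)$, as required.
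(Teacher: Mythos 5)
Your proposal is correct and follows essentially the same route as the paper: the identical sequentialization of $G$ into a turn-based stochastic game $\wh G$ with auxiliary states $(s,a)$, followed by an appeal to Theorem~\ref{theo-turn-based}(1). The only difference is presentational — you prove a single collapsing inclusion $\limit_1^G(P,\IM,\Phi)\subseteq\almost_1^G(P,M,\Phi)$ and spell out why purity makes the strategy correspondence work (the opponent can reconstruct the deterministic move from the history), a detail the paper leaves as ``straightforward.''
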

\begin{proof}
The result is obtained as follows: we show that 
$\almost_1^G(P,M,\Phi)= \almost_1^G(P,\IM,\Phi)=\limit_1^G(P,\IM,\Phi)$ and all 
the other equalities follow (by inclusion of strategies). 
The main argument is as follows: given $G$ we construct a turn-based stochastic
game $\wh{G}$ where player~1 first choses an action, then player~2 chooses an 
action, and then the game proceeds as in $G$. 
Intuitively, we divide each step of the concurrent game into two steps, 
in the first step player~1 chooses an action, and then player~2 
responds with an action in the second step.
Then it is straightforward to establish that the almost-sure (resp. limit-sure)
winning set for pure and infinite-memory strategies in $G$ coincides with 
the almost-sure (resp. limit-sure) winning set for pure and infinite-memory 
strategies in $\wh{G}$. 
Since $\wh{G}$ is a turn-based stochastic game, by Theorem~\ref{theo-turn-based} (part~1), 
it follows that the almost-sure and limit-sure winning set in $\wh{G}$ coincide and 
they are same for memoryless and infinite-memory strategies.

We now present the formal reduction. 
Let $G= \langle S, \moves, \mov_1,\mov_2,\trans \rangle$ and let the 
parity objective $\Phi$ be described by a priority function $p$. 
We construct $\wh{G}= \langle \wh{S}, \wh{\moves}, \wh{\mov}_1, \wh{\mov}_2,
\wh{\trans} \rangle$ with priority function $\wh{p}$ as follows:
\begin{enumerate}
\item $\wh{S}= S \cup \set{(s,a) \mid s \in S, a \in \mov_1(s)}$; 
\item $\wh{\moves} = \moves \cup \set{\bot}$ where $\bot \not\in \moves$; 
\item for $s \in \wh{S} \cap S$ we have 
$\wh{\mov}_1(s)=\mov_1(s)$ and $\wh{\mov}_2(s) =\set{\bot}$; and 
for $(s,a) \in \wh{S}$ we have 
$\wh{\mov}_2((s,a))=\mov_2(s)$ and $\wh{\mov}_1((s,a)) =\set{\bot}$; 
and 
\item for $s \in \wh{S} \cap S$ and $a \in \mov_1(s)$ we have 
$\wh{\trans}(s,a,\bot)(s,a)=1$; 
and for $(s,a) \in \wh{S}$ and $b \in \mov_2(s)$ we have
$\wh{\trans}((s,a),\bot,b)=\trans(s,a,b)$;
\item the function $\wh{p}$ in $\wh{G}$ is as follows: 
for $s \in \wh{S} \cap S$ we have $\wh{p}(s)=p(s)$ and 
for $(s,a) \in \wh{S}$ we have $\wh{p}((s,a))=p(s)$.
\end{enumerate}
Observe that the reduction is linear (i.e., $\wh{G}$ is linear in the 
size of $G$).
It is straightforward to establish by mapping of pure strategies of 
player~1 in $G$ and $\wh{G}$ that 
\[
\begin{array}{lcl}
(a)\ \almost_1^G(P,M,\Phi) & = & \almost_1^{\wh{G}}(P,M,\wh{\Phi}) \cap S, \\ 
(b)\ \almost_1^G(P,\IM,\Phi) & = & \almost_1^{\wh{G}}(P,\IM,\wh{\Phi}) \cap S, \\
(c)\ \limit_1^G(P,M,\Phi) & = & \limit_1^{\wh{G}}(P,M,\wh{\Phi}) \cap S, \\
(d)\ \limit_1^G(P,\IM,\Phi) & = & \limit_1^{\wh{G}}(P,\IM,\wh{\Phi}) \cap S;
\end{array}
\]
where $\wh{\Phi}=\Parity(\wh{p})$.
It follows from Theorem~\ref{theo-turn-based} (part~1) that  
\[
\almost_1^{\wh{G}}(P,M,\wh{\Phi})= \almost_1^{\wh{G}}(P,\IM,\wh{\Phi})= 
\limit_1^{\wh{G}}(P,M,\wh{\Phi}) =\limit_1^{\wh{G}}(P,\IM,\wh{\Phi}).
\]
Hence the desired result follows.
\qed
\end{proof}

\noindent{\bf Algorithm and complexity.} The proof of the above proposition 
gives a linear reduction to turn-based stochastic games. 
Thus the set $\almost_1(P,M,\Phi)$ can be computed using the algorithms 
for turn-based stochastic parity games (such as~\cite{CJH03}).
We have the following results.

\begin{theo}{}
Given a concurrent game structure $G$, a parity objective $\Phi$, and a 
state $s$, whether $s \in \almost_1(P,\IM,\Phi)=\limit_1(P,\IM,\Phi)$ 
can be decided in NP $\cap$ coNP.
\end{theo}

\subsection{Uniform and Finite-precision}
In this subsection we will present the characterization for uniform and 
finite-precision strategies.

\begin{examp}{}\label{examp-pure-uniform}
It is easy to show that $\almost_1(P,M,\Phi) \subsetneq \almost_1(U,M,\Phi)$
by considering the \emph{matching penny} game.
The game has two states $s_0$ and $s_1$. 
The state $s_1$ is an \emph{absorbing} state (a state with only self-loop as 
outgoing edge; see state $s_1$ of Fig~\ref{figure:buchi-lim}) and the goal is 
to reach $s_1$ (equivalently infinitely often visit $s_1$). 
At $s_0$ the actions available for both players are $\set{a,b}$. 
If the actions match the next state is $s_1$, otherwise $s_0$. 
By playing $a$ and $b$ uniformly at random at $s_0$, the 
state $s_1$ is reached with probability~1,
whereas for any pure strategy the counter-strategy that plays exactly the
opposite action in every round ensures $s_1$ is never reached.
\qed
\end{examp}

We now show that uniform memoryless strategies are as powerful as 
finite-precision infinite-memory strategies and the almost-sure and 
limit-sure sets coincide for finite-precision strategies.
We start with two notations.

\smallskip\noindent{\bf Uniformization of a strategy.} 
Given a strategy $\stra_1$ for player~1, we define a strategy $\stra_1^u$ 
that is obtained from $\stra_1$ by uniformization as follows: 
for all $w \in S^+$ and all $a \in \supp(\stra_1(w))$ we have 
$\stra_1^u(w)(a) = \frac{1}{|\supp(\stra_1(w))|}$.
We will use the following notation for uniformization: 
$\stra_1^u=\unif(\stra_1)$.


\begin{prop}{}\label{prop-uniform-fp}
Given a concurrent game structure $G$ and a parity objective $\Phi$ we have 
\[
\begin{array}{l}
\almost_1^G(U,M, \Phi) = \almost_1^G(U, \FM,\Phi) =\almost_1^G(U, \IM,\Phi) = \\
\limit_1^G(U,M, \Phi) = \limit_1^G(U, \FM,\Phi) =\limit_1^G(U, \IM,\Phi) = \\
\almost_1^G(\FP,M, \Phi) = \almost_1^G(\FP, \FM,\Phi) =\almost_1^G(\FP, \IM,\Phi) = \\
\bigcup_{b>0}\limit_1^G(b\FP,M, \Phi) = \bigcup_{b>0}\limit_1^G(b\FP, \FM,\Phi) =\bigcup_{b>0}\limit_1^G(b\FP, \IM,\Phi) 
\end{array}
\]
\end{prop}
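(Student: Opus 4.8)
The plan is to first dispatch all the ``easy'' inclusions coming from containments of strategy classes, and then to concentrate on one genuinely hard inclusion. All twelve sets are ordered by the containments $\bigstra_1^U \subseteq \bigstra_1^{\FP} \subseteq \bigstra_1^{\IP}$ and $\bigstra_1^M \subseteq \bigstra_1^{\FM} \subseteq \bigstra_1^{\IM}$, by the trivial fact that almost-sure winning implies limit-sure winning, and by the observation that a uniform distribution on a support of size at most $|\moves|$ uses only probabilities that are multiples of $\frac{1}{|\moves|}$, so that $\bigstra_1^U \subseteq \bigstra_1^{b\FP}$ for $b=|\moves|$. Chasing these containments shows that $\almost_1^G(U,M,\Phi)$ is contained in each of the twelve sets, and that each of the twelve sets is contained in $\bigcup_{b>0}\limit_1^G(b\FP,\IM,\Phi)$ (for instance $\almost_1^G(\FP,\IM,\Phi)=\bigcup_b\almost_1^G(b\FP,\IM,\Phi)\subseteq\bigcup_b\limit_1^G(b\FP,\IM,\Phi)$, and $\limit_1^G(U,\IM,\Phi)\subseteq\bigcup_b\limit_1^G(b\FP,\IM,\Phi)$ via the $b=|\moves|$ remark). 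Hence it suffices to establish the one reverse inclusion
\[
\bigcup_{b>0}\limit_1^G(b\FP,\IM,\Phi) \subseteq \almost_1^G(U,M,\Phi),
\]
which I would prove in contrapositive form, writing $A:=\almost_1^G(U,M,\Phi)$ and showing $\no A\subseteq \no\big(\bigcup_b\limit_1^G(b\FP,\IM,\Phi)\big)$.

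\textbf{Characterizing $A$ by a turn-based reduction.} To get a handle on $A$ I would build a turn-based stochastic parity game $\wh G$, exactly as in the proof of Proposition~\ref{prop_pure} but recording \emph{supports} rather than single moves: from a state $s$ player~1 first commits to a nonempty $B\subseteq\mov_1(s)$ (the support of a uniform distribution), the game moves to an auxiliary player-2 state $(s,B)$, player~2 chooses $b\in\mov_2(s)$, and the successor is then drawn from $\frac{1}{|B|}\sum_{a\in B}\trans(s,a,b)$, with priorities inherited from $p$. The reduction is faithful even though $\wh G$ lets player~2 see $B$, because a \emph{uniform} strategy reveals only its support, which is precisely the information a worst-case player~2 already has against a committed uniform strategy; the sampled action stays hidden, being drawn by ``nature'' after player~2 moves. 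Under this correspondence, uniform player-1 strategies in $G$ match pure player-1 strategies in $\wh G$ with identical outcome distributions, for every memory class. Since $\wh G$ is turn-based stochastic, Theorem~\ref{theo-turn-based}(1) together with the pure memoryless determinacy of turn-based stochastic parity games~\cite{CJH03} gives $A=\almost_1^{\wh G}(P,M,\wh\Phi)\cap S=\limit_1^{\wh G}(P,M,\wh\Phi)\cap S$, and, on $\no A$, a \emph{pure memoryless} strategy $\hat\stra_2$ for player~2 in $\wh G$ that guarantees $\no\wh\Phi$ with probability at least $\eta>0$ (the player-2 value) against all player-1 strategies. The efficient, symbolic form of this characterization is the $\mu$-calculus formula for $A$ over $G$, whose complement — obtained by the $\mu/\nu$ dualization of Section~\ref{sec-levels} — describes $\no A$ and yields $\hat\stra_2$ without paying the exponential blow-up of $\wh G$.

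\textbf{The crux: a bounded-precision--robust spoiler on $\no A$.} Fix $s\in\no A$ and a bound $b$; I must produce $\gamma_b>0$ such that for \emph{every} $b$-finite-precision (possibly infinite-memory) strategy $\stra_1$ there is a player-2 response with $\Prb_s^{\stra_1,\stra_2}(\Phi)\le 1-\gamma_b$. Player~2 plays the lifted $\hat\stra_2$: for a committed $\stra_1$ the support played at each history is a deterministic function of the observed state-history, so player~2 (who may depend on $\stra_1$) can recompute it and apply $\hat\stra_2(s,B)$. Replacing the uniform ``nature'' draw of $\wh G$ by player~1's $b$-finite-precision draw does not change the support of any transition, so the induced chain has the same graph as the one analysed in $\wh G$, and qualitative facts such as ``$\no\Phi$ occurs with positive probability'' are preserved. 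The hard part will be to turn this positivity into a bound $\gamma_b$ that is \emph{uniform over all $\stra_1$, including infinite-memory ones} — support preservation alone only gives a $\stra_1$-dependent positive probability. This is exactly where bounded precision is indispensable: every move played with positive probability carries probability at least $\frac{1}{b}$, so each ``bad'' transition that $\hat\stra_2$ relies on has probability bounded below by a constant depending only on $b$ and $|\moves|$. The correct predecessor operators (which differ from those of~\cite{dAH00,CdAH11}) should let player~2 force, from any state of $\no A$ and within a number of rounds bounded by $|S|$ rather than by player~1's memory, a decisive step toward a recurrent set whose maximal priority is odd; since these forcing phases recur and each succeeds with probability at least some fixed $\beta(b)>0$, the probability of $\no\Phi$ is at least $\gamma_b>0$ independently of $\stra_1$.

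\textbf{Conclusion.} This yields $\no A\subseteq\no\big(\bigcup_b\limit_1^G(b\FP,\IM,\Phi)\big)$, i.e.\ $\bigcup_b\limit_1^G(b\FP,\IM,\Phi)\subseteq A$. Combined with the forward inclusions of the first paragraph, every one of the twelve sets is sandwiched between $A$ and $\bigcup_b\limit_1^G(b\FP,\IM,\Phi)$, which now coincide, so they are all equal. I expect the uniform-over-memory bound $\gamma_b$ to be the main obstacle: extracting a single $\gamma_b>0$ against infinite-memory bounded-precision play is precisely what forces the new predecessor operators and is the part most different from the infinite-memory witness constructions of~\cite{CdAH11}.
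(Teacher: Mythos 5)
Your reduction of the twelve-fold equality to the single inclusion $\bigcup_{b>0}\limit_1^G(b\FP,\IM,\Phi)\subseteq\almost_1^G(U,M,\Phi)$ is correct and matches the skeleton of the paper's argument. The problem is that the one inclusion you have left yourself is exactly the hard content of the proposition, and your plan for it has a genuine gap that you yourself flag but do not close. Your turn-based game $\wh G$, in which player~1 commits only to a \emph{support}, faithfully encodes \emph{uniform} strategies of $G$ as pure strategies of $\wh G$; it says nothing directly about $b$-finite-precision strategies. To bridge that, you invoke ``support preservation'': replacing the uniform draw by a $b$-finite-precision draw with the same support preserves qualitative facts. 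That is true for finite Markov chains and MDPs, but the adversary here is an arbitrary \emph{infinite-memory} $b$-finite-precision strategy, so the induced process is not a finite chain and support-equivalence does not by itself yield a spoiling bound $\gamma_b$ uniform over all such $\stra_1$. Your proposed fix --- that the ``correct predecessor operators'' should let player~2 force progress within $|S|$ rounds with probability $\beta(b)$ per phase --- is precisely the statement that needs proof, and nothing in the proposal supplies it.

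The paper sidesteps this entirely by changing what player~1 commits to in the turn-based game: in $\wt G$ the moves of player~1 at $s$ are the (finitely many, for fixed $b$) \emph{$b$-finite-precision distributions} over $\mov_1(s)$, not their supports. Under this encoding every $b$-finite-precision infinite-memory strategy of $G$ becomes a \emph{pure} infinite-memory strategy of the finite turn-based stochastic game $\wt G$, so Theorem~\ref{theo-turn-based}(1) applies off the shelf and collapses limit-sure to almost-sure and infinite-memory to pure memoryless --- this is where the uniformity over $\ve$ and over memory is obtained, with no new spoiler construction needed. Mapping back gives a $b$-finite-precision \emph{memoryless} almost-sure winning strategy $\stra_1$ in $G$, and only at that point is the support argument used: $G_{\stra_1}$ and $G_{\unif(\stra_1)}$ are equivalent finite MDPs, so uniformization preserves almost-sure winning. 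If you want to keep your contrapositive/spoiler formulation, you would still need the $\wt G$-style reduction (or an equivalent argument) to produce the uniform bound against infinite-memory bounded-precision play; with it, the spoiler detour becomes unnecessary.
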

\begin{proof}
The result is obtained as follows: we show that 
$\almost_1^G(U,M,\Phi)= \almost_1^G(\FP,\IM,\Phi)=\limit_1^G(\FP,\IM,\Phi)$ 
and all the other equalities follow (by inclusion of strategies). 
The key argument is as follows: fix a bound $b$, and we consider the set of 
$b$-finite-precision strategies in $G$. 
Given $G$ we construct a turn-based stochastic game $\wt{G}$ where player~1 first 
chooses a $b$-finite-precision distribution, then player~2 chooses an 
action, and then the game proceeds as in $G$. 
Intuitively, we divide each step of the concurrent game into two steps, 
in the first step player~1 chooses a $b$-finite precision distribution, 
and then in the second step player~2 responds with an action.
Then we establish that the almost-sure (resp. limit-sure) winning set 
for $b$-finite-precision and infinite-memory strategies in $G$ coincides with 
the almost-sure (resp. limit-sure) winning set for $b$-finite-precision  
and infinite-memory strategies in $\wt{G}$. 
Since $\wt{G}$ is a turn-based stochastic game, by Theorem~\ref{theo-turn-based}, it follows
that the almost-sure and limit-sure winning set in $\wt{G}$ coincide and 
they are same for memoryless and infinite-memory strategies. 
Thus we obtain a $b$-finite-precision memoryless almost-sure winning strategy
$\stra_1$ in $G$ and then we show the uniform memoryless 
$\stra_1^u=\unif(\stra_1)$ obtained from uniformization of $\stra_1^u$ is 
a uniform memoryless almost-sure winning strategy in $G$. 
Thus it follows that for any finite-precision infinite-memory almost-sure 
winning strategy, there is a uniform memoryless almost-sure winning strategy.

We now present the formal reduction. 
Let $G= \langle S, \moves, \mov_1,\mov_2,\trans \rangle$ and let the 
parity objective $\Phi$ be described by a priority function $p$. 
For a given bound $b$, 
let $\wt{f}(s,b)= \set{f: \mov_1(s) \mapsto [0,1] \mid \forall a \in \mov_1(s)
\text{ we have } f(a) = \frac{i}{j}, i,j \in \Nats, 0 \leq i\leq j\leq b, j>0
\text{ and } \sum_{a \in \mov_1(s)} f(a)=1 
}$ 
denote the set of $b$-finite-precision distributions at $s$.
We construct $\wt{G}= \langle \wt{S}, \wt{\moves}, \wt{\mov}_1, \wt{\mov}_2,
\wt{\trans} \rangle$ with priority function $\wt{p}$ as follows:
\begin{enumerate}
\item $\wt{S}= S \cup \set{(s,f) \mid s \in S, f \in \wt{f}(s,b)}$; 
\item $\wt{\moves} = \moves \cup \set{f \mid s\in S, f \in \wt{f}(s,b)} \cup \set{\bot}$ where $\bot \not\in \moves$; 
\item for $s \in \wt{S} \cap S$ we have 
$\wt{\mov}_1(s)=\wt{f}(s,b)$ and $\wt{\mov}_2(s) =\set{\bot}$; and 
for $(s,f) \in \wt{S}$ we have 
$\wt{\mov}_2((s,f))=\mov_2(s)$ and $\wt{\mov}_1((s,f)) =\set{\bot}$; 
and 
\item for $s \in \wt{S} \cap S$ and $f \in \wt{f}(s,b)$ we have 
$\wt{\trans}(s,f,\bot)(s,f)=1$; 
and for $(s,f) \in \wt{S}$, $b \in \mov_2(s)$ and $t \in S$ we have
$\wh{\trans}((s,f),\bot,b)(t)=\sum_{a \in \mov_1(s)} f(a) \cdot 
\trans(s,a,b)(t)$;
\item the function $\wt{p}$ in $\wt{G}$ is as follows: 
for $s \in \wt{S} \cap S$ we have $\wt{p}(s)=p(s)$ and 
for $(s,f) \in \wt{S}$ we have $\wt{p}((s,f))=p(s)$.
\end{enumerate}
Observe that given $b\in \Nats$ the set $\wt{f}(s,b)$ is finite and 
thus $\wt{G}$ is a finite-state turn-based stochastic game.
It is straightforward to establish mapping of $b$-finite-precision
strategies of player~1 in $G$ and with pure strategies in $\wh{G}$, i.e.,
we have 
\[
\begin{array}{lcl}
(a)\ \almost_1^G(b\FP,M,\Phi) & = & \almost_1^{\wt{G}}(P,M,\wt{\Phi}) \cap S, \\ 
(b)\ \almost_1^G(b\FP,\IM,\Phi) & = & \almost_1^{\wt{G}}(P,\IM,\wt{\Phi}) \cap S, \\
(c)\ \limit_1^G(b\FP,M,\Phi) & = & \limit_1^{\wt{G}}(P,M,\wt{\Phi}) \cap S, \\
(d)\ \limit_1^G(b\FP,\IM,\Phi) & = & \limit_1^{\wt{G}}(P,\IM,\wt{\Phi}) \cap S,
\end{array}
\]
where $\wt{\Phi}=\Parity(\wt{p})$ and $b\FP$ denote the set of $b$-finite-precision strategies in $G$.
By Theorem~\ref{theo-turn-based} we have 
\[
\almost_1^{\wt{G}}(P,M,\wt{\Phi})= \almost_1^{\wt{G}}(P,\IM,\wt{\Phi})= 
\limit_1^{\wt{G}}(P,M,\wt{\Phi}) =\limit_1^{\wt{G}}(P,\IM,\wt{\Phi}).
\]
Consider a pure memoryless strategy $\wt{\stra}_1$ in $\wt{G}$ that 
is almost-sure winning from $Q=\almost_1^{\wt{G}}(P,M,\wt{\Phi})$, and let 
$\stra_1$ be the corresponding $b$-finite-precision memoryless strategy 
in $G$.
Consider the uniform memoryless strategy $\stra_1^u=\unif(\stra_1)$ in 
$G$. 
The strategy $\stra_1$ is an almost-sure winning strategy from $Q \cap S$.
The player-2 MDP $G_{\stra_1}$ and $G_{\stra_1^u}$ are equivalent, i.e., 
$G_{\stra_1} \equiv G_{\stra_1^u}$ and hence it follows from Theorem~\ref{theo-turn-based} 
that $\stra_1^u$ is an almost-sure winning strategy for all states in $Q \cap S$.
Hence the desired result follows.
\qed
\end{proof}

\smallskip\noindent{\bf Computation of $\almost_1(U,M,\Phi)$.} 
It follows from Proposition~\ref{prop-uniform-fp} that the computation of 
$\almost_1(U,M,\Phi)$ can be achieved 
by a reduction to turn-based stochastic game. 
We now present the main technical result of this subsection which presents a 
symbolic algorithm to compute $\almost_1(U,M,\Phi)$.
The symbolic algorithm developed in this section is crucial for analysis of 
infinite-precision finite-memory strategies, where the reduction to turn-based 
stochastic game cannot be applied.
The symbolic algorithm is obtained via $\mu$-calculus formula characterization.
We first discuss the comparison of our proof with the results of~\cite{CdAH11} 
and then discuss why the recursive characterization of turn-based games fails
in concurrent games.

\smallskip\noindent{\bf Comparison with~\cite{CdAH11}.} Our proof structure based
on induction on the structure of $\mu$-calculus formulas is similar to the 
proofs in~\cite{CdAH11}. In some aspects the proofs are tedious adaptation but
in most cases there are many subtle issues and we point them below. 
First, in our proof the predecessor operators are different from the 
predecessor operators of~\cite{CdAH11}.
Second, in our proof from the $\mu$-calculus formulas we construct uniform 
memoryless strategies as compared to infinite memory strategies in~\cite{CdAH11}.
Finally, since our predecessor operators are different the proof for 
complementation of the predecessor operators (which is a crucial component 
of the proof) is completely different.

\smallskip\noindent{\bf Failue of recursive characterization.} 
In case of turn-based games there are recursive characterization of the 
winning set with attractors (or alternating reachability).
However such characterization fails in case of concurrent games.
The intuitive reason is as follows: once an attractor is taken it 
may rule out certain action pairs (for example, action pair $a_1$ and
$b_1$ must be ruled out, whereas action pair $a_1$ and $b_2$ may be 
allowed in the remaining game graph), and hence the complement of an 
attractor maynot satisfy the required sub-game property.
We now elaborate the above discussion. 
The failure of attractor based characterization is probably best explained 
for limit-sure coB\"uchi games~\cite{dAH00}. 
In turn-based games the coB\"uchi algorithm is as follows: 
(i)~compute safety winning region; (ii)~compute attractor to the safety winning
region and then obtain a sub-game and recurse. 
In concurrent games the first step is to compute safety winning region $X_0$.
In the next iteration what needs to be computed is the set where player~1 can
ensure either safety or limit-sure reachability to $X_0$.
However, player~1 may fail to ensure limit-sure reachability to $X_0$ because 
player~2 may then play to violate limit-sure reachability but safety is ensured; 
and player~1 may fail to ensure safety because player~2 may play so that 
safety is violated but limit-sure reachability succeeds. 
Intuitively player~1 will play distributions to ensure that no matter what 
player~2 plays either safety or limit-sure reachability is ensured, but player~1 
cannot control which one; the distribution of player~1 takes care of certain
actions of player~2 due to safety and other actions due to limit-sure reachability.
This intuitively means the limit-sure reachability and safety cannot be decoupled
and they are present in the level of the predecessor operators.
Also limit reachability does not depend only on the target, but also an outer 
fix point (which also makes recursive characterization only based on the target
set difficult).
For limit-sure reachability an outer greatest fix point is added, and nested
with two inner fix points for safety or limit reachability.
Informally, limit-sure reachability rules out certain pair of actions, but 
that cannot be used to obtain a subgame as it is inside a greatest fix point,
and moreover ruling out pair of actions does not give a subgame
(see Section~4.2 of ~\cite{dAH00} for further details).
While this very informally describes the issues for coB\"uchi games,
in general the situation is more complicated, because for more priorities 
the solution is limit-sure for parity with one less priority or limit-sure 
reachability, and for general limit-sure winning with parity objectives 
infinie-memory is required as compared to safety where memoryless strategies 
are sufficient; and all the complications need to be handled at the level of 
predecessor operators.
Thus for qualitative analysis of concurrent games the only formal way (known so far) 
to express the winning sets is by $\mu$-calculus. 
For details, see examples in~\cite{dAH00,crg-tcs07} why the recursive characterization fails. 
In concurrent games while the winning sets are described as $\mu$-calculus formulas,
the challenge always is to
(i)~come up with the right $\mu$-calculus formula with the 
appropriate predecessor operators (i.e., the right algorithm), 
(ii)~construct witness winning strategies from the 
$\mu$-calculus formulas (i.e., the correctness proof), and 
(iii)~show the complementation of the $\mu$-calculus formulas 
(i.e., the correctness for the opponent).

\smallskip\noindent{\bf Strategy constructions.} Since the recursive 
characterization of turn-based games fails for concurrent games, our 
results show that the generalization of the $\mu$-calculus formulas 
for turn-based games can characterize the desired winning sets. 
Moreover, our correctness proofs that establish the correctness of
the $\mu$-calculus formulas present explicit witness strategies 
from the $\mu$-calculus formulas.
Morover, in all cases the witness counter strategies for player~2 is 
memoryless, and thus our results answer questions related to 
bounded rationality for both players.

We now introduce the predecessor operators for the $\mu$-calculus formula
required for our symbolic algorithms.

\smallskip\noindent{\bf Basic predecessor operators.} 
We recall the {\em predecessor\/} operators $\pre_1$ (pre) and $\apre_1$ (almost-pre), defined for all $s \in S$ and $X,Y \subs S$ by: 
\[
\begin{array}{rcl}
\pre_1(X) &= & \set{s \in S \mid  
	\exists \dis_1 \in \sd^s_1 \qdot 
	\forall \dis_2 \in \sd^s_2 \qdot 
	\pr_s^{\dis_1,\dis_2} (X) = 1}; \\
\apre_1(Y,X)  &= & \set{s \in S \mid 
	\exists \dis_1 \in \sd^s_1 \qdot 
	\forall \dis_2 \in \sd^s_1 \qdot
	\pr_s^{\dis_1,\dis_2} (Y)=1 \land 
	\pr_s^{\dis_1,\dis_2} (X) > 0 } 
  \eqpun . 
\end{array}
\]
Intuitively, the $\pre_1(X)$ is the set of states such that player~1 can ensure 
that the next state is in $X$ with probability~1, and 
$\apre_1(Y,X)$ is the set of states such that player~1 can ensure that the 
next state is in $Y$ with probability~1 and in $X$ with positive probability.

\smallskip\noindent{\bf Principle of general predecessor operators.} 
While the operators $\apre$ and $\pre$ suffice for solving B\"uchi
games, for solving general parity games, we require predecessor operators 
that are best understood as the combination of the basic predecessor operators. 
We use the operators $\sdcup$ and $\sdcap$ to combine predecessor 
operators; the operators $\sdcup$ and $\sdcap$ are different from 
the usual union $\cup$ and intersection $\cap$.
Roughly, let $\alpha$ and $\beta$ be two set of states for 
two predecessor operators, 
then the set $\alpha \sdcap \beta$ requires that the
distributions of player~1 satisfy the conjunction of the
conditions stipulated by $\alpha$ and $\beta$; similarly, 
$\sdcup$ corresponds to disjunction. 
We first introduce the operator $\apre \sdcup \pre$.
For all $s \in S$ and $X_1,Y_0, Y_1 \subs S$, we define 
\begin{eqnarray} 
  \apre_1(Y_1,X_1)  \sdcup   \pre_1(Y_0) = 
	\setb{s \in S  \mid  
	\exists \dis_1 \in \sd^s_1.
	\forall \dis_2 \in \sd^s_2. 
	\left[\begin{array}{c}
	  (\pr_s^{\dis_1,\dis_2}(X_1) > 0 \land \pr_s^{\dis_1,\dis_2}(Y_1)=1) \\
	  \bigvee \\
	  \pr_s^{\dis_1,\dis_2}(Y_0) =1
	\end{array} \right]
	} 
  \eqpun.
\nonumber
\end{eqnarray}
Note that the above formula corresponds to a disjunction of the
predicates for $\apre_1$ and $\pre_1$.
However, it is important to note that the distributions $\dis_1$ for 
player~1 to satisfy  ($\dis_2$ for player~2 to falsify) the predicate must 
be {\em the same.} 
In other words, $\apre_1(Y_1,X_1)  \sdcup   \pre_1(Y_0)$ is 
{\em not\/} equivalent to $\apre_1(Y_1,X_1)  \cup   \pre_1(Y_0)$.

\smallskip\noindent{\bf General predecessor operators.}
We first introduce two predecessor operators as follows:
\begin{eqnarray*}
\lefteqn{\lpreodd_1(i,Y_n,X_n,\ldots, Y_{n-i}, X_{n-i})} \\[1ex]
& = & \apre_1(Y_n,X_n) \sdcup \apre_1(Y_{n-1},X_{n-1}) \sdcup \cdots \sdcup
\apre_1(Y_{n-i},X_{n-i}); \\[2ex] 
\lefteqn{\lpreeven_1(i,Y_n,X_n,\ldots, Y_{n-i}, X_{n-i},Y_{n-i-1})} \\[1ex] 
& = & \apre_1(Y_n,X_n) \sdcup \apre_1(Y_{n-1},X_{n-1}) \sdcup \cdots \sdcup
\apre_1(Y_{n-i},X_{n-i}) \sdcup \pre_1(Y_{n-i-1}). 
\end{eqnarray*}
The formal expanded definitions of the above operators are as follows:
\[
\begin{array}{l}
\lpreodd_1(i,Y_n,X_n,\ldots, Y_{n-i}, X_{n-i}) =  \\[1ex] 
\setb{s \in S  \mid  
\exists \dis_1 \in \sd^s_1.
\forall \dis_2 \in \sd^s_2. 
\left[\begin{array}{c}
  (\pr_s^{\dis_1,\dis_2}(X_n) > 0 \land \pr_s^{\dis_1,\dis_2}(Y_n)=1) \\
  \bigvee \\
  (\pr_s^{\dis_1,\dis_2}(X_{n-1}) > 0 \land \pr_s^{\dis_1,\dis_2}(Y_{n-1}) =1) \\
  \bigvee \\
  \vdots \\
  \bigvee \\
  (\pr_s^{\dis_1,\dis_2}(X_{n-i}) > 0 \land  \pr_s^{\dis_1,\dis_2}(Y_{n-i}) =1) 
\end{array} \right]
} \eqpun.
\end{array}
\]

\[
\begin{array}{l}
\lpreeven_1(i,Y_n,X_n,\ldots, Y_{n-i}, X_{n-i},Y_{n-i-1}) =  \\[1ex] 
\setb{s \in S  \mid  
\exists \dis_1 \in \sd^s_1.
\forall \dis_2 \in \sd^s_2. 
\left[\begin{array}{c}
  (\pr_s^{\dis_1,\dis_2}(X_n) > 0 \land \pr_s^{\dis_1,\dis_2}(Y_n)=1) \\
  \bigvee \\
  (\pr_s^{\dis_1,\dis_2}(X_{n-1}) > 0 \land \pr_s^{\dis_1,\dis_2}(Y_{n-1}) =1) \\
  \bigvee \\
  \vdots \\
  \bigvee \\
  (\pr_s^{\dis_1,\dis_2}(X_{n-i}) > 0 \land  \pr_s^{\dis_1,\dis_2}(Y_{n-i}) =1) \\
  \bigvee \\
 (\pr_s^{\dis_1,\dis_2}(Y_{n-i-1})=1) 
\end{array} \right]
} \eqpun.
\end{array}
\]
Observe that the above definition can be inductively written as follows:
\begin{enumerate}
\item We have $\lpreodd_1(0,Y_n,X_n)=  \apre_1(Y_n,X_n)$ and for $i\geq 1$ we have 
\begin{eqnarray*}
\lefteqn{\lpreodd_1(i,Y_n,X_n,\ldots,Y_{n-i},X_{n-i})} \\[1ex]
 & = & \apre_1(Y_n,X_n)  \sdcup  \lpreodd_1(i-1,Y_{n-1},X_{n-1},\ldots,Y_{n-i},X_{n-i})  
\end{eqnarray*}

\item We have $\lpreeven_1(0,Y_n,X_n,Y_{n-1}) =  \apre_1(Y_n,X_n) \sdcup \pre_1(Y_{n-1})$ and 
for $i \geq 1$ we have 
\begin{eqnarray*}
\lefteqn{\lpreeven_1(i,Y_n,X_n,\ldots,Y_{n-i},X_{n-i},Y_{n-i-1})} \\[1ex]
 & = & \apre_1(Y_n,X_n)  \sdcup \lpreeven_1(i-1,Y_{n-1},X_{n-1},\ldots,Y_{n-i},X_{n-i},Y_{n-i-1})  
\end{eqnarray*}
\end{enumerate}

\smallskip\noindent{\bf Dual operators.} 
The {\em predecessor\/} operators $\epre_2$ (positive-pre) and 
$\apre_2$ (almost-pre), defined for all $s \in S$ and $X,Y \subs S$ by: 
\[
\begin{array}{rcl}
\epre_2(X) &= & \set{s \in S \mid  
	\forall \dis_1 \in \sd^s_1 \qdot 
	\exists \dis_2 \in \sd^s_2 \qdot 
	\pr_s^{\dis_1,\dis_2} (X) > 0}; \\
\apre_2(Y,X)  &= & \set{s \in S \mid 
	\forall \dis_1 \in \sd^s_1 \qdot 
	\exists \dis_2 \in \sd^s_1 \qdot
	\pr_s^{\dis_1,\dis_2} (Y)=1 \land 
	\pr_s^{\dis_1,\dis_2} (X) > 0 } 
  \eqpun . 
\end{array}
\]
Observe that player~2 is only required to play counter-distributions $\dis_2$ 
against player~1 distributions $\dis_1$.
We now introduce two positive predecessor operators as follows:
\begin{eqnarray*}
\lefteqn{\fpreodd_2(i,Y_n,X_n,\ldots,Y_{n-i},X_{n-i})} \\[1ex]
& = & \epre_2(Y_n) \sdcup
\apre_2(X_{n},Y_{n-1}) \sdcup \cdots \sdcup \apre_2(X_{n-i+1},Y_{n-i}) \sdcup \pre_2(X_{n-i}) \\[2ex]
\lefteqn{\fpreeven_2(i,Y_n,X_n,\ldots,Y_{n-i},X_{n-i},Y_{n-i-1})} \\[1ex]
& = & \epre_2(Y_n) \sdcup
\apre_2(X_{n},Y_{n-1}) \\[1ex]
& & \qquad \sdcup \cdots \sdcup \apre_2(X_{n-i+1},Y_{n-i}) \sdcup \apre_2(X_{n-i},Y_{n-i-1}) 
\end{eqnarray*}
The formal expanded definitions of the above operators are as follows:
\[
\begin{array}{l}
\fpreodd_2(i,Y_n,X_n,\ldots, Y_{n-i},X_{n-i}) = \\[1ex]
 \setb{s \in S \mid   
  \forall \dis_1  \in \sd^s_1. 
  \exists \dis_2  \in \sd^s_2. 
  \left[ \begin{array}{c}
	(\pr_s^{\dis_1,\dis_2}(Y_n)>0) \\
	\bigvee \\
	(\pr_s^{\dis_1,\dis_2}(Y_{n-1})  > 0 \land \pr_s^{\dis_1,\dis_2}(X_{n}) =1) \\
        \bigvee \\
	(\pr_s^{\dis_1,\dis_2}(Y_{n-2})  > 0 \land \pr_s^{\dis_1,\dis_2}(X_{n-1}) =1) \\
        \bigvee \\
        \vdots \\
        \bigvee \\
	(\pr_s^{\dis_1,\dis_2}(Y_{n-i})  > 0 \land \pr_s^{\dis_1,\dis_2}(X_{n-i+1}) =1) \\
        \bigvee \\
	(\pr_s^{\dis_1,\dis_2}(X_{n-i})=1) 
  \end{array} \right]
 } \eqpun .
\end{array}
\]
\[
\begin{array}{l}
\fpreeven_2(i,Y_n,X_n,\ldots, Y_{n-i},X_{n-i},Y_{n-i-1}) = \\[1ex]
\setb{s \in S \mid   
\forall \dis_1  \in \sd^s_1. 
\exists \dis_2  \in \sd^s_2. 
  \left[ \begin{array}{c}
	(\pr_s^{\dis_1,\dis_2}(Y_n)>0 ) \\
	\bigvee \\
	(\pr_s^{\dis_1,\dis_2}(Y_{n-1})  > 0 \land \pr_s^{\dis_1,\dis_2}(X_{n}) =1) \\
        \bigvee \\
	(\pr_s^{\dis_1,\dis_2}(Y_{n-2})  > 0 \land \pr_s^{\dis_1,\dis_2}(X_{n-1}) =1) \\
        \bigvee \\
        \vdots \\
        \bigvee \\
	(\pr_s^{\dis_1,\dis_2}(Y_{n-i-1})  >0 \land \pr_s^{\dis_1,\dis_2}(X_{n-i}) =1) 
  \end{array} \right]
 } \eqpun .
\end{array}
\]
The above definitions can be alternatively written as follows
\[
\begin{array}{rcl}
\fpreodd_2(i,Y_n,X_n,\ldots,Y_{n-i},X_{n-i}) & = & \\[1ex]
 \epre_2(Y_n)  &\sdcup & \lpreeven_2(i-1,X_{n},Y_{n-1},\ldots,X_{n-i+1},Y_{n-i},X_{n-i}); 
\end{array}
\]
\[
\begin{array}{rcl}
\fpreeven_2(i,Y_n,X_n,\ldots,Y_{n-i},X_{n-i},Y_{n-i-1}) & = & \\[1ex]
 \epre_2(Y_n)  &\sdcup & \lpreodd_2(i,X_{n},Y_{n-1},\ldots,X_{n-i},Y_{n-i-1}).
\end{array}
\]

\begin{remark}
Observe that if the predicate $\epre_2(Y_n)$ is removed from the 
predecessor operator $\fpreodd_2(i,Y_n,X_n,\ldots,Y_{n-i},X_{n-i})$ 
(resp. $\fpreeven_2(i,Y_n,X_n,\ldots,Y_{n-i},X_{n-i},Y_{n-i-1})$), 
then we obtain the operator  
$\lpreeven_2(i-1,X_{n},Y_{n-1},\ldots,X_{n-i+1},Y_{n-i},X_{n-i})$ 
(resp.  $\lpreodd_2(i,X_{n},Y_{n-1},\ldots,X_{n-i},Y_{n-i-1})$).
\end{remark}

We first show how to characterize the set of almost-sure winning states for 
uniform memoryless strategies  and its complement for parity games 
using the above predecessor operators.
We will prove the following result by induction.

\begin{enumerate}

\item \emph{Case~1.} For a parity function $p:S \mapsto [0..2n-1]$ the following
assertions hold.

\noindent (a)~For all $T \subs S$ we have 
$W \subs \almost_1(U,M, \ParityCond \cup \diam T)$, where $W$ is defined 
as follows:
\beq 
\nonumber
\nu Y_n.  \mu X_n. \nu Y_{n-1}. \mu X_{n-1}. \cdots \nu Y_1. \mu X_1. \nu Y_0. 
\left[
\begin{array}{c}
T \\
\cup \\
B_{2n-1} \cap \lpreodd_1(0,Y_n,X_n) \\
\cup \\ 
B_{2n-2} \cap \lpreeven_1(0,Y_n,X_n,Y_{n-1}) \\
\cup \\
B_{2n-3} \cap \lpreodd_1(1,Y_n,X_n,Y_{n-1},X_{n-1}) \\
\cup \\
B_{2n-4} \cap \lpreeven_1(1,Y_n,X_n,Y_{n-1},X_{n-1},Y_{n-2}) \\
\vdots \\
B_{1} \cap \lpreodd_1(n-1,Y_n,X_n, \ldots,Y_1,X_1) \\
\cup \\
B_0 \cap \lpreeven_1(n-1,Y_n,X_n,\ldots,Y_1,X_1,Y_0)
\end{array}
\right]
\eeq
We refer to the above expression as the \emph{almost-expression} for case~1.
If in the above formula we replace $\lpreodd_1$ by $\lpreodd_2$ and 
$\lpreeven_1$ by $\lpreeven_2$ then we obtain the 
\emph{dual almost-expression} for case~1. 
From the same argument as correctness of the almost-expression and the fact
that counter-strategies for player~2 are against memoryless strategies for 
player~1 we obtain that if the dual almost-expression is $W_D$ for 
$T=\emptyset$, then 
$W_D \subs \set{s \in S \mid \forall \stra_1 \in \bigstra_1^M. 
\exists \stra_2 \in \bigstra_2.\ \Prb_s^{\stra_1,\stra_2}(\coParityCond) =1}$.

(b)~We have 
$Z \subs \no \almost_1(U,M, \ParityCond)$, 
where $Z$ is defined as follows 
\beq 
\nonumber
\mu Y_n.  \nu X_n. \mu Y_{n-1}. \nu X_{n-1}. \cdots \mu Y_1. \nu X_1. \mu Y_0. 
\left[
\begin{array}{c}
B_{2n-1} \cap \fpreodd_2(0,Y_n,X_n) \\
\cup \\ 
B_{2n-2} \cap \fpreeven_2(0,Y_n,X_n,Y_{n-1}) \\
\cup \\
B_{2n-3} \cap \fpreodd_2(1,Y_n,X_n,Y_{n-1},X_{n-1}) \\
\cup \\
B_{2n-4} \cap \fpreeven_2(1,Y_n,X_n,Y_{n-1},X_{n-1},Y_{n-2}) \\
\vdots \\
B_{1} \cap \fpreodd_2(n-1,Y_n,X_n, \ldots,Y_1,X_1) \\
\cup \\
B_0 \cap \fpreeven_2(n-1,Y_n,X_n,\ldots,Y_1,X_1,Y_0)
\end{array}
\right]
\eeq
We refer to the above expression as the \emph{positive-expression} for case~1.

\item \emph{Case~2.} For a parity function $p:S \mapsto [1..2n]$ the following
assertions hold.

\noindent(a)~For all $T \subs S$ we have 
$W \subs \almost_1(U,M, \ParityCond \cup \diam T)$, where $W$ is defined 
as follows:
\beq 
\nonumber
\nu Y_{n-1}. \mu X_{n-1}. \cdots \nu Y_1. \mu X_1. \nu Y_0. \mu X_0 
\left[
\begin{array}{c}
T \\
\cup \\
B_{2n} \cap \pre_1(Y_{n-1}) \\
\cup \\ 
B_{2n-1} \cap \lpreodd_1(0,Y_{n-1},X_{n-1}) \\
\cup \\
B_{2n-2} \cap \lpreeven_1(0,Y_{n-1},X_{n-2},Y_{n-2}) \\
\cup \\
B_{2n-3} \cap \lpreodd_1(1,Y_{n-1},X_{n-1},Y_{n-2},X_{n-2}) \\
\vdots \\
B_{2} \cap \lpreeven_1(n-2,Y_{n-1},X_{n-1}, \ldots,Y_1,X_1,Y_0) \\
\cup \\
B_1 \cap \lpreodd_1(n-1,Y_{n-1},X_{n-1},\ldots,Y_0,X_0)
\end{array}
\right]
\eeq
We refer to the above expression as the almost-expression for case~2.
If in the above formula we replace $\lpreodd_1$ by $\lpreodd_2$ and 
$\lpreeven_1$ by $\lpreeven_2$ then we obtain the 
dual almost-expression for case~2. 
Again, if the dual almost-expression is $W_D$ for 
$T=\emptyset$, then 
$W_D \subs \set{s \in S \mid \forall \stra_1 \in \bigstra_1^M. \exists \stra_2 
\in \bigstra_2.\  \Prb_s^{\stra_1,\stra_2}(\coParityCond) =1}$.

\noindent(b)~We have $Z \subs \no \almost_1(U,M, \ParityCond)$, 
where $Z$ is defined as follows 
\beq 
\nonumber
\mu Y_{n-1}. \nu X_{n-1}. \cdots \mu Y_1. \nu X_1. \mu Y_0. \nu X_0 
\left[
\begin{array}{c}
B_{2n} \cap \epre_2(Y_{n-1}) \\
\cup \\ 
B_{2n-1} \cap \fpreodd_2(0,Y_{n-1},X_{n-1}) \\
\cup \\
B_{2n-2} \cap \fpreeven_2(0,Y_{n-1},X_{n-2},Y_{n-2}) \\
\cup \\
B_{2n-3} \cap \fpreodd_2(1,Y_{n-1},X_{n-1},Y_{n-2},X_{n-2}) \\
\vdots \\
B_{2} \cap \fpreeven_2(n-2,Y_{n-1},X_{n-1}, \ldots,Y_1,X_1,Y_0) \\
\cup \\
B_1 \cap \fpreodd_2(n-1,Y_{n-1},X_{n-1},\ldots,Y_0,X_0)
\end{array}
\right]
\eeq
We refer to the above expression as the positive-expression for case~2.

\end{enumerate}

\smallskip\noindent{\bf The comparison to Emerson-Jutla $\mu$-calculus formula for turn-based games.}
We compare our $\mu$-calculus formula with the $\mu$-calculus formula of Emerson-Jutla~\cite{EJ91} 
to give an intuitive idea of the construction of the formula. 
We first present the formula for Case~2 and then for Case~1.

\smallskip\noindent{\em Case~2.}
For turn-based deterministic games with parity 
function $p:S \to [1..2n]$, it follows from the results of 
Emerson-Jutla~\cite{EJ91}, that the sure-winning (that is equivalent
to the almost-sure winning) set for the objective $\ParityCond \cup \diam T$ is given by the following 
$\mu$-calculus formula:
\beq 
\nonumber
\nu Y_{n-1}. \mu X_{n-1}. \cdots \nu Y_1. \mu X_1. \nu Y_0. \mu X_0 
\left[
\begin{array}{c}
T \\
\cup \\
B_{2n} \cap \pre_1(Y_{n-1}) \\
\cup \\ 
B_{2n-1} \cap \pre_1(X_{n-1}) \\
\cup \\
B_{2n-2} \cap \pre_1(Y_{n-2}) \\
\cup \\
B_{2n-3} \cap \pre_1(X_{n-2}) \\
\vdots \\
B_{2} \cap \pre_1(Y_0) \\
\cup \\
B_1 \cap \pre_1(X_0)
\end{array}
\right]
\eeq
The formula for the almost-expression for case~2 is similar to the 
above $\mu$-calculus formula and is obtained by replacing the 
$\pre_1$ operators with appropriate $\lpreodd_1$ and $\lpreeven_1$ 
operators.

\smallskip\noindent{\em Case~1.}
For turn-based deterministic games with parity 
function $p:S \to [0..2n-1]$, it follows from the results of 
Emerson-Jutla~\cite{EJ91}, that the sure-winning (that is 
equivalent to the almost-sure winning) set 
for the objective $\ParityCond \cup \diam T$ is given by the following 
$\mu$-calculus formula:
\beq 
\nonumber
\mu X_n. \nu Y_{n-1}. \mu X_{n-1}. \cdots \nu Y_1. \mu X_1. \nu Y_0. 
\left[
\begin{array}{c}
T \\
\cup \\
B_{2n-1} \cap \pre_1(X_n) \\
\cup \\ 
B_{2n-2} \cap \pre_1(Y_{n-1}) \\
\cup \\
B_{2n-3} \cap \pre_1(X_{n-1}) \\
\cup \\
B_{2n-4} \cap \pre_1(Y_{n-2}) \\
\vdots \\
B_{1} \cap \pre_1(X_1) \\
\cup \\
B_0 \cap \pre_1(Y_0)
\end{array}
\right]
\eeq
The formula for the almost-expression for case~1 is similar to the 
above $\mu$-calculus formula and is obtained by (a)~adding one 
quantifier alternation $\nu Y_n$; and (b)~replacing the 
$\pre_1$ operators with appropriate $\lpreodd_1$ and $\lpreeven_1$ 
operators.

\medskip\noindent{\bf Proof structure.}
The base case follows from the coB\"uchi and B\"uchi case: it 
follows from the results of~\cite{dAH00} since for B\"uchi and 
coB\"uchi objectives, uniform memoryless almost-sure winning strategies 
exist and our $\mu$-calculus formula coincide with the $\mu$-calculus 
formula to describe the almost-sure winning set for B\"uchi and 
coB\"uchi objectives.
The proof of induction proceeds in four steps as follows:
\begin{enumerate}
\item \emph{Step~1.} We assume the correctness of case~1 and case~2, 
and then extend the result to parity objective with parity function 
$p:S \mapsto [0..2n]$, i.e., we add a max even priority.
The result is obtained as follows: 
for the correctness of the almost-expression we use the correctness of case~1 and 
for complementation we use the correctness of case~2.

\item \emph{Step~2.} We assume the correctness of step~1 and extend the
result to parity objectives with parity function
$p:S \mapsto [1..2n+1]$, i.e., we add a max odd priority.
The result is obtained as follows: 
for the correctness of the almost-expression we use the correctness of case~2 and 
for complementation we use the correctness of step~1.

\item \emph{Step~3.} We assume correctness of step~2 and extend the 
result to parity objectives with parity function 
$p:S \mapsto [1..2n+2]$.
This step  adds a max even priority and the proof will be similar to step~1.
The result is obtained as follows: 
for the correctness of the almost-expression we use the correctness of step~2 and 
for complementation we use the correctness of step~1.

\item \emph{Step~4.} We assume correctness of step~3 and extend the 
result to parity objectives with parity function 
$p:S \mapsto [0..2n+1]$.
This step adds a max odd priority and the proof will be similar to step~2.
The result is obtained as follows: 
for the correctness of the almost-expression we use the correctness of step~1 and 
for complementation we use the correctness of step~3.
\end{enumerate}

We first present two technical lemmas that will be used in the correctness 
proofs.
First we define prefix-independent events.

\medskip\noindent{\bf Prefix-independent events.}
We say that an event or objective is \emph{prefix-independent} if it is independent of 
all finite prefixes. 
Formally, an event or objective $\cala$ is prefix-independent if, for all 
$u, v \in S^*$ and $\omega \in S^\omega$, we have 
$u \omega \in \cala$ iff $v \omega \in \cala$. 
Observe that parity objectives are defined based on the states 
that appear infinitely often along a play, and hence independent of
all finite prefixes, so that, parity objectives are prefix-independent
objectives.

\begin{lem}{(Basic $\apre$ principle).}\label{lemm:basicapre}
Let $X \subs Y \subs Z \subs S$ and $s \in S$ be such that $Y=X \cup \set{s}$
and $s \in \apre_1(Z,X)$.
For all prefix-independent events $\cala \subs \bo (Z \setm Y)$, the following assertion holds:
\begin{quote}
 Assume that there exists a uniform memoryless 
 $\stra_1 \in \bigstra_1^U \cap \bigstra_1^M$ 
 such that for all $\stra_2 \in \bigstra_2$ and for all $z \in Z \setm Y$ 
 we have 
 \[\Prb_z^{\stra_1,\stra_2}(\cala \cup \diam Y) =1.
 \]
 Then there exists a  uniform memoryless 
 $\stra_1 \in \bigstra_1^U \cap \bigstra_1^M$ 
 such that for all $\stra_2 \in \bigstra_2$ we have 
 \[
 \Prb_s^{\stra_1,\stra_2}(\cala \cup \diam X) =1.
 \]
\end{quote}
\end{lem}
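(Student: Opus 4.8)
The plan is to build a single uniform memoryless witness $\stra_1'$ that plays the $\apre$-witness distribution at $s$ and copies the given $\stra_1$ on $Z \setm Y$, and then to show that against every $\stra_2$ the play from $s$ almost surely either reaches $X$ or realizes $\cala$. First I would fix a distribution $\dis_1 \in \sd^s_1$ witnessing $s \in \apre_1(Z,X)$, so that $\pr_s^{\dis_1,\dis_2}(Z)=1$ and $\pr_s^{\dis_1,\dis_2}(X)>0$ for every $\dis_2 \in \sd^s_2$. Since both conditions depend only on $\supp(\dis_1)$, its uniformization $\unif(\dis_1)$ is again such a witness, so I may take $\dis_1$ uniform. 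Define $\stra_1'$ to play $\dis_1$ at $s$ (note $s \notin Z\setm Y$, so there is no clash), to coincide with $\stra_1$ on $Z \setm Y$, and to play arbitrarily (uniformly) elsewhere; this is uniform memoryless. Because $\dis_2 \mapsto \pr_s^{\dis_1,\dis_2}(X)$ is linear and strictly positive on the compact simplex $\sd^s_2$, it has a minimum $\eta>0$; hence, independently of the history, whenever the play is at $s$ the next state lies in $X$ with probability at least $\eta$ and always lies in $Z$.

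Next I would reduce the goal to showing $\Prb_s^{\stra_1',\stra_2}(E)=0$ for every $\stra_2$, where $E = \neg\cala \cap \bo\neg X$ is the event ``$\cala$ fails and $X$ is never reached'' (the complement of $\cala \cup \diam X$). On $\bo\neg X$ we have $Z \setm X = \{s\}\cup(Z\setm Y)$, so from $s$ the successor, when $X$ is avoided, lies in $\{s\}\cup(Z\setm Y)$, and on $Z \setm Y$ the strategies $\stra_1'$ and $\stra_1$ agree. I would split $E$ according to whether $s$ is visited infinitely or finitely often.

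For the infinitely-often part, the key point is that each visit to $s$ offers a history-independent chance $\ge \eta$ of stepping into $X$; hence by a geometric estimate $\Prb_s^{\stra_1',\stra_2}(\bo\neg X \cap \{s \text{ visited} \ge k \text{ times}\}) \le (1-\eta)^k$, which tends to $0$, so $\bo\neg X$ together with infinitely many visits to $s$ has probability $0$. For the finitely-often part, I would condition on the time $j$ of the last visit to $s$: on this event the play steps from $s$ into $Z\setm Y$ (it can go neither to $X$ nor back to $s$) and thereafter follows $\stra_1$ forever inside $Z \setm Y$ without ever reaching $Y = X \cup \{s\}$. Applying the hypothesis at the state reached at time $j+1$, and using prefix-independence of $\cala$ so that conditioning on the length-$(j{+}1)$ history does not change the value, gives $\Prb(\cala \cup \diam Y)=1$ from that point; since $\diam Y$ fails (neither $X$ nor $s$ recurs), $\cala$ must hold almost surely, contradicting $E \subseteq \neg\cala$. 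Summing the vanishing contributions over all $j$ yields probability $0$, so $\Prb_s^{\stra_1',\stra_2}(E)=0$.

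The main obstacle I anticipate is the bookkeeping around the non-Markovian player-2 strategy $\stra_2$: both sub-arguments must be phrased through conditional probabilities on finite histories, and it is precisely here that two facts are essential, namely that $\stra_1'$ is memoryless at $s$ (so the per-visit escape probability $\eta$ is uniform over histories) and that $\cala$ is prefix-independent (so the hypothesis can be invoked on the post-history continuation). Once these conditioning steps are in place, combining $\Prb_s^{\stra_1',\stra_2}(E)=0$ with the fact that $\stra_1'$ is uniform memoryless yields $\Prb_s^{\stra_1',\stra_2}(\cala \cup \diam X)=1$ for all $\stra_2$, as required.
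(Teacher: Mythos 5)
Your proposal is correct and follows essentially the same route as the paper's proof: play a uniform witness distribution for $\apre_1(Z,X)$ at $s$, reuse the hypothesis strategy on $Z\setm Y$, and split on whether $s$ is visited infinitely often (geometric escape to $X$ with per-visit probability $\eta$) or finitely often (eventually $\bo(Z\setm Y)$, then prefix-independence of $\cala$). Your version merely makes explicit the details the paper leaves implicit, such as the support-only dependence of the $\apre$ witness and the conditioning needed against history-dependent player-2 strategies.
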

\begin{proof} 
Since $s \in \apre_1(Z,X)$, player~1 can play a uniform 
memoryless distribution $\dis_1$ at $s$ to ensure that the 
probability of staying in $Z$ is~1 and with positive probability $\eta>0$ the set 
$X$ is reached.
In $Z \setm Y$ player~1 fixes a uniform memoryless strategy to ensure that 
$\cala \cup \diam Y$ is satisfied with probability~1.
Fix a counter strategy $\stra_2$ for player~2.
If $s$ is visited infinitely often, then since there is a probability of at least
$\eta>0$ to reach $X$, it follows that $X$ is reached with probability~1. 
If $s$ is visited finitely often, then from some point on $\bo (Z \setm Y)$ is satisfied, 
and then $\cala$ is ensured with probability~1. 
Thus the desired result follows.
\qed
\end{proof}

\begin{lem}{(Basic principle of repeated reachability).}
\label{lemm:basic-repeated-reach}
Let $T \subs S$, $B \subs S$ and $W \subs S$ be sets and $\cala$ be 
a prefix-independent objective such that 
\[
W \subs \almost_1(U,M,{\diam T \cup \diam (B \cap \pre_1(W)) \cup \cala}).
\]
Then
\[
W \subs \almost_1(U,M, { \diam T \cup \bo \diam B  \cup \cala}).
\]
\end{lem}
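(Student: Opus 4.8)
The plan is to build a single uniform memoryless strategy for player~1 that combines the hypothesis witness with the one-step ``return'' moves supplied by $\pre_1$, and then to run a repeated-reachability analysis on it. Write $R = B \cap \pre_1(W)$. First I would invoke the hypothesis $W \subs \almost_1(U,M,\diam T \cup \diam R \cup \cala)$ to fix a uniform memoryless $\stra_1^a \in \bigstra_1^U \cap \bigstra_1^M$ with $\Prb_w^{\stra_1^a,\stra_2}(\diam T \cup \diam R \cup \cala)=1$ for every $w \in W$ and every $\stra_2 \in \bigstra_2$. Next, for each $s \in R$, since $s \in \pre_1(W)$ there is a distribution $\dis_1 \in \sd^s_1$ with $\pr_s^{\dis_1,\dis_2}(W)=1$ for all $\dis_2$; because this is a condition on $\supp(\dis_1)$ only, the uniformization $\unif(\dis_1)$ still forces $W$ in one step. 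I would then define the uniform memoryless strategy $\stra_1$ to play $\unif(\dis_1)$ at each $s \in R$ and $\stra_1^a(s)$ at each $s \in W \setm R$ (the behaviour outside $W$ is irrelevant).

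The crucial structural point I would establish is that $\stra_1$ agrees with $\stra_1^a$ everywhere except on $R$, and that $\diam R$ is a reachability event: thus, starting from any $w \in W$, the two strategies induce the same play distribution up to the first visit to $R$. Consequently the hypothesis guarantee transfers, giving $\Prb_w^{\stra_1,\stra_2}(\diam T \cup \diam R \cup \cala)=1$ for all $\stra_2$, while in addition every visit to $R$ is followed with probability~$1$ by an immediate return to $W$. So under $\stra_1$ the play, started in $W$, re-enters $W$ right after each $R$-visit.

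For the main argument I would fix $w \in W$ and $\stra_2 \in \bigstra_2$ and condition on $N \in \set{0,1,\dots,\infty}$, the number of visits the play makes to $R$. On $\set{N=\infty}$ the set $R \subs B$ is hit infinitely often, so $\bo\diam B$ holds. On $\set{N=m<\infty}$, letting $\theta$ be the time of the last $R$-visit (or $0$ when $m=0$), the state at time $\theta+1$ is in $W$ and from then on $\stra_1$ never meets $R$ again, hence coincides with $\stra_1^a$; applying the hypothesis guarantee to the shifted play (legitimate because it holds against \emph{all} player-$2$ strategies, via the strong Markov property) and intersecting with the event of no further $R$-visit yields $\diam T \cup \cala$ with conditional probability~$1$. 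Since a tail occurrence of $\diam T$ gives $\diam T$ for the whole play and $\cala$ is prefix-independent, this is $\diam T \cup \cala$ for the whole play. Summing over $m$ gives $\Prb_w^{\stra_1,\stra_2}(\diam T \cup \bo\diam B \cup \cala)=1$, which is the claim.

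I expect the main obstacle to be the measure-theoretic bookkeeping of this last step rather than any single inequality: because $\cala$ can be satisfied while the play remains forever in $W \setm (T \cup R)$, a ``round'' between successive $R$-visits need not terminate in finite time, so one cannot merely lower-bound the per-round probability of reaching $R$ and appeal to Borel--Cantelli. Conditioning on the total count $N$ of $R$-visits and exploiting prefix-independence of $\cala$ is what circumvents this, and the transfer of the hypothesis guarantee from $\stra_1^a$ to $\stra_1$ must be justified carefully, relying on the fact that $\diam R$ is a hitting event on whose prefix the two strategies agree.
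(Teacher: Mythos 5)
Your proposal is correct and follows essentially the same route as the paper: build one uniform memoryless strategy that plays the (uniformized) $\pre_1(W)$ move on $R=B\cap\pre_1(W)$ and the hypothesis strategy elsewhere in $W$, then argue that almost surely either $R$ is visited infinitely often (giving $\bo\diam B$) or the play eventually satisfies $\diam T\cup\cala$. The only difference is the final bookkeeping — you partition on the total number of $R$-visits, while the paper writes the same fact as a telescoping product of conditional probabilities $\Prb(\diam_{k+1}Z\cup\diam T\cup\cala\mid\diam_k Z\cup\diam T\cup\cala)=1$ — and your extra care about transferring the guarantee from $\stra_1^a$ to the modified strategy is a point the paper leaves implicit.
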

\begin{proof}
Let $Z=B \cap \pre_1(W)$.
For all states  $s \in W\setm (Z\cup T)$, there is a uniform memoryless
player~1 strategy $\stra_1$ that ensures that against 
all player~2 strategies $\stra_2$ we have 
\[
\Prb_s^{\stra_1,\stra_2}\big(\diam (T \cup Z) \cup \cala\big)=1.
\]
For all states in  $Z$ player~1 can ensure that the successor state is in $W$ 
(since $\pre_1(W)$ holds in $Z$).
Consider a strategy $\stra_1^*$ as follows:
for states $s \in Z$ play a uniform memoryless strategy for $\pre_1(W)$ to ensure
that the next state is in $W$;
for states $s\in W \setm (Z\cup T)$ play the uniform memoryless strategy $\stra_1$.
Let us denote by $\diam_k Z \cup \diam T$ to be the set of paths that visits
$Z$ at least $k$-times or visits $T$ at least once. 
Observe that 
$\lim_{k \to\infty} \big( \diam_k Z \cup \diam T \big)
\subs \bo \diam B \cup \diam T$.
Hence for all $s \in W$ and for all $\stra_2 \in \bigstra_2$ we have 
\[
\begin{array}{rcl}
\Prb_s^{\stra_1^*,\stra_2}(\bo \diam B \cup \diam T \cup \cala)
& \geq &
\displaystyle 
\Prb_s^{\stra_1^*,\stra_2}
\big(\diam Z \cup \diam T \cup \cala\big) 
\cdot  \prod_{k=1}^\infty \Prb_s^{\stra_1^*,\stra_2}\big(\diam_{k+1} Z \cup \diam T \cup \cala \mid 
\diam_k Z \cup \diam T \cup \cala\big) \\[1ex]
& = & 1.
\end{array}
\]
The desired result follows.
\qed
\end{proof}

\medskip\noindent{\bf Correctness of step~1.} We now proceed with the
proof of step~1 and by inductive hypothesis we will assume that case~1 and
case~2 hold.

\begin{lem}{}\label{lemm:step-1-limit1}
For a parity function $p:S \mapsto [0..2n]$, and for all $T \subs S$, we have 
$W \subs \almost_1(U,M,{\ParityCond \cup \diam T})$, where $W$ is defined 
as follows:
\beq 
\nonumber
\nu Y_n.  \mu X_n. \nu Y_{n-1}. \mu X_{n-1}. \cdots \nu Y_1. \mu X_1. \nu Y_0. 
\left[
\begin{array}{c}
T \\
\cup \\
B_{2n} \cap \pre_1(Y_n) \\
\cup \\
B_{2n-1} \cap \lpreodd_1(0,Y_n,X_n) \\
\cup \\ 
B_{2n-2} \cap \lpreeven_1(0,Y_n,X_n,Y_{n-1}) \\
\cup \\
B_{2n-3} \cap \lpreodd_1(1,Y_n,X_n,Y_{n-1},X_{n-1}) \\
\cup \\
B_{2n-4} \cap \lpreeven_1(1,Y_n,X_n,Y_{n-1},X_{n-1},Y_{n-2}) \\
\vdots \\
B_{1} \cap \lpreodd_1(n-1,Y_n,X_n, \ldots,Y_1,X_1) \\
\cup \\
B_0 \cap \lpreeven_1(n-1,Y_n,X_n,\ldots,Y_1,X_1,Y_0)
\end{array}
\right]
\eeq
\end{lem}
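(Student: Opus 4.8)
The plan is to exploit the outer greatest fixpoint $\nu Y_n$ together with the correctness of case~1 (priorities $[0..2n-1]$) and the repeated-reachability principle of Lemma~\ref{lemm:basic-repeated-reach}. First I would use that $W$, being the value of the outermost greatest fixpoint, is a fixpoint of the inner expression. Write $T'=T\cup(B_{2n}\cap\pre_1(W))$. Substituting $Y_n:=W$ into the bracket turns the top term $B_{2n}\cap\pre_1(Y_n)$ into the constant set $B_{2n}\cap\pre_1(W)$, while the remaining terms $B_{2n-1}\cap\lpreodd_1(0,W,X_n),\ldots,B_0\cap\lpreeven_1(n-1,W,\ldots,Y_0)$ are \emph{exactly} the terms of the case~1 almost-expression for the priority function $p$ read over the reduced range $[0..2n-1]$, with target $T'$ and outer variable instantiated to $W$. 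Hence the identity $W=\mu X_n.\nu Y_{n-1}\cdots\nu Y_0.[\ldots]$ exhibits $W$ as a fixpoint of the case~1 operator with target $T'$, so $W\subs V$, where $V$ denotes the case~1 value (the greatest fixpoint of that operator).

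Next I would invoke case~1: $V\subs\almost_1(U,M,\ParityCond\cup\diam T')$, where $\ParityCond$ is here read over priorities $[0..2n-1]$; consequently $W\subs\almost_1(U,M,\ParityCond\cup\diam T')$. Let $\stra_1$ be the uniform memoryless witness strategy supplied by case~1, which (as part of case~1's correctness) keeps every play inside $V$ until $T'$ is reached. The key structural observation is that every priority-$2n$ state lying in $V$ must already lie in $T'$: the class $B_{2n}$ does not occur in any term of the case~1 formula (only $B_0,\ldots,B_{2n-1}$ do), so a state of priority $2n$ can enter the fixpoint $V$ only through the target $T'$. Therefore, along any $\stra_1$-play that has not yet reached $T'$, no state of priority $2n$ is ever visited.

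I would then upgrade the reduced parity condition to the full one. Fix any player-2 strategy. With probability~$1$ an $\stra_1$-play either reaches $T'=T\cup(B_{2n}\cap\pre_1(W))$ — yielding $\diam T\cup\diam(B_{2n}\cap\pre_1(W))$ — or stays in $V$ forever and satisfies the reduced parity condition. In the latter case the play never visits $B_{2n}$, so its set of infinitely recurring priorities lies in $[0..2n-1]$ and the reduced parity objective coincides with the full objective $\ParityCond$ over $[0..2n]$; hence $\ParityCond$ holds. This establishes
\[
W\subs\almost_1(U,M,\ \diam T\cup\diam(B_{2n}\cap\pre_1(W))\cup\ParityCond),
\]
which is precisely the hypothesis of Lemma~\ref{lemm:basic-repeated-reach} with $B:=B_{2n}$ and the prefix-independent event $\cala:=\ParityCond$. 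Applying that lemma gives $W\subs\almost_1(U,M,\ \diam T\cup\bo\diam B_{2n}\cup\ParityCond)$. Since $2n$ is the maximum priority and is even, visiting $B_{2n}$ infinitely often already satisfies parity, i.e. $\bo\diam B_{2n}\subs\ParityCond$, so the objective collapses to $\ParityCond\cup\diam T$, which is the claim.

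The step I expect to be the main obstacle is the legitimate reduction to case~1 across the priority-$2n$ states: making precise that $W$ is a fixpoint of the case~1 operator with the augmented target $T'$, that the case~1 witness strategy confines plays to $V$ until $T'$ is hit, and that all priority-$2n$ states of $V$ reside in $T'$, so that the reduced parity guarantee genuinely upgrades to $\ParityCond$. Once this interface between the outer $\nu Y_n$ level (which encodes ``visit $B_{2n}$ infinitely often'') and the inner case~1 region is settled, Lemma~\ref{lemm:basic-repeated-reach} packages the repetition automatically.
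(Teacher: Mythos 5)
Your proposal is correct and follows essentially the same route as the paper's proof: substitute $Y_n:=W$ to fold $B_{2n}\cap\pre_1(Y_n)$ into the augmented target $T'=T\cup(B_{2n}\cap\pre_1(W))$, invoke the case~1 inductive hypothesis with target $T'$, then apply Lemma~\ref{lemm:basic-repeated-reach} with $B=B_{2n}$ and $\cala=\ParityCond$ and conclude via $\bo\diam B_{2n}\subs\ParityCond$. Your additional care about the interface with case~1 (that priority-$2n$ states of the fixpoint lie in $T'$, so the reduced parity condition over $[0..2n-1]$ upgrades to the full one) is a detail the paper elides, but it does not change the argument.
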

\begin{proof} 
We first present the intuitive explanation of obtaining the $\mu$-calculus 
formula.

\medskip\noindent{\em Intuitive explanation of the $\mu$-calculus formula.} 
The $\mu$-calculus formula of the lemma is obtained from the almost-expression 
for case~1 by just adding the expression $B_{2n} \cap \pre_1(Y_n)$.

To prove the result we first rewrite $W$ as follows:
\beq 
\nonumber
\nu Y_n.  \mu X_n. \nu Y_{n-1} \mu X_{n-1} \cdots \nu Y_1. \mu X_1. \nu Y_0. 
\left[
\begin{array}{c}
T \cup (B_{2n} \cap \pre_1(W)) \\
\cup \\
B_{2n-1} \cap \lpreodd_1(0,Y_n,X_n) \\
\cup \\ 
B_{2n-2} \cap \lpreeven_1(0,Y_n,X_n,Y_{n-1}) \\
\cup \\
B_{2n-3} \cap \lpreodd_1(1,Y_n,X_n,Y_{n-1},X_{n-1}) \\
\cup \\
B_{2n-4} \cap \lpreeven_1(1,Y_n,X_n,Y_{n-1},X_{n-1},Y_{n-2}) \\
\vdots \\
B_{1} \cap \lpreodd_1(n-1,Y_n,X_n, \ldots,Y_1,X_1) \\
\cup \\
B_0 \cap \lpreeven_1(n-1,Y_n,X_n,\ldots,Y_1,X_1,Y_0)
\end{array}
\right]
\eeq
The rewriting is obtained as follows: since $W$ is the fixpoint $Y_n$, 
we replace $Y_n$ in the $B_{2n} \cap \pre_1(Y_n)$ by $W$.
Treating $T \cup (B_{2n} \cap \pre_1(W))$, as the set $T$ for the almost-expression
for case~1, we obtain from the inductive hypothesis that 
\[
W \subs \almost_1(U,M, \ParityCond \cup \diam(T \cup (B_{2n} \cap \pre_1(W))) ).
\]
By Lemma~\ref{lemm:basic-repeated-reach}, with $B=B_{2n}$ and $\cala=\ParityCond$ we 
obtain that 
\[
W \subs \almost_1(U,M, {\ParityCond \cup \diam T \cup \bo \diam B_{2n} }).
\]
Since $B_{2n}$ is the maximal priority and it is even we have 
$\bo \diam B_{2n} \subs \ParityCond$.
Hence $W \subs \almost_1(U,M,{\ParityCond \cup \diam T })$ and the result follows.
\qed
\end{proof}

\begin{lem}{}\label{lemm:step-1-limit2}
For a parity function $p:S \mapsto [0..2n]$, 
we have $Z \subs \no \almost_1(U,M, \ParityCond)$, where $Z$ is defined as follows 
\beq 
\nonumber
\mu Y_n.  \nu X_n. \mu Y_{n-1}. \nu X_{n-1}. \cdots \mu Y_1. \nu X_1. \mu Y_0. 
\left[
\begin{array}{c}
B_{2n} \cap \epre_2(Y_n) \\
\cup \\
B_{2n-1} \cap \fpreodd_2(0,Y_n,X_n) \\
\cup \\ 
B_{2n-2} \cap \fpreeven_2(0,Y_n,X_n,Y_{n-1}) \\
\cup \\
B_{2n-3} \cap \fpreodd_2(1,Y_n,X_n,Y_{n-1},X_{n-1}) \\
\cup \\
B_{2n-4} \cap \fpreeven_2(1,Y_n,X_n,Y_{n-1},X_{n-1},Y_{n-2}) \\
\vdots \\
B_{1} \cap \fpreodd_2(n-1,Y_n,X_n, \ldots,Y_1,X_1) \\
\cup \\
B_0 \cap \fpreeven_2(n-1,Y_n,X_n,\ldots,Y_1,X_1,Y_0)
\end{array}
\right]
\eeq
\end{lem}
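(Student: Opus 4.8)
The plan is to dualize the proof of Lemma~\ref{lemm:step-1-limit1}. Rather than constructing a uniform memoryless winning strategy for player~1, I would exhibit, for every $s \in Z$ and every uniform memoryless $\stra_1 \in \bigstra_1^U \cap \bigstra_1^M$, a \emph{memoryless} player-2 strategy $\stra_2$ with $\Prb_s^{\stra_1,\stra_2}(\coParityCond) > 0$; this is precisely the assertion $Z \subs \no\almost_1(U,M,\ParityCond)$. Two preliminary observations drive the construction: against a memoryless $\stra_1$ the universally quantified $\dis_1$ inside $\epre_2, \fpreodd_2, \fpreeven_2$ is instantiated by the fixed distribution $\stra_1(s)$, so the witnessing $\dis_2$ can be selected state by state and player~2's spoiling strategy is memoryless; and, throughout, I would use the inductive hypotheses for case~1 and case~2, which are assumed available.

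The first step mirrors the rewriting in Lemma~\ref{lemm:step-1-limit1}. Since $Z$ is the value of the outermost \emph{least} fixpoint $\mu Y_n$, the occurrence of $Y_n$ in the top disjunct $B_{2n} \cap \epre_2(Y_n)$ may be frozen to $Z$, so that the disjunct reads $B_{2n} \cap \epre_2(Z)$ while the remaining disjuncts are exactly the positive-expression body. Treating $B_{2n} \cap \epre_2(Z)$ as a player-2 \emph{escape set} and applying the inductive hypothesis to this body, I obtain that from every $s \in Z$ player~2 has a memoryless strategy guaranteeing, with positive probability, either to reach $B_{2n} \cap \epre_2(Z)$ or to realize the co-parity objective restricted to the priorities $[0..2n-1]$; the almost-sure co-parity core that the latter alternative relies on is supplied by the dual-almost-expression of case~2, which is the reason case~2 enters the argument.

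The second step closes the recurrence at the maximal priority, and it is here that the least-fixpoint structure is essential. Because $Y_n$ is a least fixpoint, membership $s \in B_{2n} \cap \epre_2(Z)$ is witnessed at a finite level of the $\mu Y_n$-iteration, and player~2's $\epre_2$ distribution moves, with probability bounded below by some fixed $\eta > 0$, to a strictly lower level. This is the exact dual of the repeated-reachability step (Lemma~\ref{lemm:basic-repeated-reach}) used on the almost-side: whereas player~1 there \emph{forced} $\bo\diam B_{2n}$ out of the greatest fixpoint, here the least fixpoint forces the opposite, namely that along player~2's strategy the level ranking strictly decreases on each $B_{2n}$-visit, so $B_{2n}$ cannot be hit infinitely often inside any recurrent class of $Z$. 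Consequently every bottom class reachable with positive probability from $s$ inside $Z$ has maximal priority at most $2n-1$ and, by the positive-expression body together with the dual of the basic $\apre$ principle (Lemma~\ref{lemm:basicapre}), satisfies the $[0..2n-1]$ co-parity objective. Since $2n$ is the maximal priority and is even, a play that visits $B_{2n}$ only finitely often satisfies $\coParityCond$ iff it satisfies co-parity for $[0..2n-1]$; hence $\Prb_s^{\stra_1,\stra_2}(\coParityCond) > 0$, as required.

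I expect two points to carry the real difficulty. First, positive-probability events do not compose multiplicatively the way almost-sure events do, so the descent must be run against a \emph{uniform} lower bound $\eta > 0$ on the one-step escape probability combined with the finiteness of the fixpoint levels, giving a finite product $\eta^{|S|} > 0$ rather than a vanishing infinite product; this is precisely where a least fixpoint, as opposed to a greatest one, is indispensable. Second, and more delicate, the argument tacitly requires the complementation of the new predecessor operators, i.e.\ that $\epre_2$ is dual to $\pre_1$ and that $\fpreodd_2, \fpreeven_2$ are dual to $\lpreodd_1, \lpreeven_1$ over memoryless strategies. Because these operators differ from those of~\cite{CdAH11}, this duality cannot be quoted and must be established directly; verifying it for the combined predecessor operators is, I anticipate, the main obstacle and the component that makes the present complementation genuinely different from the one in~\cite{CdAH11}.
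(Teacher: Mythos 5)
Your proposal follows essentially the same route as the paper: induction on the levels $Z_k$ of the outermost $\mu Y_n$, the dichotomy between a positive-probability $\epre_2$-escape to $Z_{k-1}$ and the degeneration of $\fpreodd_2/\fpreeven_2$ to $\lpreeven_2/\lpreodd_2$ once that escape is ruled out, an appeal to the dual almost-expression for case~2 (with the priority function shifted to $[1..2n]$) to obtain probability-one co-parity in the residual chunk, and a finite product of positive escape probabilities to conclude $\Prb_s^{\stra_1,\stra_2}(\coParityCond)>0$. One intermediate sentence of yours is literally false --- $\epre_2$ only yields a \emph{positive-probability} step to a lower level, so the level ranking does not ``strictly decrease on each $B_{2n}$-visit'' and the recurrent-class claim does not follow from it as stated --- but this is not load-bearing, since your final paragraph correctly replaces it with a uniform lower bound $\eta>0$ on the one-step escape probability combined with the finiteness of the fixpoint levels, which is exactly how the paper closes the argument.
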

\begin{proof} 
For $k \geq 0$, let $Z_k$ be the set of states of level $k$ in the above 
$\mu$-calculus expression.
We will show that in $Z_k$ for every memoryless strategy for player~1, 
player~2 can ensure that either $Z_{k-1}$ is reached with positive probability or 
else $\coParityCond$ is satisfied with probability~1.
Since $Z_0=\emptyset$, it would follow by induction that 
$Z_k \cap \almost_1(U,M,\ParityCond)=\emptyset$ and the desired result 
will follow.

 We simplify the computation of $Z_k$ given $Z_{k-1}$ and allow that 
$Z_k$ is obtained from $Z_{k-1}$ in the following two ways.
\begin{enumerate} 
\item Add a set states satisfying $B_{2n} \cap \epre_2(Z_{k-1})$, and if such a 
non-emptyset is added, then clearly against any memoryless stratgy for player~1, 
player~2 can ensure from $Z_k$ that $Z_{k-1}$ is reached with positive probability.
Thus the inductive case follows.
\item Add a set of states satisfying the following condition:
\beq 
\nonumber
\nu X_n. \mu Y_{n-1}. \nu X_{n-1}. \cdots \mu Y_1. \nu X_1. \mu Y_0. 
\left[
\begin{array}{c}
B_{2n-1} \cap \fpreodd_2(0,Z_{k-1},X_n) \\
\cup \\ 
B_{2n-2} \cap \fpreeven_2(0,Z_{k-1},X_n,Y_{n-1}) \\
\cup \\
B_{2n-3} \cap \fpreodd_2(1,Z_{k-1},X_n,Y_{n-1},X_{n-1}) \\
\cup \\
B_{2n-4} \cap \fpreeven_2(1,Z_{k-1},X_n,Y_{n-1},X_{n-1},Y_{n-2}) \\
\vdots \\
B_{1} \cap \fpreodd_2(n-1,Z_{k-1},X_n, \ldots,Y_1,X_1) \\
\cup \\
B_0 \cap \fpreeven_2(n-1,Z_{k-1},X_n,\ldots,Y_1,X_1,Y_0)
\end{array}
\right]
\eeq
If the probability of reaching to $Z_{k-1}$ is not positive, then the 
following conditions hold:
\begin{itemize}
\item If the probability to reach $Z_{k-1}$ is not positive, then the
predicate $\epre_2(Z_{k-1})$ vanishes from the predecessor operator 
$\fpreodd_2(i,Z_{k-1},X_{n},Y_{n-1},\ldots,Y_{n-i},X_{n-i})$, and thus 
the operator simplifies to the simpler predecessor operator 
$\lpreeven_2(i-1,X_{n},Y_{n-1},\ldots,Y_{n-i},X_{n-i})$.
\item If the probability to reach $Z_{k-1}$ is not positive, then the 
$\epre_2(Z_{k-1})$ vanishes from the predecessor operator 
$\fpreeven_2(i,Z_{k-1},X_{n},Y_{n-1},\ldots,Y_{n-i},X_{n-i},Y_{n-i-1})$, and 
thus the operator simplifies to the predecessor operator
$\lpreodd_2(i,X_{n},Y_{n-1},\ldots,Y_{n-i},X_{n-i},Y_{n-i-1})$. 
\end{itemize}
Hence either the probability to reach $Z_{k-1}$ is positive, or if the probability to reach $Z_{k-1}$ is not
positive, then the above $\mu$-calculus expression simplifies to
\beq 
\nonumber
Z^*= \nu X_n. \mu Y_{m-1} \nu X_{m-1} \cdots \mu Y_1. \nu X_1. \mu Y_0. 
\left[
\begin{array}{c}
B_{2n-1} \cap \pre_2(X_n) \\
\cup \\ 
B_{2n-2} \cap \lpreodd_2(0,X_n,Y_{n-1}) \\
\cup \\
B_{2n-3} \cap \lpreeven_2(1,X_n,Y_{n-1},X_{n-1}) \\
\cup \\
B_{2n-4} \cap \lpreodd_2(1,X_n,Y_{n-1},X_{n-1},Y_{n-2}) \\
\vdots \\
B_{1} \cap \lpreeven_2(n-2,X_n, \ldots,Y_1,X_1) \\
\cup \\
B_0 \cap \lpreodd_2(n-1,X_n,\ldots,Y_1,X_1,Y_0)
\end{array}
\right].
\eeq
We now consider the parity function $p+1:S\mapsto[1 .. 2n]$, and 
observe that the above formula is same as the dual almost-expression for 
case~2. 
By inductive hypothesis on the dual almost-expression 
we have $Z^* \subs 
\set{s \in S \mid \forall \stra_1 \in \bigstra_1^M. \exists \stra_2 
\in \bigstra_2. \Pr_s^{\stra_1,\stra_2} 
({\coParityCond})=1}$
(since $\Parity(p+1)=\coParityCond$).
Hence the desired claim follows.
\end{enumerate}
The result follows from the above case analysis.
\qed
\end{proof}

\medskip\noindent{\bf Correctness of step~2.} We now prove correctness of
step~2 and we will rely on the correctness of step~1 and the inductive hypothesis.
Since correctness of step~1 follows from the inductive hypothesis, we obtain 
the correctness of step~2 from the inductive hypothesis.

\begin{lem}{}\label{lemm:step-2-limit1}
For a parity function $p:S \mapsto [1..2n+1]$, and for all $T \subs S$ we have 
$W \subs \almost_1(U,M,{\ParityCond \cup \diam T})$, where $W$ is defined 
as follows:
\beq 
\nonumber
\nu Y_n. \mu X_n. 
\nu Y_{n-1}. \mu X_{n-1}. \cdots  \nu Y_0. \mu X_0 
\left[
\begin{array}{c}
T \\
\cup \\
B_{2n+1} \cap \lpreodd_1(0,Y_n,X_n) \\
\cup \\
B_{2n} \cap \lpreeven_1(0,Y_n,X_n,Y_{n-1}) \\
\cup \\ 
B_{2n-1} \cap \lpreodd_1(1,Y_n,X_n,Y_{n-1},X_{n-1}) \\
\cup \\
B_{2n-2} \cap \lpreeven_1(1,Y_n,X_n,Y_{n-1},X_{n-2},Y_{n-2}) \\
\cup \\
B_{2n-3} \cap \lpreodd_1(2,Y_n,X_n,Y_{n-1},X_{n-1},Y_{n-2},X_{n-2}) \\
\vdots \\
B_{2} \cap \lpreeven_1(n-1,Y_n,X_n,Y_{n-1},X_{n-1}, \ldots,Y_1,X_1,Y_0) \\
\cup \\
B_1 \cap \lpreodd_1(n,Y_n,X_n,Y_{n-1},X_{n-1},\ldots,Y_0,X_0)
\end{array}
\right]
\eeq
\end{lem}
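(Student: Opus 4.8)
The plan is to mirror the proof of Lemma~\ref{lemm:step-1-limit1}, but since the added maximal priority $2n+1$ is \emph{odd} (hence losing for player~1), the repeated-reachability argument used there is replaced by an induction on the levels of the second-outermost fixpoint. Write the formula as $W=\nu Y_n.\,\mu X_n.\,G(Y_n,X_n)$, where $G(Y_n,X_n)$ abbreviates the inner expression $\nu Y_{n-1}.\mu X_{n-1}.\cdots\nu Y_0.\mu X_0.[\cdots]$. As in Lemma~\ref{lemm:step-1-limit1}, since $W$ is the outermost greatest fixpoint in $Y_n$, we may substitute its value and work with $W=\mu X_n.\,G(W,X_n)$. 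Let $X_n^0=\emptyset\subseteq X_n^1\subseteq\cdots\subseteq X_n^m=W$ be the levels of this least fixpoint. I would prove by induction on $k$ that $X_n^k\subseteq\almost_1(U,M,\ParityCond\cup\diam T)$; the case $k=m$ then gives the lemma.

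For the inductive step fix the set $X_n^k$. The new states $X_n^{k+1}\setminus X_n^k$ satisfy $G(W,X_n^k)$, and because $Y_n=W$ and $X_n=X_n^k$ are now \emph{fixed} sets, the escape predicate $\apre_1(W,X_n^k)$ is a fixed set of states. Inspecting the bracket, $G(W,X_n^k)$ is exactly the case~2 almost-expression for the parity function $p$ restricted to $[1..2n]$ with target $T$, except that every predecessor operator carries the additional common disjunct ``$\apre_1(W,X_n^k)\sdcup$'' and there is the extra top term $B_{2n+1}\cap\apre_1(W,X_n^k)$ (recall $\lpreodd_1(0,W,X_n^k)=\apre_1(W,X_n^k)$ and $\lpreodd_1(i,W,X_n^k,\dots)=\apre_1(W,X_n^k)\sdcup\lpreodd_1(i-1,\dots)$). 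Running the case~2 strategy construction (the inductive hypothesis) with this escape adjoined, and realising $\apre_1(W,X_n^k)$ by the uniform memoryless distribution supplied by the basic $\apre$ principle (Lemma~\ref{lemm:basicapre}), yields a uniform memoryless strategy on $X_n^{k+1}\setminus X_n^k$ that, against every $\stra_2$, guarantees with probability~$1$ that the play reaches $T$, or reaches $X_n^k$, or satisfies $\ParityCond$ while visiting only priorities in $[1..2n]$.

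The point is that plays which never reach $T\cup X_n^k$ visit $B_{2n+1}$ only finitely often: at a state of $B_{2n+1}\cap X_n^{k+1}$ the sole option is the escape, so by Lemma~\ref{lemm:basicapre} infinitely many visits would drive the play into $X_n^k$ with probability~$1$; on the remaining plays the maximal priority seen infinitely often lies in $[1..2n]$ and, by the case~2 guarantee, is even, so $\ParityCond$ holds. It then remains to discharge the event of reaching $X_n^k$. By the induction hypothesis $X_n^k\subseteq\almost_1(U,M,\ParityCond\cup\diam T)$, and since the levels partition the state space the uniform memoryless strategies on the various levels can be combined into a single uniform memoryless strategy; using the prefix-independence of $\ParityCond$ (so that ``reach $X_n^k$ and then win from $X_n^k$'' is winning) I would conclude $X_n^{k+1}\subseteq\almost_1(U,M,\ParityCond\cup\diam T)$, completing the induction.

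The main obstacle is the coupling introduced by $\sdcup$: the escape disjunct $\apre_1(W,X_n^k)$ must be witnessed by the \emph{same} player~1 distribution that handles the lower priorities, so reaching $X_n^k$ cannot be treated as an independent reachability target layered on top of case~2. The technical heart is therefore to show that adjoining ``$\apre_1(W,X_n^k)\sdcup$'' to every operator of the case~2 formula corresponds exactly to enlarging the winning objective by $\diam X_n^k$, and that the resulting witness strategy stays uniform and memoryless; this is what lets the case~2 inductive hypothesis be invoked, with the staying-in-$W$ clause of the escape being guaranteed by the outer greatest fixpoint $\nu Y_n$.
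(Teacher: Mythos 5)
Your overall strategy --- substitute $W$ for $Y_n$, unfold $\mu X_n$ into stages, treat each stage via the case~2 inductive hypothesis with the escape $\apre_1(W,\cdot)$ adjoined, and invoke Lemma~\ref{lemm:basicapre} for the $B_{2n+1}$ states --- matches the paper's proof in its main ingredients, but your induction runs in the wrong direction, and this creates a genuine gap. At stage $k+1$ you claim that from $X_n^{k+1}\setminus X_n^k$ the constructed strategy ensures, with probability~1, ``reach $T$, or reach $X_n^k$, or satisfy $\ParityCond$.'' The witness distributions at these states only certify membership in $\apre_1(W,X_n^k)\sdcup(\cdots)$: when the $\apre_1$ disjunct is the one that fires against a given player-2 action, the play is guaranteed to stay in $W$ and to hit $X_n^k$ with some positive probability, but $W$ is the \emph{full} outer fixpoint, so the successor may land in $W\setminus X_n^{k+1}$ --- higher levels of $\mu X_n$ about which your induction has, at stage $k+1$, established nothing. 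Your ``infinitely many visits to a $B_{2n+1}$ state drive the play into $X_n^k$'' argument, and the reduction to the case~2 hypothesis with $X_n^k$ as an enlarged target, both presuppose that the play remains confined to $X_n^{k+1}$ until it reaches $T\cup X_n^k$; that confinement is false in general, and without any guarantee on $W\setminus X_n^{k+1}$ the probability-1 claim cannot be closed at stage $k+1$.

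The paper resolves exactly this by inverting the induction. It first refines the iteration of $\mu X_n$ into an increasing sequence $T=T_0\subset T_1\subset\cdots\subset T_m=W$ in which each increment is either a \emph{single} $B_{2n+1}$ state satisfying $\apre_1(W,T_{i-1})$ or a block of $B_{\leq 2n}$ states, and then inducts on $V_i=W\setminus T_{m-i}$ from the \emph{last}-added block to the first, proving that from $V_i$ player~1 ensures $\diam T_{m-i}\cup\ParityCond$ with probability~1. With this ordering, when a state of the current block uses $\apre_1(W,T_{m-i})$ and the play stays in $W$ without reaching $T_{m-i}$ or returning to that state, it lands in $V_{i-1}$ --- precisely the set already covered by the inductive hypothesis, which returns the play to $T_{m-i+1}$ (i.e., to $T_{m-i}$ or the current block) or wins the parity condition; this is what Lemma~\ref{lemm:basicapre} formalizes, and it is also why that lemma is stated for a singleton increment $Y=X\cup\set{s}$. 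To repair your proof you would need to replace the forward induction on the levels of $\mu X_n$ by this backward induction over the refined blocks (or otherwise establish a simultaneous claim for all levels at once); as written, stage $k+1$ of your argument depends on guarantees for levels $k+2,\dots,m$ that your induction only establishes later.
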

\begin{proof}
We first present an intuitive explanation about the how the $\mu$-calculus formula
is obtained.

\medskip\noindent{\em Intuitive explanation of the $\mu$-calculus formula.} 
The $\mu$-calculus expression is obtained from the almost-expression for case~2: 
we add a $\nu Y_n. \mu X_n$ (adding a quantifier alternation of the $\mu$-calculus
formula), and every $\lpreodd$ and $\lpreeven$ predecessor operators are modified by adding 
$\apre_1(Y_n,X_n)  \sdcup$ with the  respective predecessor operators, and we add 
$B_{2n+1} \cap \lpreodd_1(0,Y_n,X_n)$.

We first reformulate the algorithm for computing $W$ in an equivalent form. 
\beq 
\nonumber
W= \mu X_n. \nu Y_{n-1}. \mu X_{n-1}. \cdots \nu Y_0. \mu X_0 
\left[
\begin{array}{c}
T \\
\cup \\
B_{2n+1} \cap \lpreodd_1(0,W,X_n) \\
\cup \\
B_{2n} \cap \lpreeven_1(0,W,X_n,Y_{n-1}) \\
\cup \\ 
B_{2n-1} \cap \lpreodd_1(1,W,X_n,Y_{n-1},X_{n-1}) \\
\cup \\
B_{2n-2} \cap \lpreeven_1(1,W,X_n,Y_{n-1},X_{n-2},Y_{n-2}) \\
\cup \\
B_{2n-3} \cap \lpreodd_1(2,W,X_n,Y_{n-1},X_{n-1},Y_{n-2},X_{n-2}) \\
\vdots \\
B_{2} \cap \lpreeven_1(n-1,W,X_n,Y_{n-1},X_{n-1}, \ldots,Y_1,X_1,Y_0) \\
\cup \\
B_1 \cap \lpreodd_1(n,W,X_n,Y_{n-1},X_{n-1},\ldots,Y_0,X_0)
\end{array}
\right]. 
\eeq 
The reformulation is obtained as follows: since $W$ is the fixpoint of 
$Y_n$ we replace $Y_n$ by $W$ everywhere in the $\mu$-calculus formula.
The above mu-calculus formula is a least fixpoint and thus computes $W$ 
as the limit of a sequence of sets $W_0 = T$, $W_1$, $W_2$, \ldots. 
At each iteration, both states in $B_{2n+1}$ and states satisfying $B_{\leq 2n}$ 
can be added. 
The fact that both types of states can be added complicates the
analysis of the algorithm. 
To simplify the correctness proof, we formulate an alternative
algorithm for the computation of $W$;
an iteration will add either a single $B_{2n+1}$ state, or a set of 
$B_{\leq 2n}$ states. 

To obtain the simpler algorithm, notice that the set of variables 
$Y_{n-1},X_{n-1},\ldots,Y_0,X_0$ does not appear as an argument of the 
$\lpreodd_1(0,W,X_n)=\apre_1(W,X_n)$ operator. 
Hence, each $B_{2n+1}$-state can be added without regards to 
$B_{\leq 2n}$-states that are not already in $W$. 
Moreover, since the $\nu Y_{n-1}. \mu X_{n-1}.  \ldots \nu Y_0. \mu X_0$ 
operator applies only to $B_{\leq 2n}$-states, $B_{2n+1}$-states can be added one at a time. 
From these considerations, we can reformulate the algorithm for the
computation of $W$ as follows. 

The algorithm computes $W$ as an increasing sequence 
$T= T_0 \subset T_1 \subset T_2 \subset \cdots \subset T_m = W$ 
of states, where $m \geq 0$. 
Let $L_i = T_i \setm T_{i-1}$ and 
the sequence is computed by computing $T_i$ as follows, for $0 < i \leq m$: 
\begin{enumerate} 

\item either the set $L_i=\set{s}$ is a singleton such that 
$s \in \apre_1(W, T_{i-1}) \cap B_{2n+1}$. 

\item or the set $L_i$ consists of states in $B_{\leq 2n}$ such that 
$L_i$ is a subset of the following expression 
\beq
\nonumber
\nu Y_{n-1}. \mu X_{n-1}. \cdots \nu Y_0. \mu X_0 
\left[
\begin{array}{c}
B_{2n} \cap \lpreeven_1(0,W,T_{i-1},Y_{n-1}) \\
\cup \\ 
B_{2n-1} \cap \lpreodd_1(1,W,T_{i-1},Y_{n-1},X_{n-1}) \\
\cup \\
B_{2n-2} \cap \lpreeven_1(1,W,T_{i-1},Y_{n-1},X_{n-2},Y_{n-2}) \\
\cup \\
B_{2n-3} \cap \lpreodd_1(2,W,T_{i-1},Y_{n-1},X_{n-1},Y_{n-2},X_{n-2}) \\
\vdots \\
B_{2} \cap \lpreeven_1(n-1,W,T_{i-1},Y_{n-1},X_{n-1}, \ldots,Y_1,X_1,Y_0) \\
\cup \\
B_1 \cap \lpreodd_1(n,W,T_{i-1},Y_{n-1},X_{n-1},\ldots,Y_0,X_0)
\end{array}
\right] 
\eeq
\end{enumerate}
The proof that $W \subs \almost_1(U,M, {\ParityCond \cup \diam T})$ 
is based on an induction on the sequence 
$T = T_0 \subset T_1 \subset T_2 \subset \cdots \subset T_m = W$.
For $1 \leq i \leq m$, let $V_i = W \setm T_{m-i}$, so that
$V_1$ consists of the last block of states that has been added, $V_2$
to the two last blocks, and so on until $V_m = W$. 
We prove by induction on $i \in \{1, \ldots, m\}$, from $i=1$ to $i=m$, 
that for all $s \in V_i$, 
there exists a uniform memoryless strategy $\stra_1$ for player~1 such that 
for all $\stra_2 \in \bigstra_2$ we have
\[
\Prb_s^{\stra_1,\stra_2}\big(\diam T_{m-i} \cup \ParityCond \big) =1. 
\]

Since the base case is a simplified version of the induction step, we
focus on the latter.
There are two cases, depending on whether $V_i \setm V_{i-1}$
is composed of $B_{2n+1}$ or of $B_{\leq 2n}$-states.
Also it will follow from Lemma~\ref{lemm:dual} that there always exists 
uniform distribution to witness that a state satisfy the required 
predecessor operator. 
\begin{itemize}

\item If $V_i \setm V_{i-1} \subs B_{2n+1}$, then $V_i \setm V_{i-1} = \set{s}$
for some $s \in S$ and $s \in \apre_1(W,T_{m-i})$.
The result then follows from the application of the basic $\apre$ principle 
(Lemma~\ref{lemm:basicapre}) with 
$Z=W$, $X =T_{m-i}$, $Z\setm Y= V_{i-1}$ and $\cala= \ParityCond$.

\item If $V_i \setm V_{i-1} \subs B_{\leq 2n}$, then we analyze the 
predecessor operator that $s\in V_i \setm V_{i-1}$ satisfies.
The predecessor operator are essentially the predecessor operator of 
the almost-expression for case~2 modified by the addition of the operator
$\apre_1(W,T_{m-i}) \sdcup$.
If player~2 plays such the $\apre_1(W,T_{m-i})$ part of the predecessor operator
gets satisfied, then the analysis reduces to the previous case, and 
player~1 can ensure that $T_{m-i}$ is reached with probability~1.
Once we rule out the possibility of $\apre_1(W,T_{m-i})$, then the 
$\mu$-calculus expression simplifies to the almost-expression of case~2, i.e.,
\beq
\nonumber
\nu Y_{n-1}. \mu X_{n-1}. \cdots \nu Y_0. \mu X_0 
\left[
\begin{array}{c}
B_{2n} \cap \pre_1(Y_{n-1}) \\
\cup \\ 
B_{2n-1} \cap \lpreodd_1(0,Y_{n-1},X_{n-1}) \\
\cup \\
B_{2n-2} \cap \lpreeven_1(0,Y_{n-1},X_{n-2},Y_{n-2}) \\
\cup \\
B_{2n-3} \cap \lpreodd_1(1,Y_{n-1},X_{n-1},Y_{n-2},X_{n-2}) \\
\vdots \\
B_{2} \cap \lpreeven_1(n-2,Y_{n-1},X_{n-1}, \ldots,Y_1,X_1,Y_0) \\
\cup \\
B_1 \cap \lpreodd_1(n-1,Y_{n-1},X_{n-1},\ldots,Y_0,X_0)
\end{array}
\right] 
\eeq
This ensures that if we rule out 
$\apre_1(W,T_{m-i})$ form the predecessor operators, then by inductive
hypothesis (almost-expression for case~2) we have 
$L_i \subs \almost_1(U,M,{\ParityCond})$, and if $\apre_1(W,T_{m-i})$ is satisfied
then $T_{m-i}$ is ensured to reach with probability~1.
Hence player~1 can ensure that 
for all $s \in V_i$, there
is a uniform memoryless strategy $\stra_1$ for player~1 such that for all 
$\stra_2$ for player~2 we have  
\[
\Prb_s^{\stra_1,\stra_2}\big(\diam T_{m-i} \cup \ParityCond \big) = 1. 
\]
\end{itemize}
This completes the inductive proof.
With $i=m$ we obtain that 
there exists a uniform memoryless strategy 
$\stra_1$ such that for all states $s \in V_m=W$ and for all $\stra_2$ 
we have 
$\Prb_s^{\stra_1,\stra_2}(\diam T_0 \cup \ParityCond) =1$.
Since $T_0=T$, the desired result follows.
\qed
\end{proof}

\begin{lem}{}\label{lemm:step-2-limit2}
For a parity function $p:S \mapsto [1..2n+1]$ we have 
$Z \subs \no \almost_1(U,M,{\ParityCond})$, where $Z$ is defined 
as follows:
\beq 
\nonumber
\mu Y_n. \nu X_n. 
\mu Y_{n-1}. \nu X_{n-1}. \cdots \mu Y_0. \nu X_0 
\left[
\begin{array}{c}
B_{2n+1} \cap \fpreodd_2(0,Y_n,X_n) \\
\cup \\
B_{2n} \cap \fpreeven_2(0,Y_n,X_n,Y_{n-1}) \\
\cup \\ 
B_{2n-1} \cap \fpreodd_2(1,Y_n,X_n,Y_{n-1},X_{n-1}) \\
\cup \\
B_{2n-2} \cap \fpreeven_2(1,Y_n,X_n,Y_{n-1},X_{n-2},Y_{n-2}) \\
\cup \\
B_{2n-3} \cap \fpreodd_2(2,Y_n,X_n,Y_{n-1},X_{n-1},Y_{n-2},X_{n-2}) \\
\vdots \\
B_{2} \cap \fpreeven_2(n-1,Y_n,X_n,Y_{n-1},X_{n-1}, \ldots,Y_1,X_1,Y_0) \\
\cup \\
B_1 \cap \fpreodd_2(n,Y_n,X_n,Y_{n-1},X_{n-1},\ldots,Y_0,X_0)
\end{array}
\right]
\eeq
\end{lem}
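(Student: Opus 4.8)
The plan is to mirror the level-set argument used for the complementation in Step~1 (Lemma~\ref{lemm:step-1-limit2}), now carried out on the outermost least fixpoint $\mu Y_n$ of $Z$. For $k \geq 0$ let $Z_k$ denote the set of states of level $k$ in the $\mu Y_n$-iteration, so that $Z_0 = \emptyset$ and $\bigcup_k Z_k = Z$. I would prove, by induction on $k$, the stronger spoiling statement: for every \emph{memoryless} player-1 strategy $\stra_1 \in \bigstra_1^M$ and every $s \in Z_k$, player~2 has a (memoryless) response $\stra_2$ such that either $Z_{k-1}$ is reached with positive probability or $\coParityCond$ is satisfied with probability~$1$. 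Since $Z_0 = \emptyset$, unfolding this along the levels yields $Z_k \cap \almost_1(U,M,\ParityCond) = \emptyset$ for all $k$, and hence $Z \subseteq \no\almost_1(U,M,\ParityCond)$, as required. Quantifying over all memoryless $\stra_1$ (rather than only uniform ones) is exactly the form needed to invoke the dual almost-expression as an inductive hypothesis, and it is also what makes the witness spoiling strategy for player~2 memoryless.

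To analyse a single level I would substitute $Y_n = Z_{k-1}$ and read $Z_k$ as the inner fixpoint $\nu X_n. \mu Y_{n-1}. \nu X_{n-1}. \cdots \mu Y_0. \nu X_0[\cdots]$ obtained from the body of $Z$. The essential difference from Step~1 is that here the maximal priority $2n+1$ is \emph{odd}, so it does not sit in an isolated $\epre_2$-term; instead every disjunct carries the predicate $\epre_2(Z_{k-1})$ inside its $\fpreodd_2$ or $\fpreeven_2$ operator, beginning with $B_{2n+1} \cap \fpreodd_2(0,Z_{k-1},X_n) = B_{2n+1} \cap (\epre_2(Z_{k-1}) \sdcup \pre_2(X_n))$. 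Fixing a memoryless $\stra_1$, I would make the dichotomy on whether player~2 can force positive probability into $Z_{k-1}$. If it can, the first alternative of the claim holds. If it cannot, then the $\epre_2(Z_{k-1})$ disjunct is unavailable to player~2 in every operator, and by the Remark following the dual operators each $\fpreodd_2(i,Z_{k-1},\ldots)$ (resp. $\fpreeven_2(i,Z_{k-1},\ldots)$) collapses to the corresponding $\lpreeven_2$ (resp. $\lpreodd_2$) operator, with $\fpreodd_2(0,Z_{k-1},X_n)$ collapsing to $\pre_2(X_n)$. Consequently $Z_k$ simplifies to the expression $Z^*$ given by $\nu X_n. \mu Y_{n-1}. \cdots \mu Y_0. \nu X_0$ over a body whose successive disjuncts are $B_{2n+1} \cap \pre_2(X_n)$, $B_{2n} \cap \lpreodd_2(0,X_n,Y_{n-1})$, $B_{2n-1} \cap \lpreeven_2(0,X_n,Y_{n-1},X_{n-1})$, and so on down to $B_1 \cap \lpreeven_2(n-1,X_n,Y_{n-1},\ldots,Y_0,X_0)$.

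It then remains to recognise $Z^*$. Renaming the fixpoint variables ($X_n \mapsto Y_n$, $Y_{n-1} \mapsto X_n$, $X_{n-1} \mapsto Y_{n-1}$, and so on) shows that $Z^*$ has precisely the quantifier structure and the predecessor operators of the dual almost-expression for Step~1, i.e. the dual of Lemma~\ref{lemm:step-1-limit1}, once the priorities are shifted by considering the parity function $p-1 : S \mapsto [0..2n]$ (so that $B_j$ plays the role of priority $j-1$). By the inductive hypothesis on this dual almost-expression we obtain $Z^* \subseteq \{s \in S \mid \forall \stra_1 \in \bigstra_1^M.\ \exists \stra_2 \in \bigstra_2.\ \Prb_s^{\stra_1,\stra_2}(\coParityCond) = 1\}$, since $\Parity(p-1) = \coParityCond$ (the maximal priority seen infinitely often is even under $p-1$ iff it is odd under $p$). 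Combining the two alternatives closes the induction. I expect the main obstacle to be the simplification step and its bookkeeping: verifying that, once $\epre_2(Z_{k-1})$ is discarded, the collapsed formula matches the Step~1 dual almost-expression \emph{term by term} (correct predecessor operator, correct priority under the shift, correct fixpoint variable under the renaming), and justifying rigorously that the ``player~2 cannot reach $Z_{k-1}$ with positive probability'' case legitimately removes the $\epre_2$ disjunct simultaneously from all operators. This is where the odd maximal priority, folded into $\fpreodd_2(0,\cdot,\cdot)$ rather than isolated as in Step~1, demands the most care.
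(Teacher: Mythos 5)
Your proposal is correct and follows essentially the same route as the paper's proof: the level-set induction on the outer $\mu Y_n$, the dichotomy on whether positive probability of reaching $Z_{k-1}$ can be forced, the collapse of $\fpreodd_2/\fpreeven_2$ to $\lpreeven_2/\lpreodd_2$ (with $\fpreodd_2(0,Z_{k-1},X_n)$ becoming $\pre_2(X_n)$) once the $\epre_2(Z_{k-1})$ disjunct is unavailable, and the identification of the resulting expression with the dual almost-expression of Step~1 under the shifted parity function $p-1:S\mapsto[0..2n]$ with $\Parity(p-1)=\coParityCond$. The strengthening to spoiling against all memoryless player-1 strategies is exactly the form the paper uses to make the inductive hypothesis applicable.
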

\begin{proof}
For $k \geq 0$, let $Z_k$ be the set of states of level $k$ in the above 
$\mu$-calculus expression.
We will show that in $Z_k$ player~2 can ensure that either $Z_{k-1}$ 
is reached with positive probability or else $\coParityCond$ is satisfied
with probability~1.
Since $Z_0=\emptyset$, it would follow by induction that 
$Z_k \cap \almost_1(U,M,{\ParityCond})=\emptyset$ and the desired result 
will follow.

 We obtain of $Z_k$ from $Z_{k-1}$ as follows:
\beq 
\nonumber
\nu X_n. \mu Y_{n-1}. \nu X_{n-1}. \cdots \mu Y_0. \nu X_0 
\left[
\begin{array}{c}
B_{2n+1} \cap \fpreodd_2(0,Z_{k-1},X_n) \\
\cup \\
B_{2n} \cap \fpreeven_2(0,Z_{k-1},X_n,Y_{n-1}) \\
\cup \\ 
B_{2n-1} \cap \fpreodd_2(1,Z_{k-1},X_n,Y_{n-1},X_{n-1}) \\
\cup \\
B_{2n-2} \cap \fpreeven_2(1,Z_{k-1},X_n,Y_{n-1},X_{n-2},Y_{n-2}) \\
\cup \\
B_{2n-3} \cap \fpreodd_2(2,Z_{k-1},X_n,Y_{n-1},X_{n-1},Y_{n-2},X_{n-2}) \\
\vdots \\
B_{2} \cap \fpreeven_2(n-1,Z_{k-1},X_n,Y_{n-1},X_{n-1}, \ldots,Y_1,X_1,Y_0) \\
\cup \\
B_1 \cap \fpreodd_2(n,Z_{k-1},X_n,Y_{n-1},X_{n-1},\ldots,Y_0,X_0)
\end{array}
\right]
\eeq
If player~1 risks into moving to $Z_{k-1}$ with positive probability, then 
the inductive case is proved as $Z_{k-1}$ is reached with positive probability.
If the probability of reaching to $Z_{k-1}$ is not positive, then the 
following conditions hold:
\begin{itemize}
\item If the probability to reach $Z_{k-1}$ is not positive, then the
predicate $\epre_2(Z_{k-1})$ vanishes from the predecessor operator 
$\fpreodd_2(i,Z_{k-1},X_{n},Y_{n-1},\ldots,Y_{n-i},X_{n-i})$, and thus 
the operator simplifies to the simpler predecessor operator 
$\lpreeven_2(i-1,X_{n},Y_{n-1},\ldots,Y_{n-i},X_{n-i})$.
\item If the probability to reach $Z_{k-1}$ is not positive, then the 
$\epre_2(Z_{k-1})$ vanishes from the predecessor operator 
$\fpreeven_2(i,Z_{k-1},X_{n},Y_{n-1},\ldots,Y_{n-i},X_{n-i},Y_{n-i-1})$, and 
thus the operator simplifies to the predecessor operator
$\lpreodd_2(i,X_{n},Y_{n-1},\ldots,Y_{n-i},X_{n-i},Y_{n-i-1})$. 
\end{itemize}
Hence either the probability to reach $Z_{k-1}$ is positive, or the probability to reach $Z_{k-1}$ is not positive,
then the above $\mu$-calculus expression simplifies to
\beq 
\nonumber
Z^*= 
\nu X_n. \mu Y_{n-1}. \nu X_{n-1}. \cdots  \mu Y_0. \nu X_0 
\left[
\begin{array}{c}
B_{2n+1} \cap \pre_2(X_n) \\
\cup \\
B_{2n} \cap \lpreodd_2(0,X_n,Y_{n-1}) \\
\cup \\ 
B_{2n-1} \cap \lpreeven_2(0,X_n,Y_{n-1},X_{n-1}) \\
\cup \\
B_{2n-2} \cap \lpreodd_2(1,X_n,Y_{n-1},X_{n-2},Y_{n-2}) \\
\cup \\
B_{2n-3} \cap \lpreeven_2(1,X_n,Y_{n-1},X_{n-1},Y_{n-2},X_{n-2}) \\
\vdots \\
B_{2} \cap \lpreodd_2(n-1,X_n,Y_{n-1},X_{n-1}, \ldots,Y_1,X_1,Y_0) \\
\cup \\
B_1 \cap \lpreeven_2(n-1,X_n,Y_{n-1},X_{n-1},\ldots,Y_0,X_0)
\end{array}
\right].
\eeq
We now consider the parity function $p-1:S \mapsto [0 .. 2n]$ 
and by the correctness of the dual almost-expression for step~1  
(Lemma~\ref{lemm:step-1-limit1}) (with the roles of player~1 and player~2 
exchanged and player~2 plays against memoryless strategies for player~1, as in 
Lemma~\ref{lemm:step-1-limit2}) 
we have $Z^* \subs \set{s \in S \mid \forall \stra_1 \in \bigstra_1^M. \exists \stra_2 \in \bigstra_2.\ \Pr_s^{\stra_1,\stra_2}(\coParityCond)=1}$ 
(since $\coParityCond=\Parity(p-1)$).
Hence the result follows.
\qed
\end{proof}

\medskip\noindent{\bf Correctness of step~3.}
The correctness of step~3 is similar to correctness of step~1. 
Below we present the proof sketches (since they are similar to step~1).

\begin{lem}{}\label{lemm:step-3-limit1}
For a parity function $p:S \mapsto [1..2n+2]$, and for all $T \subs S$, we have 
$W \subs \almost_1(U,M,{\ParityCond \cup \diam T})$, where $W$ is defined 
as follows:
\beq 
\nonumber
\nu Y_n. \mu X_n. 
\nu Y_{n-1}. \mu X_{n-1}. \cdots  \nu Y_0. \mu X_0 
\left[
\begin{array}{c}
T \\
\cup \\
B_{2n+2} \cap \pre_1(Y_n) \\
\cup \\
B_{2n+1} \cap \lpreodd_1(0,Y_n,X_n) \\
\cup \\
B_{2n} \cap \lpreeven_1(0,Y_n,X_n,Y_{n-1}) \\
\cup \\ 
B_{2n-1} \cap \lpreodd_1(1,Y_n,X_n,Y_{n-1},X_{n-1}) \\
\cup \\
B_{2n-2} \cap \lpreeven_1(1,Y_n,X_n,Y_{n-1},X_{n-2},Y_{n-2}) \\
\cup \\
B_{2n-3} \cap \lpreodd_1(2,Y_n,X_n,Y_{n-1},X_{n-1},Y_{n-2},X_{n-2}) \\
\vdots \\
B_{2} \cap \lpreeven_1(n-1,Y_n,X_n,Y_{n-1},X_{n-1}, \ldots,Y_1,X_1,Y_0) \\
\cup \\
B_1 \cap \lpreodd_1(n,Y_n,X_n,Y_{n-1},X_{n-1},\ldots,Y_0,X_0)
\end{array}
\right]
\eeq
\end{lem}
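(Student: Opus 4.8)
The plan is to follow the proof of Lemma~\ref{lemm:step-1-limit1} (step~1) essentially verbatim, since step~3 likewise adds a maximal \emph{even} priority --- now $2n+2$ --- on top of an almost-expression whose inductive base is step~2 (priorities $[1..2n+1]$, Lemma~\ref{lemm:step-2-limit1}) rather than case~1. The first observation to record is that the $\mu$-calculus formula of the lemma is exactly the almost-expression of Lemma~\ref{lemm:step-2-limit1} with the single extra disjunct $B_{2n+2} \cap \pre_1(Y_n)$ inserted under the outermost greatest fixpoint $\nu Y_n$; the fixpoint structure $\nu Y_n. \mu X_n. \cdots \nu Y_0. \mu X_0$ is otherwise unchanged.

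First I would rewrite $W$. Since $W$ is the value of the outermost fixpoint $Y_n$, I replace $Y_n$ by $W$ in the new disjunct, turning $B_{2n+2} \cap \pre_1(Y_n)$ into $B_{2n+2} \cap \pre_1(W)$. The resulting expression is precisely the almost-expression of step~2 in which the reachability target $T$ has been replaced by $T' = T \cup (B_{2n+2} \cap \pre_1(W))$. Treating $T'$ as the target set and invoking the inductive hypothesis (Lemma~\ref{lemm:step-2-limit1}, applied to the parity function restricted to $[1..2n+1]$) then yields
\[
W \subs \almost_1(U,M, \ParityCond \cup \diam(T \cup (B_{2n+2} \cap \pre_1(W)))),
\]
where here $\ParityCond$ denotes the parity condition for priorities $[1..2n+1]$.

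Next I would apply Lemma~\ref{lemm:basic-repeated-reach} with $B = B_{2n+2}$ and the prefix-independent objective $\cala = \ParityCond$ (for $[1..2n+1]$); this converts the repeated one-step reachability of $B_{2n+2} \cap \pre_1(W)$ into the B\"uchi condition $\bo\diam B_{2n+2}$, giving
\[
W \subs \almost_1(U,M, \ParityCond \cup \diam T \cup \bo\diam B_{2n+2}).
\]
It then remains to fold the top priority back in. Since $2n+2$ is the maximal priority and is even, any play visiting $B_{2n+2}$ infinitely often has maximal infinitely-visited priority $2n+2$ and hence satisfies the full parity objective, while a play visiting $B_{2n+2}$ only finitely often has its maximal infinitely-visited priority in $[1..2n+1]$, where the $[1..2n+1]$-parity condition from step~2 already certifies evenness. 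Consequently $\diam T \cup \bo\diam B_{2n+2}$ together with the $[1..2n+1]$-parity condition is contained in the full parity objective $\ParityCond$ for $[1..2n+2]$, and we conclude $W \subs \almost_1(U,M, \ParityCond \cup \diam T)$, as required.

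The step I expect to require the most care is this final parity-inclusion argument: one must check precisely the two-case split above, which is where maximality and evenness of the top priority $2n+2$ are genuinely used, and it is the exact analogue of the closing line of Lemma~\ref{lemm:step-1-limit1}. The only novelty relative to step~1 is that the inductive base is the more involved almost-expression of step~2 instead of case~1; but since Lemma~\ref{lemm:step-2-limit1} is available by the inductive hypothesis, the rewriting and the application of Lemma~\ref{lemm:basic-repeated-reach} carry over unchanged, so I do not anticipate any additional obstacle there.
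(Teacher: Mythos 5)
Your proposal is correct and follows essentially the same route as the paper: the paper's own proof of this lemma is explicitly an ``almost copy-paste'' of the proof of Lemma~\ref{lemm:step-1-limit1}, replacing $Y_n$ by $W$ in the new disjunct $B_{2n+2}\cap\pre_1(Y_n)$, invoking the almost-expression of step~2 (Lemma~\ref{lemm:step-2-limit1}) with target $T\cup(B_{2n+2}\cap\pre_1(W))$, applying Lemma~\ref{lemm:basic-repeated-reach}, and using maximality and evenness of priority $2n+2$ to conclude. Your slightly more explicit two-case check of the final parity inclusion is the same argument the paper compresses into one line.
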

\begin{proof} 
The proof is almost identical to the proof of Lemma~\ref{lemm:step-1-limit1}. 
Similar to step~1 (Lemma~\ref{lemm:step-1-limit1}), we add a max even priority.
The proof of the result is essentially identical to the proof of Lemma~\ref{lemm:step-1-limit1} (almost 
copy-paste of the proof),
the only modification is instead of the correctness of the almost-expression of case~1 
we need to consider the correctness of the almost-expression for step~2 (i.e.,
Lemma~\ref{lemm:step-2-limit1} for parity function $p:S \mapsto [1..2n+1]$).
\qed
\end{proof}

\begin{lem}{}\label{lemm:step-3-limit2}
For a parity function $p:S \mapsto [1..2n+2]$  
we have $Z \subs \no \almost_1(U,M,{\ParityCond})$, where $Z$ is defined 
as follows:
\beq 
\nonumber
\mu Y_n. \nu X_n. 
\mu Y_{n-1}. \nu X_{n-1}. \cdots \mu Y_0. \nu X_0 
\left[
\begin{array}{c}
B_{2n+2} \cap \epre_2(Y_n) \\
\cup \\
B_{2n+1} \cap \fpreodd_2(0,Y_n,X_n) \\
\cup \\
B_{2n} \cap \fpreeven_2(0,Y_n,X_n,Y_{n-1}) \\
\cup \\ 
B_{2n-1} \cap \fpreodd_2(1,Y_n,X_n,Y_{n-1},X_{n-1}) \\
\cup \\
B_{2n-2} \cap \fpreeven_2(1,Y_n,X_n,Y_{n-1},X_{n-2},Y_{n-2}) \\
\cup \\
B_{2n-3} \cap \fpreodd_2(2,Y_n,X_n,Y_{n-1},X_{n-1},Y_{n-2},X_{n-2}) \\
\vdots \\
B_{2} \cap \fpreeven_2(n-1,Y_n,X_n,Y_{n-1},X_{n-1}, \ldots,Y_1,X_1,Y_0) \\
\cup \\
B_1 \cap \fpreodd_2(n,Y_n,X_n,Y_{n-1},X_{n-1},\ldots,Y_0,X_0)
\end{array}
\right]
\eeq
\end{lem}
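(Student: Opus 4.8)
The plan is to mirror the proof of Lemma~\ref{lemm:step-1-limit2} almost verbatim, since this step again adds a maximal \emph{even} priority (now $2n+2$) on top of the formula handled in step~2. For $k \geq 0$ I would let $Z_k$ denote the set of states of level $k$ in the outer least fixpoint $\mu Y_n$ of the given $\mu$-calculus expression, and prove by induction on $k$ that from every state of $Z_k$, against \emph{every} memoryless player~1 strategy, player~2 has a memoryless response that either reaches $Z_{k-1}$ with positive probability or else ensures $\coParityCond$ with probability~$1$. Since $Z_0=\emptyset$, this gives $Z_k \cap \almost_1(U,M,\ParityCond)=\emptyset$ for all $k$, and hence $Z \subs \no\almost_1(U,M,\ParityCond)$ as required.

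For the inductive step I would compute $Z_k$ from $Z_{k-1}$ by substituting $Z_{k-1}$ for $Y_n$, and split the newly added states into two kinds. First, states satisfying $B_{2n+2}\cap\epre_2(Z_{k-1})$: here $\epre_2$ directly witnesses that, against any memoryless player~1 distribution, player~2 can move to $Z_{k-1}$ with positive probability, which closes this case. Second, the states added by the inner expression over $X_n,Y_{n-1},\ldots,Y_0,X_0$; for these I would argue, exactly as in Lemma~\ref{lemm:step-1-limit2}, that if the probability of reaching $Z_{k-1}$ is not positive then the predicate $\epre_2(Z_{k-1})$ drops out of every operator. By the Remark this turns each $\fpreodd_2(i,Z_{k-1},\ldots)$ into $\lpreeven_2(i-1,\ldots)$ and each $\fpreeven_2(i,Z_{k-1},\ldots)$ into $\lpreodd_2(i,\ldots)$, so the inner expression collapses to a reduced expression $Z^*$ whose leading term is $B_{2n+1}\cap\pre_2(X_n)$ and which contains no $B_{2n+2}$ term; in fact $Z^*$ is syntactically identical to the reduced expression appearing in the proof of Lemma~\ref{lemm:step-2-limit2}.

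The one genuinely new bookkeeping point, and the place where I expect the main difficulty, is matching $Z^*$ to an inductive hypothesis. I would consider the shifted priority function $p-1$, which flips the parity of the maximal colour so that $\coParityCond=\Parity(p-1)$; because $Z^*$ contains no state of priority $2n+2$, on $Z^*$ the shift $p-1$ ranges only over $[0..2n]$, and with the renaming $X_n\mapsto Y_n$ the expression $Z^*$ is then exactly the dual almost-expression for step~1 (Lemma~\ref{lemm:step-1-limit1}), with the roles of the two players exchanged and player~2 responding only to memoryless player~1 strategies. Invoking that hypothesis yields $Z^*\subs\set{s\in S\mid \forall \stra_1\in\bigstra_1^M.\ \exists \stra_2\in\bigstra_2.\ \Prb_s^{\stra_1,\stra_2}(\coParityCond)=1}$. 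Combining this with the escape case, and using that any play remaining in $Z^*$ avoids $B_{2n+2}$ so that the co-parity condition for $p$ agrees with that of the truncated colouring, completes the inductive step and hence the lemma. The delicate part is verifying that a \emph{single} memoryless player~2 strategy simultaneously handles both alternatives (reach $Z_{k-1}$ positively, or win co-parity surely) and that the syntactic collapse onto step~1's dual expression is exact, including the degenerate index cases such as $\lpreeven_2(-1,X_n)=\pre_2(X_n)$.
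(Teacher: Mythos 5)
Your proposal is correct and follows essentially the same route as the paper: the paper's own proof is a near copy of Lemma~\ref{lemm:step-1-limit2}, with the single modification that after the $B_{2n+2}\cap\epre_2(Z_{k-1})$ states are handled and the $\epre_2(Z_{k-1})$ predicates drop out, the collapsed inner expression is matched (via the shift $p-1:S\mapsto[0..2n]$) against the dual almost-expression for step~1 (Lemma~\ref{lemm:step-1-limit1}) rather than for case~2 --- exactly the bookkeeping point you identify. Your extra care about the degenerate index $\lpreeven_2(-1,X_n)=\pre_2(X_n)$ and about a single memoryless spoiling strategy covering both alternatives are details the paper leaves implicit, but they resolve as you expect.
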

\begin{proof}
The proof of the result is identical to the proof of Lemma~\ref{lemm:step-1-limit2} (almost copy-paste of
the proof),
the only modification is instead of the correctness of the almost-expression of case~2 
we need to consider the correctness of the almost-expression for step~1 (i.e.,
Lemma~\ref{lemm:step-1-limit1}). 
This is because in the proof, after we rule out states in $B_{2n+2}$ and 
analyze the sub-formula as in Lemma~\ref{lemm:step-1-limit1},  
we consider parity function $p-1:S \mapsto [0 .. 2n]$ and
then invoke the correctness of Lemma~\ref{lemm:step-1-limit1}.
\qed
\end{proof}

\medskip\noindent{\bf Correctness of step~4.}
The correctness of step~4 is similar to correctness of step~2.
Below we present the proof sketches (since they are similar to step~2).

\begin{lem}{}\label{lemm:step-4-limit1}
For a parity function $p:S \mapsto [0..2n+1]$, and for all $T \subs S$, we have 
$W \subs \almost_1(U,M,{\ParityCond \cup \diam T})$, where $W$ is defined 
as follows:
\beq 
\nonumber
\nu Y_{n+1}.  \mu X_{n+1}. \cdots \nu Y_1. \mu X_1. \nu Y_0. 
\left[
\begin{array}{c}
T \\
\cup \\
B_{2n+1} \cap \lpreodd_1(0,Y_{n+1},X_{n+1}) \\
\cup \\
B_{2n} \cap \lpreeven_1(0,Y_{n+1},X_{n+1},Y_n) \\
\cup \\
B_{2n-1} \cap \lpreodd_1(1,Y_{n+1},X_{n+1},Y_n,X_n) \\
\cup \\ 
B_{2n-2} \cap \lpreeven_1(1,Y_{n+1},X_{n+1},Y_n,X_n,Y_{n-1}) \\
\cup \\
B_{2n-3} \cap \lpreodd_1(2,Y_{n+1},X_{n+1},Y_n,X_n,Y_{n-1},X_{n-1}) \\
\cup \\
B_{2n-4} \cap \lpreeven_1(2,Y_{n+1},X_{n+1},Y_n,X_n,Y_{n-1},X_{n-1},Y_{n-2}) \\
\vdots \\
B_{1} \cap \lpreodd_1(n,Y_{n+1},X_{n+1},Y_n,X_n, \ldots,Y_1,X_1) \\
\cup \\
B_0 \cap \lpreeven_1(n,Y_{n+1},X_{n+1},Y_n,X_n,\ldots,Y_1,X_1,Y_0)
\end{array}
\right]
\eeq
\end{lem}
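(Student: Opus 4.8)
The plan is to follow the template of the proof of Lemma~\ref{lemm:step-2-limit1}, since step~4 adds a maximal \emph{odd} priority exactly as step~2 did; the only difference is that the inner formula is now the almost-expression for step~1 (parity function $p:S\mapsto[0..2n]$, Lemma~\ref{lemm:step-1-limit1}) rather than the almost-expression for case~2. Intuitively, the displayed $\mu$-calculus formula is obtained from the almost-expression for step~1 by adding the quantifier alternation $\nu Y_{n+1}.\mu X_{n+1}$, prepending $\apre_1(Y_{n+1},X_{n+1})\sdcup$ to every $\lpreodd_1$ and $\lpreeven_1$ operator (so that, for instance, $B_{2n}\cap\pre_1(Y_n)$ becomes $B_{2n}\cap\lpreeven_1(0,Y_{n+1},X_{n+1},Y_n)$), and adding the top term $B_{2n+1}\cap\lpreodd_1(0,Y_{n+1},X_{n+1})$.

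First I would rewrite $W$: since $W$ is the outermost fixpoint $Y_{n+1}$, I replace $Y_{n+1}$ by $W$ everywhere, turning the formula into a least fixpoint in $X_{n+1}$ that computes $W$ as the limit of an increasing sequence $W_0=T,W_1,W_2,\ldots$. The crucial observation, exactly as in step~2, is that the top term $B_{2n+1}\cap\lpreodd_1(0,W,X_{n+1})=B_{2n+1}\cap\apre_1(W,X_{n+1})$ does not mention the inner variables $Y_n,X_n,\ldots,Y_0$; hence each $B_{2n+1}$-state can be added one at a time, independently of the $B_{\leq 2n}$-states not yet in $W$. This lets me reformulate the computation of $W$ as an increasing chain $T=T_0\subset T_1\subset\cdots\subset T_m=W$ in which each block $L_i=T_i\setm T_{i-1}$ is either (i)~a singleton $\{s\}$ with $s\in\apre_1(W,T_{i-1})\cap B_{2n+1}$, or (ii)~a set of $B_{\leq 2n}$-states satisfying the residual inner formula, which with $Y_{n+1}=W$ and $X_{n+1}=T_{i-1}$ is precisely the step~1 almost-expression for $[0..2n]$ with every operator modified by the $\apre_1(W,T_{i-1})\sdcup$ prefix.

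Then I would prove, by induction on $i\in\{1,\ldots,m\}$ over the suffix blocks $V_i=W\setm T_{m-i}$, that for every $s\in V_i$ there is a uniform memoryless $\stra_1$ with $\Prb_s^{\stra_1,\stra_2}(\diam T_{m-i}\cup\ParityCond)=1$ for all $\stra_2\in\bigstra_2$; as established via Lemma~\ref{lemm:dual}, a uniform distribution always suffices to witness the relevant predecessor operator. For $V_i\setm V_{i-1}\subs B_{2n+1}$ (a singleton $\{s\}$ with $s\in\apre_1(W,T_{m-i})$) I apply the basic $\apre$ principle (Lemma~\ref{lemm:basicapre}) with $Z=W$, $X=T_{m-i}$, $Z\setm Y=V_{i-1}$, and $\cala=\ParityCond$. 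For $V_i\setm V_{i-1}\subs B_{\leq 2n}$ I split on player~2's response to player~1's witnessing distribution: if player~2 plays so that the $\apre_1(W,T_{m-i})$ disjunct is satisfied, the argument reduces to the previous case and $T_{m-i}$ is reached with probability~1; otherwise the $\apre_1(W,T_{m-i})$ disjunct drops out and the modified operators collapse to the genuine step~1 almost-expression for $[0..2n]$, so the inductive hypothesis (Lemma~\ref{lemm:step-1-limit1}) gives $L_i\subs\almost_1(U,M,\ParityCond)$. Since player~1 plays a single uniform distribution realizing the $\sdcup$, whichever disjunct player~2 forces player~1 still wins, yielding $\Prb_s^{\stra_1,\stra_2}(\diam T_{m-i}\cup\ParityCond)=1$. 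Taking $i=m$ and using $T_0=T$ gives the claim.

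The main obstacle I anticipate is the $B_{\leq 2n}$ case, specifically making rigorous that player~1's single uniform distribution simultaneously handles both disjuncts of the $\sdcup$: the distribution must guarantee that every player~2 response either drives the play toward $T_{m-i}$ (the reachability disjunct) or keeps the play inside the block where the step~1 parity guarantee applies, and player~1 cannot control which of the two player~2 selects. This is precisely the decoupling difficulty discussed for concurrent parity games, and it is handled, just as in step~2, by the $\sdcup$ semantics of the predecessor operators together with the basic $\apre$ principle, which ensures that the maximal odd priority $B_{2n+1}$ cannot be visited infinitely often unless a strictly lower block (and ultimately $T$) is reached almost surely.
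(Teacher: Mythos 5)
Your proposal is correct and follows exactly the route the paper takes: the paper's proof of this lemma is precisely "repeat the argument of Lemma~\ref{lemm:step-2-limit1}, replacing the appeal to the case~2 almost-expression by the step~1 almost-expression (Lemma~\ref{lemm:step-1-limit1})," and your expansion of that template — the rewriting of $Y_{n+1}$ as $W$, the one-at-a-time addition of $B_{2n+1}$-states via $\apre_1(W,T_{i-1})$, the induction over the blocks $V_i$ using the basic $\apre$ principle and the collapse of the modified operators when the $\apre_1(W,T_{m-i})$ disjunct is ruled out — matches the paper's intended argument step for step.
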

\begin{proof}
Similar to step~2 (Lemma~\ref{lemm:step-2-limit1}), we add a max odd priority.
The proof of the result is essentially identical to the proof of Lemma~\ref{lemm:step-2-limit1}
(almost copy-paste of the proof),
the only modification is instead of the correctness of the almost-expression of case~2 
we need to consider the correctness of the almost-expression for step~1 (i.e.,
Lemma~\ref{lemm:step-1-limit1} for parity function $p:S \mapsto [0..2n]$).
\qed
\end{proof}

\begin{lem}{}\label{lemm:step-4-limit2}
For a parity function $p:S \mapsto [0..2n+1]$  
we have $Z \subs \no \almost_1(U,M,{\ParityCond})$, where $Z$ is defined 
as follows:
\beq 
\nonumber
\mu Y_{n+1}.  \nu X_{n+1}. \cdots \mu Y_1. \nu X_1. \mu Y_0. 
\left[
\begin{array}{c}
B_{2n+1} \cap \fpreodd_2(0,Y_{n+1},X_{n+1}) \\
\cup \\
B_{2n} \cap \fpreeven_2(0,Y_{n+1},X_{n+1},Y_n) \\
\cup \\
B_{2n-1} \cap \fpreodd_2(1,Y_{n+1},X_{n+1},Y_n,X_n) \\
\cup \\ 
B_{2n-2} \cap \fpreeven_2(1,Y_{n+1},X_{n+1},Y_n,X_n,Y_{n-1}) \\
\cup \\
B_{2n-3} \cap \fpreodd_2(2,Y_{n+1},X_{n+1},Y_n,X_n,Y_{n-1},X_{n-1}) \\
\cup \\
B_{2n-4} \cap \fpreeven_2(2,Y_{n+1},X_{n+1},Y_n,X_n,Y_{n-1},X_{n-1},Y_{n-2}) \\
\vdots \\
B_{1} \cap \fpreodd_2(n,Y_{n+1},X_{n+1},Y_n,X_n, \ldots,Y_1,X_1) \\
\cup \\
B_0 \cap \fpreeven_2(n,Y_{n+1},X_{n+1},Y_n,X_n,\ldots,Y_1,X_1,Y_0)
\end{array}
\right]
\eeq
\end{lem}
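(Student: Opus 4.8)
The plan is to transcribe the argument of Lemma~\ref{lemm:step-2-limit2} almost verbatim, since step~4 and step~2 both adjoin a maximal \emph{odd} priority and differ only in that the range is $[0..2n+1]$ rather than $[1..2n+1]$. First I would let $Z_k$ denote the set of states of level $k$ in the least-fixpoint computation of the outermost variable $Y_{n+1}$ of the stated positive-expression, and prove by induction on $k$, with base case $Z_0=\emptyset$, that from every state of $Z_k$, against every memoryless player-1 strategy, player~2 can guarantee that either $Z_{k-1}$ is reached with positive probability or else $\coParityCond$ holds with probability~1. Since this propagates down to $Z_0=\emptyset$, it gives $Z_k \cap \almost_1(U,M,\ParityCond)=\emptyset$ for every $k$, and hence $Z \subs \no \almost_1(U,M,\ParityCond)$.

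For the inductive step I would substitute $Z_{k-1}$ for $Y_{n+1}$ and examine the inner formula $\nu X_{n+1}.\,\mu Y_n.\,\nu X_n\cdots\mu Y_0$. The dichotomy is exactly that of Lemma~\ref{lemm:step-2-limit2}: if player~1 moves into $Z_{k-1}$ with positive probability the inductive claim is immediate; otherwise the probability of reaching $Z_{k-1}$ is zero, so by the Remark following the definition of the dual operators the conjunct $\epre_2(Z_{k-1})$ disappears from every $\fpreodd_2$ and $\fpreeven_2$, collapsing each $\fpreodd_2(i,Z_{k-1},\ldots)$ to $\lpreeven_2(i-1,\ldots)$ and each $\fpreeven_2(i,Z_{k-1},\ldots)$ to $\lpreodd_2(i,\ldots)$, while the top disjunct $B_{2n+1}\cap\fpreodd_2(0,Z_{k-1},X_{n+1})$ becomes $B_{2n+1}\cap\pre_2(X_{n+1})$.

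The crux is to recognize the collapsed formula $Z^*$. In contrast to Lemma~\ref{lemm:step-2-limit2}, which shifted \emph{down} to meet step~1, the minimal priority $0$ here forces a shift \emph{up}: I would pass to $p+1:S\mapsto[1..2n+2]$, for which $\coParityCond=\Parity(p+1)$ and whose range is exactly that of step~3. Under the evident renaming of fixpoint variables that sends $Z^*$'s outermost $\nu$-variable $X_{n+1}$ to step~3's outermost $\nu Y_n$ and proceeds inward, $Z^*$ is precisely the \emph{dual} almost-expression of step~3, that is, the formula of Lemma~\ref{lemm:step-3-limit1} with $\pre_1,\lpreodd_1,\lpreeven_1$ replaced by $\pre_2,\lpreodd_2,\lpreeven_2$, read with the two players interchanged and player~2 spoiling memoryless player-1 strategies. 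Invoking the inductive hypothesis---the correctness of this dual almost-expression for step~3, just as Lemma~\ref{lemm:step-2-limit2} invoked step~1---yields $Z^*\subs\set{s\in S\mid\forall\stra_1\in\bigstra_1^M.\ \exists\stra_2\in\bigstra_2.\ \Prb_s^{\stra_1,\stra_2}(\coParityCond)=1}$, which settles the second horn of the dichotomy.

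I expect the only genuinely delicate point to be the index bookkeeping in the collapse: one must check that, after deleting $\epre_2(Z_{k-1})$, the decremented arguments of the surviving $\lpreeven_2$ and $\lpreodd_2$ operators together with the $B_j$-labels align exactly with step~3's dual formula written for $p+1$. Once this alignment is confirmed, the remainder is a routine copy of Lemma~\ref{lemm:step-2-limit2}, and the probabilistic reasoning (positive-probability escape to $Z_{k-1}$ versus almost-sure $\coParityCond$) transfers without change.
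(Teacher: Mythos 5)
Your proposal is correct and follows essentially the same route as the paper: the paper's proof of this lemma is exactly a transcription of the argument of Lemma~\ref{lemm:step-2-limit2}, with the invocation of step~1's almost-expression replaced by the dual almost-expression of step~3 (Lemma~\ref{lemm:step-3-limit1}) for the shifted parity function $p+1:S\mapsto[1..2n+2]$. Your identification of the dichotomy (positive-probability escape to $Z_{k-1}$ versus collapse of the $\fpreodd_2/\fpreeven_2$ operators to $\lpreeven_2/\lpreodd_2$) and of the upward priority shift matches the paper's argument.
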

\begin{proof}
The proof of the result is identical to the proof of Lemma~\ref{lemm:step-2-limit2}
(almost copy-paste of the proof),
the only modification is instead of the correctness of the almost-expression of step~1 
(Lemma~\ref{lemm:step-1-limit1}) we need to consider the correctness of the 
almost-expression for step~3 (i.e., Lemma~\ref{lemm:step-3-limit1}).
This is because in the proof, while we analyze the sub-formula as in Lemma~\ref{lemm:step-3-limit1},  
we consider parity function $p+1:S \mapsto [1 .. 2n+2]$ and
then invoke the correctness of Lemma~\ref{lemm:step-3-limit1}. 
\qed
\end{proof}

Observe that above we presented the correctness for the almost-expressions 
for case~1 and case~2, and the correctness proofs for the dual 
almost-expressions are identical.
We now present the duality of the predecessor operators.
We first present some notations required for the proof.

\smallskip\noindent{\bf Destination or possible successors of 
moves and distributions.} 
Given a state $s$ and distributions $\dis_1 \in \sd^s_1$ and 
$\dis_2 \in \sd^s_2$ we denote by 
$\dest(s,\dis_1,\dis_2)=\set{t \in S \mid \pr_2^{\dis_1,\dis_2}(t) >0}$ 
the set of states that have positive probability of transition from $s$ 
when the players play $\dis_1$ and $\dis_2$ at $s$.
For actions $a$ and $b$ we have $\dest(s,a,b) =
\set{t \in S \mid \trans(s,a,b)(t) >0}$ as the set of possible successors 
given $a$ and $b$.
For $A \subs \mov_1(s)$ and $B \subs \mov_2(s)$ we have 
$\dest(s,A,B)= \bigcup_{a\in A, b \in B} \dest(s,a,b)$.

\begin{lem}{(Duality of predecessor operators).}\label{lemm:dual}
The following assertions hold.
\begin{enumerate}
\item Given $X_n \subs X_{n-1} \subs \cdots \subs X_{n-i} \subs Y_{n-i} \subs
Y_{n-i+1} \subs \cdots \subs Y_n$, 
we have 
\[
\fpreodd_2(i,\no Y_n,\no X_n, \ldots,\no Y_{n-i},\no X_{n-i}) =
\no \lpreodd_1(i,Y_n,X_n, \ldots,Y_{n-i},X_{n-i}).
\]

\item Given $X_n \subs X_{n-1} \subs \cdots \subs X_{n-i} \subs Y_{n-i-1}
\subs Y_{n-i} \subs
Y_{n-i+1} \subs \cdots \subs Y_n$, 
we have
\[
\begin{array}{rcl}
& \fpreeven_2 & (i,\no Y_n,\no X_n, \ldots,\no Y_{n-i},\no X_{n-i},\no Y_{n-i-1}) \\
& = &
\no \lpreeven_1(i,Y_n,X_n, \ldots,Y_{n-i},X_{n-i},Y_{n-i-1}).
\end{array}
\]

\item For all $s \in S$, if 
$s \in \lpreodd_1(i,Y_n,X_n, \ldots,Y_{n-i},X_{n-i})$ 
(resp. $s \in \lpreeven_1(i,Y_n,X_n, \ldots,Y_{n-i},X_{n-i},Y_{n-i-1})$), 
then there exists uniform distribution $\dis_1$ to witness that  
$s \in \lpreodd_1(i,Y_n,X_n, \ldots,Y_{n-i},X_{n-i})$ 
(resp. $s \in \lpreeven_1(i,Y_n,X_n, \ldots,Y_{n-i},X_{n-i},Y_{n-i-1})$). 
\end{enumerate}
\end{lem}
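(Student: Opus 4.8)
The plan is to reduce all three assertions to elementary, \emph{support-based} reasoning about the one-step distributions. The key observation is that for any $X \subs S$ the predicate $\pr_s^{\dis_1,\dis_2}(X)=1$ holds iff $\dest(s,a,b) \subs X$ for every $a \in \supp(\dis_1)$ and $b \in \supp(\dis_2)$, while $\pr_s^{\dis_1,\dis_2}(X)>0$ holds iff $\dest(s,a,b) \cap X \neq \emptyset$ for some such $a,b$; both depend only on $\supp(\dis_1)$ and $\supp(\dis_2)$. Assertion~3 then follows immediately: if $\dis_1$ witnesses $s \in \lpreodd_1(\cdots)$ (resp. $\lpreeven_1(\cdots)$), the uniform distribution with support $\supp(\dis_1)$ makes every inner disjunct take the same truth value against every $\dis_2$, and hence also witnesses membership.

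For assertions~1 and~2 I would first expand both sides into quantified Boolean form and apply the purely logical identity $\no(\exists \dis_1 \forall \dis_2\, \Phi) = (\forall \dis_1 \exists \dis_2\, \no\Phi)$ to $\no\lpreodd_1$ (resp. $\no\lpreeven_1$). Since $\fpreodd_2$ (resp. $\fpreeven_2$) is itself of the form $\forall \dis_1 \exists \dis_2\, \Psi$, it then suffices to show that, for each fixed pair $(\dis_1,\dis_2)$, the inner predicate $\no\Phi$ of $\lpreodd_1$ with arguments $(Y_n,X_n,\dots)$ is equivalent to the inner predicate $\Psi$ of $\fpreodd_2$ with the complemented arguments $(\no Y_n,\no X_n,\dots)$. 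Writing $a_j = \pr_s^{\dis_1,\dis_2}(\no X_{n-j})$ and $b_j = \pr_s^{\dis_1,\dis_2}(\no Y_{n-j})$ and using $\pr_s^{\dis_1,\dis_2}(X)=0 \iff \pr_s^{\dis_1,\dis_2}(\no X)=1$ together with $\pr_s^{\dis_1,\dis_2}(Y)<1 \iff \pr_s^{\dis_1,\dis_2}(\no Y)>0$, this collapses to the finite Boolean equivalence
\[
\bigwedge_{j=0}^{i}\big(a_j = 1 \lor b_j > 0\big)
\;\iff\;
\big(b_0>0\big) \lor \Big(\bigvee_{j=1}^{i}\big(b_j>0 \land a_{j-1}=1\big)\Big) \lor \big(a_i=1\big),
\]
where the chain hypothesis $X_n \subs \cdots \subs X_{n-i} \subs Y_{n-i} \subs \cdots \subs Y_n$ yields, after complementation, the monotone chain $a_0 \ge a_1 \ge \cdots \ge a_i \ge b_i \ge \cdots \ge b_0$ of reals in $[0,1]$.

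The hard part will be this Boolean equivalence, and it is exactly here that monotonicity of the chain is essential. For the forward direction I would argue by contradiction: assuming the left conjunction but failure of the right side forces $b_0=0$ and $a_i<1$, and then a telescoping induction (using $b_0=0$ to get $a_0=1$; using $a_0=1$ and the failure of the middle disjunct at $j=1$ to force $b_1=0$, hence $a_1=1$; and so on) yields $a_i=1$, a contradiction. For the backward direction I would split on which right-hand disjunct holds: $b_0>0$ propagates upward along $b_i \ge \cdots \ge b_0$ to make every $b_j>0$; $a_i=1$ propagates downward along $a_0 \ge \cdots \ge a_i$ to make every $a_j=1$; and a middle disjunct $b_{j^\ast}>0 \land a_{j^\ast-1}=1$ combines both, satisfying the conjuncts below $j^\ast$ through the $a$'s and those at or above $j^\ast$ through the $b$'s. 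In every case each left-hand conjunct holds.

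Assertion~2 is handled identically with one extra level: the chain becomes $a_0 \ge \cdots \ge a_i \ge b_{i+1} \ge b_i \ge \cdots \ge b_0$ (the set $Y_{n-i-1}$ contributing $b_{i+1}$), the left side carries the additional conjunct $b_{i+1}>0$ arising from the $\pre_1(Y_{n-i-1})$ term, and the right side carries the additional disjunct $b_{i+1}>0 \land a_i=1$ from $\apre_2(\no X_{n-i},\no Y_{n-i-1})$; the same contradiction-plus-telescoping argument and the same case analysis then go through verbatim. I expect no further obstacle, since assertion~3 requires no chain hypothesis and rests only on the support-based remark, while assertions~1 and~2 hinge entirely on the monotone Boolean identity above.
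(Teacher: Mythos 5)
Your proposal is correct, but it reaches the duality by a genuinely different route than the paper. The paper works combinatorially over action supports: it defines a \emph{good set} $U \subs \mov_1(s)$ by conditions on the sets $\dest(s,U,b)$, shows that a good set yields a uniform witness distribution for $\lpreodd_1$, and that when no good set exists every support $U$ admits a concrete counter-action $c(U)$ satisfying one of three complementary conditions, hence witnessing the complemented $\fpreodd_2$ predicate; part~3 falls out of that construction. You instead (i) push the negation through the quantifier prefix, reducing the set equality to a \emph{pointwise} equivalence of the two inner one-step predicates for every fixed pair $(\dis_1,\dis_2)$, and (ii) verify that equivalence as a finite Boolean identity over the monotone chain $a_0\ge\cdots\ge a_i\ge b_i\ge\cdots\ge b_0$ obtained by complementing the inclusion hypotheses; part~3 becomes a one-line remark that the atoms $\pr_s^{\dis_1,\dis_2}(X)=1$ and $\pr_s^{\dis_1,\dis_2}(X)>0$ are determined by supports alone. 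I checked your telescoping induction (forward direction, which in fact needs no monotonicity) and the three-way case split (backward direction, which is exactly where the chain hypothesis enters), and both go through; for part~2, just state explicitly that the standalone disjunct $a_i=1$ from part~1 is \emph{replaced} by $b_{i+1}>0\land a_i=1$ rather than kept alongside it --- keeping both would break the equivalence (take all $a_j=1$ and all $b_j=0$), though your attribution of the new disjunct to $\apre_2(\no X_{n-i},\no Y_{n-i-1})$ suggests you intend the replacement. Your route is shorter and isolates precisely where each hypothesis is used; the paper's route is heavier but produces the explicit combinatorial witnesses (the good set and the counter-action) in the form it later reuses for the NP $\cap$ coNP membership argument.
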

\begin{proof}
We present the proof for part 1, and the proof for second part is analogous.
To present the proof of the part 1, we first present the proof for the 
case when $n=2$ and $i=2$. This proof already has all the ingredients of the 
general proof. After presenting the proof we present the general case.

\smallskip\noindent{\bf Claim.}
We show that for $X_1 \subs X_0 \subs Y_0 \subs Y_1$ we have 
\[
\epre_2(\no Y_1) \sdcup \apre_2(\no X_1, \no Y_0) \sdcup \pre_2(\no X_0) = 
\no (\apre_1(Y_1, X_1) \sdcup  \apre_1(Y_0,X_0)).
\]
We now present the following two case analysis for the proof.
\begin{enumerate}
\item A subset $U \subs \mov_1(s)$ is \emph{good} if both the following 
conditions hold:
\begin{enumerate}
\item \emph{Condition~1.} For all $b \in \mov_2(s)$ and for all $a \in U$ we
have $\dest(s,a,b) \subs Y_1$ (i.e., $\dest(s,U,b) \subs Y_1$); and 
\item \emph{Condition~2.} For all $b \in \mov_2(s)$ one of the following conditions hold:
\begin{enumerate}
\item either there exists $a \in U$ such that $\dest(s,a,b) \cap X_1 \neq \emptyset$ (i.e., $\dest(s,U,b) \cap X_1 \neq \emptyset$); 
or 
\item for all $a \in U$ we have $\dest(s,a,b) \subs Y_0$ (i.e., 
$\dest(s,U,b) \subs Y_0$) and for some $a \in U$ 
we have $\dest(s,a,b) \cap X_0 \neq \emptyset$ (i.e., 
$\dest(a,U,b) \cap X_0 \neq \emptyset$).
\end{enumerate}
\end{enumerate}
We show that if there is a good set $U$, then 
$s \in \apre_1(Y_1,X_1) \sdcup \apre_1(Y_0,X_0)$. 
Given a good set $U$, consider the \emph{uniform} distribution $\dis_1$ 
that plays all actions in $U$ uniformly at random.
Consider an action $b \in \mov_2(s)$ and the following assertions hold:
\begin{enumerate}
\item By condition~1 we have $\dest(s,\dis_1,b) \subs Y_1$.
\item By condition~2 we have either (i)~$\dest(s,\dis_1,b) \cap X_1 \neq \emptyset$
(if condition 2.a holds); 
or (ii)~$\dest(s,\dis_1,b) \subs Y_0$, and $\dest(s,\dis_1,b) \cap X_0 \neq \emptyset$
(if condition 2.b holds).
\end{enumerate}
It follows that in all cases we have 
(i)~either $\dest(s,\dis_1,b) \subs Y_1$ and $\dest(s,\dis_1,b) \cap X_1 
\neq \emptyset$, or 
(ii)~$\dest(s,\dis_1,b) \subs Y_0$ and $\dest(s,\dis_1,b) \cap X_0 
\neq \emptyset$.
It follows that $\dis_1$ is a uniform distribution witness to show that 
$s \in \apre_1(Y_1,X_1) \sdcup \apre_1(Y_0,X_0)$.

\item We now show that if there is no good set $U$, then 
$s \in \epre_2(\no Y_1) \sdcup \apre_2(\no X_1, \no Y_0) \sdcup \pre_2(\no X_0)$. 
Given a set $U$, if $U$ is not good, then (by simple complementation argument) 
one of the following conditions must hold:
\begin{enumerate}
\item \emph{Complementary Condition~1.} There exists $b \in \mov_2(s)$ and 
$a \in U$ such that 
$\dest(s,a,b) \cap \no Y_1 \neq \emptyset$; or 
\item \emph{Complementary Condition~2.} There exists $b \in \mov_2(s)$ such that 
both the following conditions hold:
\begin{enumerate}
\item for all $a \in U$ we have $\dest(s,a,b) \subs \no X_1$; 
and
\item there exists $a \in U$ such that $\dest(s,a,b) \cap \no Y_0 \neq \emptyset$ 
or for all $a \in U$ we have $\dest(s,a,b) \subs \no X_0$.
\end{enumerate}
\end{enumerate}
Since there is no good set, for every set $U \subs \mov_1(s)$, there is a counter action $b=c(U) 
\in \mov_2(s)$, such that one of the complementary conditions hold. 
Consider a distribution $\dis_1$ for player~1, and 
let $U=\supp(\dis_1)$. Since $U$ is not a good set, consider a counter action
$b=c(U)$ satisfying the complementary conditions. 
We now consider the following cases:
\begin{enumerate}
\item If complementary condition 1 holds, then $\dest(s,\dis_1,b) \cap \no Y_1 \neq \emptyset$ 
(i.e., $\epre_2(\no Y_1)$ is satisfied). 
\item Otherwise complementary condition 2 holds, and by 2.a we have 
$\dest(s,\dis_1,b) \subs \no X_1$. 
\begin{enumerate}
\item if 
there exists $a \in U$ such that $\dest(s,a,b) \cap \no Y_0 \neq \emptyset$, 
then $\dest(s,\dis_1,b) \cap \no Y_0 \neq \emptyset$ (hence $\apre_2(\no X_1, \no Y_0)$ holds); 
\item otherwise for all $a \in U$ we have $\dest(s,a,b) \subs \no X_0$, hence 
$\dest(s,\dis_1,b) \subs \no X_0$ (hence $\pre_2(\no X_0)$ holds).
\end{enumerate}
\end{enumerate}
The claim follows.
\end{enumerate}

\smallskip\noindent{\bf General case.} We now present the result for 
the general case which is a generalization of the previous case.
We present the details here, and will omit it in later proofs, where 
the argument is similar.
Recall that we have the following inclusion: 
$X_n \subs X_{n-1} \subs \ldots \subs X_{n-i} \subs Y_{n-i} \subs \ldots Y_{n-1} \subs Y_n$.
\begin{enumerate}
\item A subset $U \subs \mov_1(s)$ is \emph{good} if both the following 
conditions hold: for all $b \in \mov_2(s)$ 
\begin{enumerate}
\item \emph{Condition~1.} 
For all $a \in U$ we have $\dest(s,a,b) \subs Y_n$ (i.e., 
$\dest(s,U,b) \subs Y_n$); and 
\item \emph{Condition~2.} 
There exists $0 \leq j \leq i$, such that 
for all $a \in U$ we have $\dest(s,a,b) \subs Y_{n-j}$ (i.e., 
$\dest(s,U,b) \subs Y_{n-j}$), and 
for some $a \in U$ we have $\dest(s,a,b) \cap X_{n-j} \neq \emptyset$ 
(i.e., $\dest(s,U,b) \cap X_{n-j} \neq \emptyset$). 
\end{enumerate}
We show that if there is a good set $U$, then 
$s \in \apre_1(i,Y_n,X_n,\ldots, Y_{n-i},X_{n-i})$. 
Given a good set $U$, consider the \emph{uniform} distribution $\dis_1$ 
that plays all actions in $U$ uniformly at random.
Consider an action $b \in \mov_2(s)$ and the following assertions hold:
\begin{enumerate}
\item By condition~1 we have $\dest(s,\dis_1,b) \subs Y_{n}$.
\item By condition~2 we have 
for some $0 \leq j \leq i$, we have $\dest(s,\dis_1,b) \subs Y_{n-j}$, 
and $\dest(s,\dis_1,b) \cap X_{n-j} \neq \emptyset$ 
(i.e., $\apre_1(Y_{n-j}, X_{n-j})$ holds).
\end{enumerate}
It follows that $\dis_1$ is a uniform distribution witness to show that 
$s \in \apre_1(Y_{n},X_{n}, \ldots,Y_{n-i},X_{n-i})$.

\item We now show that if there is no good set $U$, then 
$s \in \fpreodd_2(i,\no Y_n, \no X_n, \ldots, \no Y_{n-i},\no X_{n-i})$. 
Given a set $U$, if $U$ is not good, then we show that one of the following conditions 
must hold: there exists $b \in \mov_2(s)$ such that
\begin{enumerate}
\item \emph{Complementary Condition~1 (CC1).} 
$\dest(s,U,b) \cap \no Y_n \neq \emptyset$; or 
\item \emph{Complementary Condition~2 (CC2).} 
there exists $0\leq j <i$ such that 
$\dest(s,U,b) \subs \no X_{n-j}$ and 
$\dest(s,U,b) \cap \no Y_{n-j-1} \neq \emptyset$; or
\item \emph{Complementary Condition~3 (CC3).} 
$\dest(s,U,b) \subs \no X_{n-i}$.
\end{enumerate}
Consider a set $U$ that is not good, and let $b$ be an action that witness that 
$U$ is not good. We show that $b$ satisfies one of the complementary conditions. 
\begin{itemize}
\item If $\dest(s,U,b) \cap \no Y_n \neq \emptyset$, then we are done as CC1 is 
satisfied. 
Otherwise, we have $\dest(s,U,b) \subs Y_n$, then we must have $\dest(s,U,b) 
\subs \no X_n$ (otherwise the action $b$ would satisfy the condition 
$\dest(s,U,b) \subs Y_n$ and $\dest(s,U,b) \cap X_n \neq \emptyset$, and cannot be 
a witness that $U$ is not good).
Now we continue: if $\dest(s,U,b) \cap \no Y_{n-1} \neq \emptyset$, then we are done, as we have a 
witness that $\dest(s,U,b) \subs \no X_n$ and $\dest(s,U,b) \cap \no Y_{n-1} \neq \emptyset$.
If $\dest(s, U, b) \subs Y_{n-1}$, then again since $b$ is witness to show that 
$U$ is not good, we must have $\dest(s,U,b) \subs \no X_{n-1}$. 
We again continue, and if we have 
$\dest(s,U,b) \cap \no Y_{n-2} \neq \emptyset$, we are done, or else we 
continue and so on.
Thus we either find a witness $0\leq j< i$ to satisfy CC2, or else in the end 
we have that $\dest(s,U,b) \subs \no X_{n-i}$ (satisfies CC3).
\end{itemize}
Since there is no good set, for every set $U \subs \mov_1(s)$, there is a counter action $b=c(U) 
\in \mov_2(s)$, such that one of the complementary conditions hold. 
Consider a distribution $\dis_1$ for player~1, and 
let $U=\supp(\dis_1)$. Since $U$ is not a good set, consider a counter action
$b=c(U)$ satisfying the complementary conditions. 
We now consider the following cases:
\begin{enumerate}
\item If CC1 1 holds, then $\dest(s,U,b) \cap \no Y_n \neq \emptyset$ 
(hence also $\dest(s,\dis_1,b) \cap \no Y_n \neq \emptyset$) 
(i.e., $\epre_2(\no Y_n)$ is satisfied). 
\item Else if CC2 holds, then for some $0\leq j <i$ and we have 
$\dest(s,U,b) \subs \no X_{n-j}$ and $\dest(s,U,b) \subs Y_{n-j-1}$
(hence also $\dest(s,\dis_1,b) \subs \no X_{n-j}$ and $\dest(s,\dis_1,b) \subs Y_{n-j-1})$
(i.e.,  $\apre_2(\no X_{n}, \no Y_{n-1}) \sdcup \apre_2(\no X_{n-1},\no Y_{n-2}) \sdcup \ldots 
\sdcup \apre_2(\no X_{n-i+1},\no Y_{n-i})$ holds). 
\item Otherwise CC3 holds and we have $\dest(s,U,b) \subs \no X_{n-i}$, 
(hence also $\dest(s,\dis_1,b) \subs \no X_{n-i}$)
(i.e., $\pre_2(\no X_{n-i})$ holds).
\end{enumerate}
The claim follows.
\end{enumerate}

The result for part~3 follows as in the above proofs we have always
constructed uniform witness distribution. 
\qed
\end{proof}

\medskip\noindent{\bf Characterization of $\almost_1(U,M,\Phi)$ set.}
From Lemmas~\ref{lemm:step-1-limit1}---\ref{lemm:step-4-limit2}, and
the duality of predecessor operators (Lemma~\ref{lemm:dual})
we obtain the following result characterizing the almost-sure winning 
set for uniform memoryless strategies for parity objectives.

\begin{theo}{}\label{theo-uniform-memless}
For all concurrent game structures $\game$ over state space $S$, 
for all parity objectives $\ParityCond$ for player~1, the following assertions hold.
\begin{enumerate}
\item If $p:S \mapsto [0..2n-1]$, then $\almost_1(U,M,\ParityCond)=W$, where 
$W$ is defined as follows
\beq\label{eq-rabin-all-limit-1}
\nu Y_n.  \mu X_n. 
\cdots. \nu Y_1. \mu X_1. \nu Y_0. \\ 
\left[
\begin{array}{c}
B_{2n-1} \cap \lpreodd_1(0,Y_n,X_n) \\
\cup \\ 
B_{2n-2} \cap \lpreeven_1(0,Y_n,X_n,Y_{n-1}) \\
\cup \\
B_{2n-3} \cap \lpreodd_1(1,Y_n,X_n,Y_{n-1},X_{n-1}) \\
\cup \\
B_{2n-4} \cap \lpreeven_1(1,Y_n,X_n,Y_{n-1},X_{n-1},Y_{n-2}) \\
\vdots \\
B_{1} \cap \lpreodd_1(n\!-\!1,Y_n,X_n, \ldots,Y_1,X_1) \\
\cup \\
B_0 \cap \lpreeven_1(n\!-\!1,Y_n,X_n,\ldots,Y_1,X_1,Y_0)
\end{array}
\right]
\eeq
and $B_i=p^{-1}(i)$ is the set of states with priority~$i$, for 
$i\in [0..2n-1]$.

\item If $p:S \mapsto [1..2n]$, then $\almost_1(U,M,\ParityCond)=W$, where
$W$ is defined as follows 
\beq \label{eq-rabin-all-limit-2}
\nu Y_{n-1}. \mu X_{n-1}. \cdots. 
\nu Y_0. \mu X_0 
\left[
\begin{array}{c}
B_{2n} \cap \pre_1(Y_{n-1}) \\
\cup \\ 
B_{2n-1} \cap \lpreodd_1(0,Y_{n-1},X_{n-1}) \\
\cup \\
B_{2n-2} \cap \lpreeven_1(0,Y_{n-1},X_{n-2},Y_{n-2}) \\
\cup \\
B_{2n-3} \cap \lpreodd_1(1,Y_{n-1},X_{n-1},Y_{n-2},X_{n-2}) \\
\vdots \\
B_{2} \cap \lpreeven_1(n-2,Y_{n-1},X_{n-1}, \ldots,Y_1,X_1,Y_0) \\
\cup \\
B_1 \cap \lpreodd_1(n-1,Y_{n-1},X_{n-1},\ldots,Y_0,X_0)
\end{array}
\right]
\eeq
and $B_i=p^{-1}(i)$ is the set of states with priority~$i$, for 
$i\in [1..2n]$.

\item 
The set $\almost_1(U,M,\ParityCond)$ can be computed symbolically 
using the expressions (\ref{eq-rabin-all-limit-1}) and (\ref{eq-rabin-all-limit-2}) 
in time $\bigo(|S|^{2n+1} \cdot \sum_{s \in S} 2^{|\mov_1(s) \cup \mov_2(s)|})$.

\item Given a state $s \in S$ whether $s \in \almost_1(U,M,\ParityCond)$ 
can be decided in NP $\cap$ coNP.
\end{enumerate}
\end{theo}

\smallskip\noindent{\bf Ranking function for $\mu$-calculus formula.} 
Given a $\mu$-calculus formula of alternation-depth (the nesting depth of 
$\nu$-$\mu$-operators), the \emph{ranking} function maps every state to a tuple of 
$d$-integers, such that each integer is at most  the size of the state space.
For a state that satisfies the $\mu$-calculus formula the tuple of integers denote 
iterations of the $\mu$-calculus formula such that the state got included for 
the first time in the nested evaluation of the $\mu$-calculus formula (for details
see~\cite{EJ91,Kozen83mu}).

The NP $\cap$ coNP bound follows directly from the $\mu$-calculus expressions 
as the players can guess the \emph{ranking function} 
of the $\mu$-calculus formula and the support of the uniform distribution at 
every state to witness that the predecessor operator is satisfied, and the 
guess can be verified in polynomial time.
Observe that the computation through $\mu$-calculus formulas is symbolic and 
more efficient than enumeration over the set of all uniform memoryless 
strategies of size $O(\prod_{s\in S} |\mov_1(s) \cup \mov_2(s)|)$ 
(for example, with constant action size and constant $d$, the $\mu$-calculus 
formula is polynomial, whereas enumeration of strategies is exponential).
The $\mu$-calculus formulas of~\cite{EJ91} can be obtained as a special case 
of the $\mu$-calculus formula of Theorem~\ref{theo-uniform-memless} by 
replacing all predecessor operators with the $\pre_1$ predecessor operator. 


\begin{prop}{}
$\almost_1(\IP,\FM,\Phi)=\almost_1(U,\FM,\Phi)=\almost_1(U,M,\Phi)$.
\end{prop}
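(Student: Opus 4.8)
The three sets are related by the easy inclusions first. By strategy inclusion $\bigstra_1^U \subset \bigstra_1^{\IP}$ and $\bigstra_1^M \subset \bigstra_1^{\FM}$, so $\almost_1^G(U,M,\Phi) \subs \almost_1^G(U,\FM,\Phi) \subs \almost_1^G(\IP,\FM,\Phi)$, and Proposition~\ref{prop-uniform-fp} already gives $\almost_1^G(U,M,\Phi) = \almost_1^G(U,\FM,\Phi)$. Hence it suffices to prove the single nontrivial inclusion $\almost_1^G(\IP,\FM,\Phi) \subs \almost_1^G(U,M,\Phi)$, which I establish by contraposition: I fix a state $s \in Z := \no\almost_1^G(U,M,\Phi)$ and an arbitrary infinite-precision finite-memory strategy $\stra_1$, and I exhibit a player-2 strategy keeping the value below~$1$, so that $s \notin \almost_1^G(\IP,\FM,\Phi)$.

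The main device is to flatten the finite memory into the state space. Let $\stra_1$ use finite memory $\mem$ with initial memory $m_0$, deterministic update $\stra_u$ and infinite-precision next-move function $\stra_n$. I build the product concurrent game $\wh{G}$ on state space $S \times \mem$, where at $(s,m)$ player~1 has moves $\mov_1(s)$, player~2 has moves $\mov_2(s)$, the transition follows $\trans$ on the $S$-component and updates the memory by $\stra_u$, and the priority is $\wh{p}(s,m)=p(s)$; set $\wh{\Phi}=\Parity(\wh{p})$. Then $\wh{G}$ is again a finite concurrent parity game, every play of $\wh{G}$ projects to a play of $G$ with identical priorities, and a \emph{memoryless} strategy in $\wh{G}$ is exactly a finite-memory strategy in $G$ with memory structure $(\mem,\stra_u)$; in particular $\stra_1$ is the memoryless strategy $\wh{\stra}_1(s,m)=\stra_n(s,m)$ of $\wh{G}$.

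First I claim $(s,m_0)\in\no\almost_1^{\wh{G}}(U,M,\wh{\Phi})$. Indeed, a \emph{uniform} memoryless almost-sure winning strategy for player~1 in $\wh{G}$ from $(s,m_0)$ would, read back through the correspondence above, be a uniform finite-memory almost-sure winning strategy in $G$ from $s$, placing $s$ in $\almost_1^G(U,\FM,\Phi)=\almost_1^G(U,M,\Phi)$ by Proposition~\ref{prop-uniform-fp} and contradicting $s\in Z$. I now apply the complementation of the uniform-memoryless characterization to $\wh{G}$: Theorem~\ref{theo-uniform-memless} together with the correctness of the positive-expression (Lemmas~\ref{lemm:step-1-limit2}, \ref{lemm:step-2-limit2}, \ref{lemm:step-3-limit2}, \ref{lemm:step-4-limit2}). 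The proofs of these lemmas in fact exhibit, on $\no\almost_1^{\wh{G}}(U,M,\wh{\Phi})$, a player-2 response defeating \emph{every} memoryless player-1 strategy regardless of its precision, since the dual operators $\epre_2,\apre_2,\pre_2$ only constrain player~1's one-step distribution (their shape is ``$\forall \dis_1 \exists \dis_2$''). Applying this to the memoryless infinite-precision strategy $\wh{\stra}_1$ yields a player-2 strategy $\wh{\stra}_2$ in $\wh{G}$ with $\Prb_{(s,m_0)}^{\wh{\stra}_1,\wh{\stra}_2}(\wh{\Phi})<1$, i.e. $\coParityCond$ with positive probability; projecting $\wh{\stra}_2$ back to a finite-memory player-2 strategy $\stra_2$ over $\mem$ in $G$ and using the priority-preserving play correspondence gives $\Prb_s^{\stra_1,\stra_2}(\Phi)<1$. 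As $\stra_1$ was arbitrary, $s\notin\almost_1^G(\IP,\FM,\Phi)$, proving the inclusion.

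The crux, and the only place where genuine work beyond bookkeeping occurs, is invoking the complementation lemmas with player~1 ranging over \emph{all} memoryless (infinite-precision) strategies rather than merely uniform ones; this strength is exactly what the dual predecessor operators already provide, as they let player~2 answer any one-step distribution of player~1. The product construction is the translation that turns ``finite memory for player~1'' into ``memoryless for player~1'' so that this per-distribution spoiling applies, and the subtlety one must verify is that flattening the memory does not enlarge the uniform-memoryless winning region — which is precisely the content of Proposition~\ref{prop-uniform-fp} used in the first claim.
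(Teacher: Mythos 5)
Your proof is correct, but it takes a genuinely different route from the paper's. The paper argues directly and in two lines: fix an almost-sure winning infinite-precision finite-memory strategy $\stra_1$; the resulting player-2 structure $G_{\stra_1}$ is a finite-state MDP, and almost-sure winning in MDPs depends only on the supports of the transition probabilities, so the uniformization $\unif(\stra_1)$ (which induces an equivalent MDP) is still almost-sure winning; this gives $\almost_1(\IP,\FM,\Phi)=\almost_1(U,\FM,\Phi)$, and the rest is Proposition~\ref{prop-uniform-fp}. You instead prove the contrapositive of the one nontrivial inclusion by flattening the finite memory into a product game $\wh{G}$ and invoking the complementation half of the $\mu$-calculus characterization, using the fact that the dual predecessor operators quantify over \emph{all} player-1 distributions, so the spoiling construction on $\no\almost_1^{\wh{G}}(U,M,\wh{\Phi})$ defeats every memoryless strategy irrespective of precision. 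Both arguments are sound. The paper's is far lighter, at the cost of importing the external MDP support-independence fact; yours stays inside the paper's own machinery but leans on the full strength of Theorem~\ref{theo-uniform-memless} and on reading the positive-expression lemmas for more than their stated conclusions (legitimately --- their proofs do establish spoiling against all of $\bigstra_1^M$, and the paper itself uses this strength elsewhere). Your product-plus-spoiling skeleton is essentially the one the paper deploys for the harder limit-sure finite-memory case (Lemma~\ref{lemm:limit-infprec1} and Example~\ref{examp:limit-diff}), and as a by-product it makes explicit the per-strategy memoryless spoiling witness advertised in the introduction; the one thing to be aware of is that this makes the proposition logically downstream of the entire $\mu$-calculus development, whereas the paper's uniformization argument is independent of it.
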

\begin{proof}
Consider a finite-memory strategy $\stra_1$ that is almost-sure winning. 
Once the strategy $\stra_1$ is fixed, we obtain a finite-state MDP and 
in MDPs almost-sure winning is independent of the precise transition 
probabilities~\cite{CY95,KrishThesis}.
Hence the strategy $\stra_1^u$ obtained from $\stra_1$ by uniformization
is also winning.
It follows that $\almost_1(\IP,\FM,\Phi)=\almost_1(U,\FM,\Phi)$.
The result that $\almost_1(U,\FM,\Phi)=\almost_1(U,M,\Phi)$ follows from 
Proposition~\ref{prop-uniform-fp}.
\qed
\end{proof}

It follows from above that uniform memoryless strategies are as powerful as
finite-precision infinite-memory strategies for almost-sure winning.
We now show that infinite-precision infinite-memory strategies are more 
powerful than uniform memoryless strategies.

\begin{examp}{($\almost_1(U,M,\Phi) \subsetneq \almost_1(\IP,\IM,\Phi)$).}\label{examp-counter-almost}
We show with an example that for a concurrent parity game with 
three priorities we have $\almost_1(U,M,\Phi) \subsetneq \almost_1(\IP,\IM,\Phi)$.
Consider the game shown in Fig~\ref{figure:three}. 
The moves available for player~1 and player~2 at $s_0$ is $\set{a,b}$ and 
$\set{c,d}$, respectively.
The priorities are as follows: $p(s_0)=1, p(s_2)=3$ and $p(s_1)=2$. 
In other words, player~1 wins if $s_1$ is visited infinitely often and 
$s_2$ is visited finitely often.
We show that for all uniform memoryless strategy for player~1 there 
is counter strategy for player~2 to ensure that the co-parity condition 
is satisfied with probability~1. 
Consider a memoryless strategy $\stra_1$ for player~1, and the counter strategy 
$\stra_2$ is defined as follows: (i)~if $b \in \supp(\stra_1(s_0))$, then play $d$,
(ii)~otherwise, play $c$.
It follows that (i)~if $b \in \supp(\stra_1(s_0))$, then the closed recurrent set $C$ 
of the Markov chain obtained by fixing $\stra_1$ and $\stra_2$ contains $s_2$, and 
hence $s_2$ is visited infinitely often with probability~1;
(ii)~otherwise, player~1 plays the deterministic memoryless strategy that plays 
$a$ at $s_0$, and the counter move $c$ ensures that only $s_0$ is visited infinitely
often.
It follows from our results that for all finite-memory strategies for player~1, 
player~2 can ensure that player~1 cannot win with probability~1.

\begin{figure}[t]
\begin{center}
\begin{picture}(75,20)(0,0)
\node[Nmarks=n](n0)(40,12){$s_0$}
\node[Nmarks=n](n1)(10,12){$s_1$}
\node[Nmarks=n](n2)(70,12){$s_2$}
\drawloop(n0){$ac$}
\drawedge[ELpos=50, ELside=l, ELdist=0.5, curvedepth=0](n0,n2){$bd$}
\drawedge[ELpos=50, ELside=r, ELdist=0.5, curvedepth=0](n0,n1){$ad,bc$}
\drawedge[curvedepth=-5](n1,n0){}
\drawedge[curvedepth=5](n2,n0){}
\end{picture}
\end{center}
  \caption{Three priority concurrent game}
  \label{figure:three}
\end{figure}

We now show that in the game there is an infinite-memory infinite-precision 
strategy for player~1 to win with probability~1 against all player~2 strategies.
Consider a strategy $\stra_1$ for player~1 that is played in rounds, and a 
round is incremented upon visit to $\set{s_1,s_2}$, and in round $k$ the 
strategy plays action $a$ with probability $1-\frac{1}{2^{k+1}}$ and $b$ with 
probability $\frac{1}{2^{k+1}}$.
For $k \geq 0$, let $\cale_k$ denote the event that the game gets stuck at round 
$k$. 
In round $k$, against any strategy for player~2 in every step there is at least 
probability $\eta_k=\frac{1}{2^{k+1}}>0$ to visit the set $\set{s_1,s_2}$.
Thus the probability to be in round $k$ for $\ell$ steps is at most $(1-\eta_k)^\ell$,
and this is~0 as $\ell$ goes to $\infty$.
Thus we have $\Prb_{s_0}^{\stra_1,\stra_2}(\cale_k)=0$. 
Hence the probability that the game is stuck in some round $k$ is 
\[
\Prb_{s_0}^{\stra_1,\stra_2}(\bigcup_{k \geq 0} \cale_k) \leq 
\sum_{k \geq 0} \Prb_{s_0}^{\stra_1,\stra_2}(\cale_k) =0,
\]
where the last equality follows as the countable sum of probability zero 
event is zero.
It follows that 
$\Prb_{s_0}^{\stra_1,\stra_2}(\bo\diam \set{s_1,s_2})=1$, 
i.e., $\set{s_1,s_2}$ is visited infinitely often with probability~1.
To complete the proof we need to show that $\set{s_2}$ is visited 
infinitely often with probability~0.
Consider an arbitrary strategy for player~2.
We first obtain the probability $u_{k+1}$ that $s_2$ is visited $k+1$ times, 
given it has been visited $k$ times. 
Observe that to visit $s_2$ player~2 must play the action $d$, and thus 
\[
u_{k+1} \leq \frac{1}{2^{k+1}} ( 1 + \frac{1}{2} + \frac{1}{4} + \ldots),
\]
where in the infinite sum is obtained by considering the number of
consecutive visits to $s_1$ before $s_2$ is visited.
The explanation of the infinite sum is as follows: the probability to 
reach $s_2$ for $k+1$-th time after the $k$-th visit 
(i)~with only one visit to $s_1$ is $\frac{1}{2^{k+1}}$,
(ii)~with two visits to $s_1$ is $\frac{1}{2^{k+2}}$ (as the probability to play action $b$ is
halved),
(iii)~with three visits to $s_1$ is $\frac{1}{2^{k+3}}$ and so on.
Hence we have $u_{k+1} \leq \frac{1}{2^k}$. 
The probability that $s_2$ is visited infinitely often is 
$\prod_{k=0}^\infty u_{k+1} \leq \prod_{k=0}^\infty \frac{1}{2^{k+1}}=0$. 
It follows that for all strategies $\stra_2$ we have
$\Prb_{s_0}^{\stra_1,\stra_2}(\bo\diam \set{s_2})=0$,
and hence 
$\Prb_{s_0}^{\stra_1,\stra_2}(\bo\diam \set{s_1} \cap \diam \bo\set{s_1,s_0})=1$.
Thus we have shown that player~1 has an infinite-memory infinite-precision 
almost-sure winning strategy.
\qed
\end{examp}

\begin{examp}{($\limit_1(\IP,\FM,\Phi) \subsetneq \limit_1(\IP,\IM,\Phi)$).}
\label{examp:limit-diff}
We show with an example that $\limit_1(\IP,\FM,\Phi) \subsetneq \limit_1(\IP,\IM,\Phi)$.
The example is from~\cite{dAH00} and we present the details for the sake of 
completeness.

\begin{figure}[t]
\begin{center}
\begin{picture}(75,30)(0,0)
\node[Nmarks=n](n0)(40,12){$s_0$}
\node[Nmarks=n](n1)(10,12){$s_1$}
\node[Nmarks=n](n2)(70,12){$s_2$}
\drawloop(n2){}
\drawloop(n0){$ac$}
\drawedge[ELpos=50, ELside=l, ELdist=0.5, curvedepth=0](n0,n2){$bd$}
\drawedge[ELpos=50, ELside=r, ELdist=0.5, curvedepth=0](n0,n1){$ad,bc$}
\drawedge[curvedepth=-5](n1,n0){}
\end{picture}
\end{center}
  \caption{B\"uchi games}
  \label{figure:buchi}
\end{figure}

Consider the game shown in Fig.~\ref{figure:buchi}.
The state $s_2$ is an absorbing state, and from the state $s_1$ the
next state is always $s_0$.
The objective of player~1 is to visit $s_1$ infinitely often,
i.e., $\bo \diam{\set{s_1}}$.
For $\ve>0$, we will construct a strategy $\stra_1^\ve$ for player~1 
that ensures $s_1$ is visited infinitely often with probability at 
least $1-\ve$.
First, given $\ve>0$, we construct a sequence of $\ve_i$, for $i\ge 0$,
such that $\ve_i >0$, and $\prod_{i} (1-\ve_i) \geq (1-\ve)$.
Let $\stra_1^{\ve_i}$ be a memoryless strategy for player~1 
that ensures $s_0$  is reached from $s_1$ with probability at least $1-\ve_i$;
such a strategy can be constructed as in the solution of reachability
games (see 
\cite{crg-tcs07}). 
The strategy $\stra_1^\ve$ is as follows: for a history $w \in S^*$ (finite 
sequence of states), if the number of times $s_1$ has appeared in $w$ is 
$i$, then for the history $w \cdot s_0$ the strategy $\stra_1^\ve$ plays like 
$\stra_1^{\ve_i}$, i.e., $\stra_1^\ve(w\cdot s_0)=\stra_1^{\ve_i}(s_0)$.
The strategy $\stra_\ve$ constructed in this fashion ensures
that against any strategy $\stra_2$, the state $s_1$ is visited
infinitely often with probability at least $ \prod_{i}(1-\ve_i) \geq 1-\ve$. 
However, the strategy $\stra_1^\ve$ counts the number of visits to
$s_1$, and therefore uses infinite memory. 

We now show that the infinite memory requirement cannot be avoided. 
We show now that all finite-memory strategies visit $s_2$ infinitely
often with probability~0. 
Let $\stra$ be an arbitrary finite-memory strategy for player~1, and
let $M$ be the (finite) memory set used by the strategy.
Consider the product game graph defined on the state space 
$\set{s_0,s_1,s_2} \times M$ as follows:
for $s \in \set{s_0,s_1,s_2}$ and $m \in M$, let 
$\stra_u(s,m)=m_1$ (where $\stra_u$ is the memory update function of $\stra$),
then for $a_1 \in \mov_1(s)$ and $b_1 \in \mov_2(s)$ we have 
\[
\ov{\trans}((s,m),a_1,b_1)(s',m')=
\begin{cases}
\trans(s,a_1,b_1)(s') & \text{$m'=m_1$} \\
0 & \text{otherwise}
\end{cases}
\]
where $\ov{\trans}$ is the transition function of the product game graph.
The strategy $\stra$ will be interpreted as a memoryless $\ov{\stra}$ in 
the product game graph as follows: 
for $s \in \set{s_0,s_1,s_2}$ and $m \in M$ we 
have $\ov{\stra}((s,m))=\stra_n((s,m))$, where $\stra_n$ is the next move
function of $\stra$.
Consider now a strategy $\stra_2$ for player~2  constructed as
follows. 
From a state $(s_0, m) \in \set{s_0,s_1,s_2} \times M$, if the strategy 
$\ov{\stra}$ plays~$a$ with probability~1, 
then player~2 plays~$c$ with probability~1, ensuring that the successor 
is $(s_0, m')$ for some $m' \in M$. 
If $\ov{\stra}$ plays~$b$ with positive probability, then player~2
plays~$c$ and~$d$ uniformly at random, ensuring that $(s_2, m')$ is
reached with positive probability, for some $m' \in M$. 
Under $\stra_1, \stra_2$ the game is reduced to a Markov chain, and
since the set $\set{s_2} \times M$ is absorbing, and since all states
in $\set{s_0} \times M$ either stay safe in $\set{s_0}\times M$ or 
reach $\set{s_2} \times M$ in one step with positive probability,
and all states in $\set{s_1}\times M$ reach $\set{s_0}\times M$ in one step, 
the closed recurrent classes must be either entirely contained in 
$\set{s_0} \times M$, or in  $\set{s_2} \times M$. 
This shows that, under $\stra_1, \stra_2$, player~1 achieves the
B\"uchi goal $\bo \diam \set{s_1}$ with probability~0. 
\qed
\end{examp}

The propositions and examples of this section establish all the results for 
equalities and inequalities of the first set of equalities and inequalities of Section~\ref{sec-intro}.
The fact that $\limit_1(\IP,\FM,\Phi) \subsetneq \limit_1(\IP,\IM,\Phi)$
was shown in~\cite{dAH00} (also see Example~\ref{examp:limit-diff}).
The fact that we have $\bigcup_{b>0}\limit_1(b\FP,\IM,\Phi)=\almost_1(U,M,\Phi)$, and the result 
of~\cite{crg-tcs07} that for reachability objectives memoryless limit-sure winning 
strategies exist and limit-sure winning is different from almost-sure winning 
established that $\bigcup_{b>0}\limit_1(b\FP,\IM,\Phi) \subsetneq \limit_1(\IP,M,\Phi)$. 
Thus we have all the results of the first and second set of equalities and inequalities of Section~\ref{sec-intro},
other than $\limit_1(\IP,M,\Phi)=\limit_1(\IP,\FM,\Phi)=\limit_1(\FP,M,\Phi)=\limit_1(\FP,\IM,\Phi)$ 
and we establish this in the next section.

\newcommand{\St}{\mathsf{St}}
\newcommand{\Wk}{\mathsf{Wk}}

\section{Infinite-precision Strategies} 
The results of the previous section already characterize that 
for almost-sure winning infinite-precision finite-memory strategies 
are no more powerful than uniform memoryless strategies.
In this section we characterize the limit-sure winning for infinite-precision 
finite-memory strategies.
We define two new operators, 
$\lpre$ (limit-pre) and $\fpre$ (fractional-pre). 
For $s \in S$ and $X,Y\subs S$, these two-argument predecessor
operators are defined as follows:   
\begin{eqalignno}
  \label{eq-lpre}
  \lpre_1(Y,X) & =\set{s \in S \mid 
		   \forall \alpha >0 \qdot 
		   \exists \dis_1 \in \sd^s_1 \qdot 
		   \forall \dis_2 \in \sd^s_2 \qdot 
		 \bigl[  \pr_s^{\dis_1,\dis_2}(X) > 
		       \alpha \cdot \pr_s^{\dis_1,\dis_2} (\no Y) \bigr]
		       }; \\
 \label{eq-fpre}
 \fpre_2(X,Y) & =\set{s \in S \mid
	\exists \beta >0  \qdot 
	\forall \dis_1 \in \sd^s_1 \qdot 
	\exists \dis_2 \in \sd^s_2 \qdot 
	\bigl[ \pr_s^{\dis_1,\dis_2} (Y) 
	\geq \beta \cdot \pr_s^{\dis_1,\dis_2} (\no X) \bigr] }
	\eqpun .
\end{eqalignno}
The operator $\lpre_1(Y,X)$ is the set of states such that player~1 can choose 
distributions to ensure that the probability to progress to $X$ 
(i.e., $\pr_s^{\dis_1,\dis_2}(X)$)  can be made arbitrarily large 
as compared to the probability of escape from $Y$ (i.e., 
$\pr_s^{\dis_1,\dis_2} (\no Y)$).
Note that $\alpha>0$ can be an arbitrarily large number. 
In other words, the probability to progress to $X$ divided by the sum of the 
probability to progress to $X$ and to escape $Y$ can be made arbitrarily 
close to~1 (in the limit~1).
The operator $\fpre_2(X,Y)$ is the set of states such that against all player~1 
distributions, player~2 can choose a
distribution to ensure that the probability to progress to $Y$ can be made 
greater than a positive constant times the probability of escape from $X$, 
(i.e., progress to $Y$ is a positive fraction of the probability to escape 
from $X$).

\smallskip\noindent{\bf Limit-sure winning for memoryless strategies.}
The results of~\cite{crg-tcs07} show that for reachability objectives,
memoryless strategies suffice for limit-sure winning. 
We now show with an example that limit-sure winning for B\"uchi 
objectives with memoryless strategies is not 
simply limit-sure reachability to the set of almost-sure winning states.
Consider the game shown in Fig~\ref{figure:buchi-lim} with actions 
$\set{a,b}$ for player~1 and $\set{c,d,e}$ for player~2 at $s_0$. 
States $s_1,s_2$ are absorbing, and the unique successor of $s_3$ is $s_0$. 
The B\"uchi objective is to visit $\set{s_1,s_3}$ infinitely often. 
The only almost-sure winning state is $\set{s_1}$. 
The state $s_0$ is not almost-sure winning because at $s_0$ if 
player~1 plays $b$ with positive probability the counter move is $d$, 
otherwise the counter move is $c$. Hence either $s_2$ is 
reached with positive probability or $s_0$ is never left. 
Moreover, player~1 cannot limit-sure reach the state $s_1$ from $s_0$,
as the move $e$ ensures that $s_1$ is never reached. 
Thus in this game the limit-sure reach to the almost-sure winning set is 
only state $s_1$. 
We now show that for all $\ve$, there is a memoryless strategy to ensure 
the B\"uchi objective with probability at least $1-\ve$ from $s_0$. 
At $s_0$ the memoryless strategy plays $a$ with probability $1-\ve$ and 
$b$ with probability $\ve$. Fixing the strategy for player~1 we obtain an 
MDP for player~2, and in the MDP player~2 has an optimal pure memoryless strategy.
If player~2 plays the pure memoryless strategy $e$, then $s_3$ is visited 
infinitely often with probability~1; if player~2 plays the pure memoryless
strategy $c$, then $s_1$ is reached with probability~1; and if player~2 
plays the pure memoryless strategy $d$, then $s_1$ is reached with probability 
$1-\ve$. 
Thus for all $\ve>0$, player~1 can win from $s_0$ and $s_2$ with probability 
at least $1-\ve$ with a memoryless strategy.

\begin{figure}[t]
\begin{center}
\begin{picture}(75,20)(0,0)
\node[Nmarks=n](n0)(40,12){$s_0$}
\node[Nmarks=n](n1)(10,12){$s_1$}
\node[Nmarks=n](n2)(70,12){$s_2$}
\node[Nmarks=n](n3)(40,-5){$s_3$}
\drawloop(n1){}
\drawloop(n2){}
\drawloop(n0){$ac$}
\drawedge[ELpos=50, ELside=l, ELdist=0.5, curvedepth=0](n0,n2){$bd$}
\drawedge[ELpos=50, ELside=r, ELdist=0.5, curvedepth=0](n0,n1){$ad,bc$}
\drawedge[ELpos=50, ELside=r, ELdist=0.5, curvedepth=0](n0,n3){$ae,be$}
\drawedge[curvedepth=-5](n3,n0){}
\end{picture}
\end{center}
  \caption{A B\"uchi game}
  \label{figure:buchi-lim}
\end{figure}

\smallskip\noindent{\bf Limit-winning set for B\"uchi objectives.} 
We first present the characterization of the set of limit-sure winning states 
for concurrent B\"uchi games from~\cite{dAH00} for infinite-memory and
infinite-precision strategies. 
The limit-sure winning set is characterized by the following formula
\[
\nu Y_0. \mu X_0. [(B \cap \pre_1(Y_0)) \cup (\no B \cap \lpre_1(Y_0,X_0)) ]
\]
Our characterization of the limit-sure winning set for memoryless 
infinite-precision strategies would be obtained as follows: 
we will obtain sequence of chunk of states $X_0 \subs X_1 \subs \ldots \subs X_k$ 
such that from each $X_i$ for all $\ve>0$ there is a memoryless 
strategy to ensure that $\diam X_{i-1} \cup (\bo \diam B \cap \bo 
(X_i\setm X_{i-1}))$ 
is satisfied with probability at least $1-\ve$.
We consider the following $\mu$-calculus formula:
\[
\nu Y_1. \mu X_1. \nu Y_0. \mu X_0. [(B \cap \pre_1(Y_0) \sdcup \lpre_1(Y_1,X_1)) 
\cup (\no B \cap \apre_1(Y_0,X_0) \sdcup \lpre_1(Y_1,X_1)) ]
\]
Let $Y^*$ be the fixpoint, and since it is a fixpoint we have 
\[
\begin{array}{l}
Y^* = \mu X_1. \nu Y_0. \mu X_0. 
\left[
\begin{array}{l} 
\big(B \cap \pre_1(Y_0) \sdcup \lpre_1(Y^*,X_1)\big) 
\cup 
\\
\big(\no B \cap \apre_1(Y_0,X_0) \sdcup \lpre_1(Y^*,X_1)\big)
\end{array}
\right]
\end{array}
\]
Hence $Y^*$ is computed as least fixpoint as sequence of sets 
$X_0 \subs X_1 \ldots \subs X_k$, and 
$X_{i+1}$ is obtained from $X_i$ as
\[
\nu Y_0. \mu X_0. [(B \cap \pre_1(Y_0) \sdcup \lpre_1(Y^*,X_i)) 
\cup (\no B \cap \apre_1(Y_0,X_0) \sdcup \lpre_1(Y^*,X_i)) ]
\]
The $\lpre_i(Y^*,X_i)$ is similar to limit-sure reachability to $X_i$, 
and once we rule out $\lpre_1(Y^*,X_i)$, the formula simplifies to 
the almost-sure winning under memoryless strategies. 
In other words, from each $X_{i+1}$ player~1 can ensure with a memoryless
strategy that either (i) $X_i$ is reached with
limit probability~1 or (ii) the game stays in $X_{i+1}\setm X_i$ and 
the B\"uchi objective is satisfied with probability~1.
It follows that $Y^* \subs \limit_1(\IP,M,\bo \diam B)$. 
We will show that in the complement set there exists constant $\eta>0$ 
such that for all finite-memory infinite-precision 
strategies for player~1 there is a counter strategy to ensure the 
complementary objective with probability at least $\eta>0$.

\smallskip\noindent{\bf The general principle.} 
The general principle to obtain the $\mu$-calculus formula for 
limit-sure winning for memoryless infinite-precision strategies is 
as follows: we consider the $\mu$-calculus formula for the 
almost-sure winning for uniform memoryless strategies, then add 
a $\nu Y_{n+1} \mu X_{n+1}$ quantifier and add the 
$\lpre_1(Y_{n+1},X_{n+1}) \sdcup$ to every predecessor operator. 
Intuitively, when we replace $Y_{n+1}$ by the fixpoint $Y^*$, then we obtain 
sequence $X_i$ of chunks of states for the least fixpoint computation of $X_{n+1}$,
such that from $X_{i+1}$ either $X_i$ is reached with limit probability~1
(by the $\lpre_1(Y^*,X_{n+1})$ operator), or 
the game stays in $X_{i+1} \setm X_i$ and then the parity objective is 
satisfied with probability~1 by a memoryless strategy. 
Formally, we will show Lemma~\ref{lemm:limit-infprec}, and 
we first present a technical lemma required for the correctness 
proof.

\begin{lem}{(Basic $\lpre$ principle).}\label{lemm:basiclpre}
Let $X \subs Y \subs Z \subs S$ and such that all $s\in Y\setm X$
we have $s \in \lpre_1(Z,X)$.
For all prefix-independent events $\cala \subs \bo (Z \setm Y)$, the following assertion holds:
\begin{quote}
 Assume that for all $\eta >0$ there exists a memoryless strategy
 $\stra_1^\eta \in \bigstra_1^M$ 
 such that for all $\stra_2 \in \bigstra_2$ and for all $z \in Z \setm Y$ 
 we have 
 \[\Prb_z^{\stra_1^\eta,\stra_2}(\cala \cup \diam Y) \geq 1- \eta, \qquad 
 (\text{i.e., } 
 \lim_{\eta \to 0} \Prb_z^{\stra_1^\eta,\stra_2}(\cala \cup \diam Y)=1).
 \]
 Then, for all $s \in Y$ for all $\ve>0$ there exists a 
 memoryless strategy $\stra_1^\ve \in \bigstra_1^M$ 
 such that for all $\stra_2 \in \bigstra_2$ we have 
 \[
 \Prb_s^{\stra_1^\ve,\stra_2}(\cala \cup \diam X) \geq 1- \ve, \qquad
 (\text{i.e., } 
 \lim_{\ve \to 0} \Prb_s^{\stra_1^\ve,\stra_2}(\cala \cup \diam X)=1).
 \]
\end{quote}
\end{lem}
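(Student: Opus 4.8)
The plan is to build a single memoryless strategy $\stra_1^\ve$ by combining, on $Y \setm X$, the distributions witnessing membership in $\lpre_1(Z,X)$ for a large fixed parameter $\alpha$, and, on $Z \setm Y$, the memoryless strategy $\stra_1^\eta$ supplied by the hypothesis for a small $\eta$; on $X$ the play has already reached the target and on $\no Z$ the prescription is irrelevant. First I would fix $\ve>0$ and a large $\alpha$ (calibrated below), and for each $s\in Y\setm X$ pick a distribution $\dis_1(s)$ with $\pr_s^{\dis_1(s),\dis_2}(X) > \alpha\cdot\pr_s^{\dis_1(s),\dis_2}(\no Z)$ for all $\dis_2\in\sd^s_2$. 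The crucial preliminary observation is that, with $\dis_1(s)$ fixed, $\sd^s_2$ compact, and $\dis_2\mapsto\pr_s^{\dis_1(s),\dis_2}(X)$ continuous and strictly positive, the minimum $\min_{\dis_2\in\sd^s_2}\pr_s^{\dis_1(s),\dis_2}(X)$ is attained and strictly positive; since $Y\setm X$ is finite, $p^\ast:=\min_{s\in Y\setm X}\min_{\dis_2\in\sd^s_2}\pr_s^{\dis_1(s),\dis_2}(X)>0$. Thus every visit to $Y\setm X$ carries probability at least $p^\ast$ of stepping directly into $X$, uniformly over all player-2 behaviour. Only after $p^\ast$ is fixed would I choose $\eta$ small in terms of $\ve$ and $p^\ast$, and set $\stra_1^\ve$ to play $\dis_1(\cdot)$ on $Y\setm X$ and $\stra_1^\eta$ on $Z\setm Y$.

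Now fix an arbitrary $\stra_2\in\bigstra_2$ and start from $s\in Y$ (the case $s\in X$ being trivial). Treating $X$ and $\no Z$ as absorbing, the complement of $\cala\cup\diam X$ splits into (i) escaping to $\no Z$ before reaching $X$, and (ii) staying in $Z\setm X$ forever without satisfying $\cala$. For the escapes in (i) I would separate according to whether the escaping step is taken from $Y\setm X$ or from $D:=Z\setm Y$. Escapes directly from $Y\setm X$ are controlled by the $\lpre$ ratio: at each time the play sits in $Y\setm X$ the one-step escape probability is below $\frac1\alpha$ times the one-step probability of entering $X$, and since the first-escape events and the first-entry-to-$X$ events are each pairwise disjoint across time, summing the conditional one-step inequality over all steps (a telescoping/optional-stopping estimate) bounds the total probability of escaping from $Y\setm X$ by $\frac1\alpha$.

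For the excursions into $D$, the hypothesis yields that each maximal sojourn in $D$, governed by $\stra_1^\eta$, escapes to $\no Z$ before returning to $Y$ with probability at most $\eta$ (escape forces $\no(\cala\cup\diam Y)$). The number of such excursions is at most the number of visits to $Y\setm X$, and here the uniform bound $p^\ast$ pays off: since every $Y\setm X$-visit steps into the absorbing set $X$ with conditional probability at least $p^\ast$, the number of visits is stochastically dominated by a geometric law and has expectation at most $1/p^\ast$ irrespective of $\stra_2$. Hence the total escape probability through $D$ is at most $\eta/p^\ast$. The same geometric bound shows the play makes only finitely many $Y\setm X$-visits almost surely, so any play witnessing (ii) must eventually remain in $D$ forever; applying the $\stra_1^\eta$ guarantee at an increasing sequence of entry times and passing to the monotone limit bounds the probability of staying in $D$ forever without $\cala$ by $\eta$. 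Collecting the three estimates, the bad event has probability at most $\frac1\alpha+\frac{\eta}{p^\ast}+\eta$, which is below $\ve$ once $\alpha\geq 3/\ve$ and $\eta$ is small relative to $\ve$ and $p^\ast$, giving $\Prb_s^{\stra_1^\ve,\stra_2}(\cala\cup\diam X)\geq 1-\ve$.

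I expect the main obstacle to be exactly the coupling between the two regions: a priori each excursion contributes a fixed error $\eta$ while the number of excursions is unbounded, so a naive union bound over excursions is hopeless. The resolution is the compactness argument establishing the uniform progress bound $p^\ast>0$, which makes the expected number of excursions finite uniformly in $\stra_2$ and lets one fixed $\eta$ absorb all excursion errors; getting the quantifier order right—choose $\alpha$, read off $p^\ast$, then choose $\eta$—is what makes the single memoryless strategy work.
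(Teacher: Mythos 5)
Your proof is correct, and it reaches the conclusion by a noticeably different route from the paper, so a comparison is worth recording. The paper builds the same combined memoryless strategy ($\lpre$ witnesses on $Y\setm X$, the strategy $\stra_1^\eta$ on $Z\setm Y$), but then writes a single one-step recursive inequality $q\ge\gamma+\beta\,(\alpha+(1-\eta-\alpha)\,q)$ for the success probability $q$ and solves it in closed form; the role of your uniform progress bound $p^*$ is played there by the structural estimate $\gamma>\ve^{\,l}$ (with $l$ the number of moves at $s$), which comes from the explicit shape of the witness distributions (actions weighted by powers of $\ve$), and $\eta$ is then set to $\ve^{\,l+1}\le\gamma\cdot\ve$. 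You instead obtain the progress bound abstractly: $\pr_s^{\dis_1(s),\dis_2}(X)$ is linear in $\dis_2$ and strictly positive for every $\dis_2$ (this does follow from the strict inequality in the definition of $\lpre_1$, even when $\pr_s^{\dis_1,\dis_2}(\no Z)=0$), so its minimum over the compact simplex is positive, and finiteness of $Y\setm X$ gives $p^*>0$; and you replace the solved recursion by an explicit excursion decomposition (telescoping the one-step ratio to bound direct escapes by $1/\alpha$, and a geometric bound on the number of visits to $Y\setm X$ so that a single $\eta$ absorbs all excursion errors). What your version buys is a cleaner treatment of the case where $Y\setm X$ contains several states — the paper's calculation reads as if it were a singleton — and independence from the particular form of the $\lpre$ witnesses; what the paper's version buys is a shorter closed-form computation. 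Two minor points, neither fatal: your last term should really be $\eta$ times the expected number of excursions, i.e.\ $\eta/p^*$ rather than $\eta$, which is harmless since you choose $\eta$ after $p^*$; and your accounting, exactly like the paper's, identifies ``reach $Y$'' in the hypothesis with ``reach $Y$ before leaving $Z$'' (a play could exit $Z$ and later return) — counting every exit from $Z$ as a loss is sound for the conclusion, and this shared informality is resolved in the applications where the strategies confine the play to $Z$.
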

\begin{proof} 
The situation is depicted in Figure~\ref{figure:basic-lpre}.(a).
Since for all $s \in Y \setm X$ we have $s \in \lpre_1(Z,X)$, given $\ve>0$, 
player~1 can play the distribution $\dopp{1}{s}{\lpre}{\ve}{Z,X}$ to ensure that the 
probability of going to $\no Z$ is at most $\ve$ times the probability of going 
to $X$.
Fix a counter strategy $\stra_2$ for player~2.
Let $\gamma$ and $\gamma'$ denote the probability of going to $X$ and 
$\no Z$, respectively. Then $\gamma' \leq \ve \cdot \gamma$.
Observe that $\gamma> \ve^l$, where $l=|\movs_s|$.
Let $\alpha$ denote the probability of the event $\cala$.
We first present an informal argument and then present rigorous calculations.
Since $\cala \subs \cala \cup \diam X$,
the worst-case analysis for the result correspond to the case when 
$\alpha=0$, and the simplified situation is shown as 
Fig~\ref{figure:basic-lpre}.(b).
\begin{figure}[t]
   \begin{center}
\setlength{\unitlength}{0.00037500in}
\begingroup\makeatletter\ifx\SetFigFont\undefined%
\gdef\SetFigFont#1#2#3#4#5{%
  \reset@font\fontsize{#1}{#2pt}%
  \fontfamily{#3}\fontseries{#4}\fontshape{#5}%
  \selectfont}%
\fi\endgroup%
{\renewcommand{\dashlinestretch}{30}
\begin{picture}(17519,4697)(0,-10)
\put(14797,85){\makebox(0,0)[lb]{{\SetFigFont{9}{10.8}{\rmdefault}{\mddefault}{\updefault}(c)}}}
\thicklines
\put(2245.476,2613.963){\arc{2023.116}{3.8026}{5.7130}}
\blacken\path(1573.585,3447.547)(1447.000,3235.000)(1658.715,3362.973)(1573.585,3447.547)
\put(9546.066,4400.436){\arc{3877.286}{0.7998}{2.3958}}
\blacken\path(10753.894,2808.206)(10897.000,3010.000)(10675.768,2899.290)(10753.894,2808.206)
\put(8845.476,2688.963){\arc{2023.116}{3.8026}{5.7130}}
\blacken\path(8173.585,3522.547)(8047.000,3310.000)(8258.715,3437.973)(8173.585,3522.547)
\put(15021.066,4400.436){\arc{3877.286}{0.7998}{2.3958}}
\blacken\path(16228.894,2808.206)(16372.000,3010.000)(16150.768,2899.290)(16228.894,2808.206)
\put(14245.476,2688.963){\arc{2023.116}{3.8026}{5.7130}}
\blacken\path(13573.585,3522.547)(13447.000,3310.000)(13658.715,3437.973)(13573.585,3522.547)
\put(1447,3085){\ellipse{212}{212}}
\put(8047,3160){\ellipse{212}{212}}
\put(13447,3160){\ellipse{212}{212}}
\put(3172,3085){\ellipse{212}{212}}
\put(9772,3160){\ellipse{212}{212}}
\put(15172,3160){\ellipse{212}{212}}
\path(22,4660)(5347,4660)(5347,1510)
	(22,1510)(22,4660)
\path(6397,4660)(11722,4660)(11722,1510)
	(6397,1510)(6397,4660)
\path(12172,4660)(17497,4660)(17497,1510)
	(12172,1510)(12172,4660)
\put(2827,4240){\arc{210}{1.5708}{3.1416}}
\put(2827,4405){\arc{210}{3.1416}{4.7124}}
\put(3592,4405){\arc{210}{4.7124}{6.2832}}
\put(3592,4240){\arc{210}{0}{1.5708}}
\path(2722,4240)(2722,4405)
\path(2827,4510)(3592,4510)
\path(3697,4405)(3697,4240)
\path(3592,4135)(2827,4135)
\path(1147,4660)(1147,1510)
\path(1747,4660)(1747,1510)
\path(3847,4660)(3847,1510)
\path(7747,4660)(7747,1510)
\path(8347,4660)(8347,1510)
\path(10447,4660)(10447,1510)
\path(13747,4660)(13747,1510)
\path(16147,4660)(16147,1510)
\path(13147,4660)(13147,1510)
\path(1597,1135)(22,1135)
\blacken\path(262.000,1195.000)(22.000,1135.000)(262.000,1075.000)(262.000,1195.000)
\path(2197,1135)(3772,1135)
\blacken\path(3532.000,1075.000)(3772.000,1135.000)(3532.000,1195.000)(3532.000,1075.000)
\path(8497,1210)(6397,1210)
\blacken\path(6637.000,1270.000)(6397.000,1210.000)(6637.000,1150.000)(6637.000,1270.000)
\path(9097,1210)(10372,1210)
\blacken\path(10132.000,1150.000)(10372.000,1210.000)(10132.000,1270.000)(10132.000,1150.000)
\path(14422,1210)(12172,1210)
\blacken\path(12412.000,1270.000)(12172.000,1210.000)(12412.000,1150.000)(12412.000,1270.000)
\path(14947,1210)(16147,1210)
\blacken\path(15907.000,1150.000)(16147.000,1210.000)(15907.000,1270.000)(15907.000,1150.000)
\path(1372,3085)(322,3085)
\blacken\path(562.000,3145.000)(322.000,3085.000)(562.000,3025.000)(562.000,3145.000)
\path(1597,3085)(3097,3085)
\blacken\path(2857.000,3025.000)(3097.000,3085.000)(2857.000,3145.000)(2857.000,3025.000)
\path(7897,3160)(6847,3160)
\blacken\path(7087.000,3220.000)(6847.000,3160.000)(7087.000,3100.000)(7087.000,3220.000)
\path(8197,3160)(9697,3160)
\blacken\path(9457.000,3100.000)(9697.000,3160.000)(9457.000,3220.000)(9457.000,3100.000)
\path(13372,3160)(12322,3160)
\blacken\path(12562.000,3220.000)(12322.000,3160.000)(12562.000,3100.000)(12562.000,3220.000)
\path(13597,3160)(15097,3160)
\blacken\path(14857.000,3100.000)(15097.000,3160.000)(14857.000,3220.000)(14857.000,3100.000)
\path(3247,3085)(4222,3085)
\blacken\path(3982.000,3025.000)(4222.000,3085.000)(3982.000,3145.000)(3982.000,3025.000)
\path(3172,3160)(3172,4135)
\blacken\path(3232.000,3895.000)(3172.000,4135.000)(3112.000,3895.000)(3232.000,3895.000)
\path(9922,3160)(10897,3160)
\blacken\path(10657.000,3100.000)(10897.000,3160.000)(10657.000,3220.000)(10657.000,3100.000)
\put(397,4210){\makebox(0,0)[lb]{{\SetFigFont{9}{10.8}{\rmdefault}{\mddefault}{\updefault}X}}}
\put(1447,2635){\makebox(0,0)[lb]{{\SetFigFont{9}{10.8}{\rmdefault}{\mddefault}{\updefault}s}}}
\put(6922,4285){\makebox(0,0)[lb]{{\SetFigFont{9}{10.8}{\rmdefault}{\mddefault}{\updefault}X}}}
\put(12547,4210){\makebox(0,0)[lb]{{\SetFigFont{9}{10.8}{\rmdefault}{\mddefault}{\updefault}X}}}
\put(7972,2710){\makebox(0,0)[lb]{{\SetFigFont{9}{10.8}{\rmdefault}{\mddefault}{\updefault}s}}}
\put(13447,2710){\makebox(0,0)[lb]{{\SetFigFont{9}{10.8}{\rmdefault}{\mddefault}{\updefault}s}}}
\put(1822,985){\makebox(0,0)[lb]{{\SetFigFont{9}{10.8}{\rmdefault}{\mddefault}{\updefault}Z}}}
\put(8722,1060){\makebox(0,0)[lb]{{\SetFigFont{9}{10.8}{\rmdefault}{\mddefault}{\updefault}Z}}}
\put(14572,1060){\makebox(0,0)[lb]{{\SetFigFont{9}{10.8}{\rmdefault}{\mddefault}{\updefault}Z}}}
\put(7222,2785){\makebox(0,0)[lb]{{\SetFigFont{9}{10.8}{\rmdefault}{\mddefault}{\updefault}$\gamma$}}}
\put(14047,2785){\makebox(0,0)[lb]{{\SetFigFont{9}{10.8}{\rmdefault}{\mddefault}{\updefault}$\beta$}}}
\put(8797,2785){\makebox(0,0)[lb]{{\SetFigFont{9}{10.8}{\rmdefault}{\mddefault}{\updefault}$\beta$}}}
\put(10222,3235){\makebox(0,0)[lb]{{\SetFigFont{9}{10.8}{\rmdefault}{\mddefault}{\updefault}$\eta$}}}
\put(8647,3760){\makebox(0,0)[lb]{{\SetFigFont{9}{10.8}{\rmdefault}{\mddefault}{\updefault}$1-\eta$}}}
\put(9097,2185){\makebox(0,0)[lb]{{\SetFigFont{9}{10.8}{\rmdefault}{\mddefault}{\updefault}$\gamma\cdot \ve$}}}
\put(12697,2860){\makebox(0,0)[lb]{{\SetFigFont{9}{10.8}{\rmdefault}{\mddefault}{\updefault}$\gamma$}}}
\put(14497,2185){\makebox(0,0)[lb]{{\SetFigFont{9}{10.8}{\rmdefault}{\mddefault}{\updefault}$\gamma\cdot \ve$}}}
\put(2647,2035){\makebox(0,0)[lb]{{\SetFigFont{9}{10.8}{\rmdefault}{\mddefault}{\updefault}$\gamma\cdot \ve$}}}
\put(2047,2710){\makebox(0,0)[lb]{{\SetFigFont{9}{10.8}{\rmdefault}{\mddefault}{\updefault}$\beta$}}}
\put(3547,3160){\makebox(0,0)[lb]{{\SetFigFont{9}{10.8}{\rmdefault}{\mddefault}{\updefault}$\eta$}}}
\put(622,2785){\makebox(0,0)[lb]{{\SetFigFont{9}{10.8}{\rmdefault}{\mddefault}{\updefault}$\gamma$}}}
\put(3247,3835){\makebox(0,0)[lb]{{\SetFigFont{9}{10.8}{\rmdefault}{\mddefault}{\updefault}$\alpha$}}}
\put(2947,4210){\makebox(0,0)[lb]{{\SetFigFont{9}{10.8}{\rmdefault}{\mddefault}{\updefault}$\cala$}}}
\put(1372,3610){\makebox(0,0)[lb]{{\SetFigFont{9}{10.8}{\rmdefault}{\mddefault}{\updefault}$1-\alpha-\eta$}}}
\put(1972,85){\makebox(0,0)[lb]{{\SetFigFont{9}{10.8}{\rmdefault}{\mddefault}{\updefault}(a)}}}
\put(8797,85){\makebox(0,0)[lb]{{\SetFigFont{9}{10.8}{\rmdefault}{\mddefault}{\updefault}(b)}}}
\put(2946.066,4325.436){\arc{3877.286}{0.7998}{2.3958}}
\blacken\path(4153.894,2733.206)(4297.000,2935.000)(4075.768,2824.290)(4153.894,2733.206)
\end{picture}
}
   \end{center}
  \caption{Basic $\lpre$ principle; in the figures $\beta=1-\gamma-\gamma\cdot \ve$}
  \label{figure:basic-lpre}
\end{figure}
Once we let $\eta \to 0$, then we only have an edge from 
$Z \setm Y$ to $Y$ and the situation is shown in 
Fig~\ref{figure:basic-lpre}.(c).
If $q$ is the probability to reach $X$, then the probability to reach $\no Z$
is $q \cdot \ve$ and we have $q + q\ve=1$, i.e., $q =\frac{1}{1+\ve}$, and 
given $\ve'>0$ we can chose $\ve$ to ensure that $q \geq 1-\ve'$.

We now present detailed calculations.
Given $\ve'>0$ we construct a strategy $\stra_1^{\ve'}$ as follows:
let $\ve=\frac{\ve'}{2(1-\ve')}$ and $\eta=\ve^{l+1}>0$; 
and fix the strategy $\stra_1^\eta$ for states in 
$Z \setm Y$ and the distribution $\dopp{1}{s}{\lpre}{\ve}{Z,X}$ at $s$.
Observe that by choice we have $\eta \leq \gamma \cdot \ve$.
Let $q=\Prb_s^{\stra_1^{\ve'},\stra_2}(\cala \cup \diam X)$.
Then we have 
$q \geq \gamma + \beta\cdot \big(\alpha+ (1-\eta-\alpha)\cdot q\big)$; 
since the set $Z\setm Y$ is reached with probability at most $\beta$ and then again $Y$ 
is reached with probability at least $1-\eta-\alpha$ and event $\cala$ happens 
with probability at least $\alpha$.
Hence we have 
\[
q \geq 
\gamma + \beta\cdot \big(\alpha+ (1-\eta-\alpha)\cdot q\big) 
\geq
\gamma + \beta\cdot \big(\alpha \cdot q + (1-\eta-\alpha)\cdot q\big)
=\gamma + \beta\cdot (1-\eta)\cdot q;
\] 
the first inequality follows as $q \leq 1$.
Thus we have
\[
\begin{array}{rcl}
q & \geq & \gamma + (1-\gamma -\gamma \cdot \ve)\cdot (1-\eta) \cdot q; \\
q & \geq  & \displaystyle \frac{\gamma}{\gamma + \gamma \cdot \ve + \eta 
	- \eta \cdot \gamma - \eta \cdot \gamma \cdot \ve} \\[2ex]
  & \geq & \displaystyle \frac{\gamma}{\gamma + \gamma \cdot \ve + \eta } \\[2ex]
  & \geq & \displaystyle \frac{\gamma}{\gamma + \gamma \cdot \ve + \gamma \cdot \ve } \qquad (\text{since } \eta \leq \gamma\cdot \ve)\\[2ex]
  & \geq & \frac{1}{1 + 2\ve} \geq 1-\ve'.
\end{array}
\]
The desired result follows.
\qed
\end{proof}

\begin{lem}{}\label{lemm:limit-infprec}
For a parity function $p:S \mapsto [1..2n]$ and $T \subs S$, we have 
$W \subs \limit_1(\IP,M, \ParityCond \cup \diam T)$, where $W$ is defined 
as follows:
\[
\begin{array}{l}
\nu Y_{n}. \mu X_{n}. \nu Y_{n-1}.  \mu X_{n-1}.  \cdots \nu Y_1. \mu X_1. \nu Y_0. \mu X_0. 
\\ 
\left[
\begin{array}{c}
T \\
B_{2n} \cap \pre_1(Y_{n-1}) \sdcup \lpre_1(Y_n,X_n) \\
\cup \\
B_{2n-1} \cap \lpreodd_1(0,Y_{n-1},X_{n-1}) \sdcup \lpre_1(Y_{n},X_{n}) \\
\cup \\ 
B_{2n-2} \cap \lpreeven_1(0,Y_{n-1},X_{n-1},Y_{n-2})  \sdcup \lpre_1(Y_{n},X_{n}) \\
\cup \\
B_{2n-3} \cap \lpreodd_1(1,Y_{n-1},X_{n-1},Y_{n-2},X_{n-2})  \sdcup \lpre_1(Y_{n},X_{n}) \\
\cup \\
\vdots \\
B_2 \cap \lpreeven_1(n-2,Y_{n-1},X_{n-1},\ldots,Y_1,X_1,Y_0)  \sdcup \lpre_1(Y_{n},X_{n}) \\
\cup \\
B_{1} \cap \lpreodd_1(n-1,Y_{n-1},X_{n-1}, \ldots,Y_0,X_0)  \sdcup \lpre_1(Y_{n},X_{n}) \\
\end{array}
\right]
\end{array}
\]
\end{lem}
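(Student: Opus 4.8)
The plan is to follow the template of the almost-sure proof of Lemma~\ref{lemm:step-2-limit1}, systematically replacing the operator $\apre_1$ and the Basic $\apre$ principle (Lemma~\ref{lemm:basicapre}) by the operator $\lpre_1$ and the Basic $\lpre$ principle (Lemma~\ref{lemm:basiclpre}), and every almost-sure guarantee by a limit-sure one. Since $W$ is the outermost greatest fixpoint $\nu Y_n$, I would first replace $Y_n$ by $W$ throughout; the expression then becomes the least fixpoint $W = \mu X_n.\,\Psi(W,X_n)$, where $\Psi(W,X_n)$ is precisely the almost-expression for case~2 (for $p:S\to[1..2n]$, built on the inner variables $Y_{n-1},X_{n-1},\ldots,Y_0,X_0$) with the disjunct $\lpre_1(W,X_n)\,\sdcup$ prepended to each predecessor operator. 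Computing this least fixpoint produces an increasing sequence $\emptyset = X^0 \subs X^1 \subs \cdots \subs X^m = W$ with $X^{j+1}=\Psi(W,X^j)$; I call $L_j = X^j \setm X^{j-1}$ the $j$-th chunk, and note that $T \subs X^1$ since $T$ is a disjunct of every operator, so $T$ plays the role of the base set exactly as in Lemma~\ref{lemm:step-2-limit1}.

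The core step is the \emph{chunk claim}: from every state of $X^j$, for all $\ve>0$ there is a memoryless strategy $\stra_1$ with $\Prb_s^{\stra_1,\stra_2}(\diam T \cup \diam X^{j-1} \cup \ParityCond) \geq 1-\ve$ for all $\stra_2$. To prove it I would observe that $X^j = \Psi(W,X^{j-1})$ is an inner $\nu$-$\mu$ formula in which the fixed set $X^{j-1}$ enters \emph{only} through the disjuncts $\lpre_1(W,X^{j-1})$. I would then run the inner induction of Lemma~\ref{lemm:step-2-limit1} level by level, with this extra $\lpre_1$ escape present in each operator: at each state player~1 commits to the single distribution witnessing the $\sdcup$, so that whenever player~2 plays to keep the probability of reaching $X^{j-1}$ a negligible fraction (defeating $\lpre_1$), every operator collapses to its plain counterpart and the inner formula reduces exactly to the almost-expression for case~2. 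By the established correctness of that expression (Theorem~\ref{theo-uniform-memless}), player~1 then wins $\ParityCond$ almost surely while staying inside the chunk; and whenever player~2 does not defeat the escape, the Basic $\lpre$ principle (Lemma~\ref{lemm:basiclpre}, with $Z=W$, $X=X^{j-1}$, $\cala=\ParityCond$) yields limit-sure reachability of $X^{j-1}$. Since $\ParityCond$ is prefix-independent, the two alternatives combine into the stated limit-sure guarantee.

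Finally, I would chain the chunks by a backward induction on $j$, exactly as in Lemma~\ref{lemm:step-2-limit1}: assuming that from $X^{j-1}$ player~1 can limit-surely ensure $\diam T \cup \ParityCond$, the chunk claim lets player~1 from $X^j$ either limit-surely reach $X^{j-1}$ or win $\ParityCond$ within $L_j$, and one further application of the Basic $\lpre$ principle glues these together. Because the chunks $L_1,\ldots,L_m$ partition $W$, the per-chunk memoryless strategies assemble into a single memoryless (infinite-precision) strategy; choosing per-chunk error parameters $\ve_j$ with $\prod_{j}(1-\ve_j)\geq 1-\ve$ then yields, from $X^m = W$, a memoryless strategy achieving $\ParityCond \cup \diam T$ with probability at least $1-\ve$, which is the claim.

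The hardest part will be the chunk claim, specifically the interaction of the $\lpre_1$ escape with the nested almost-sure fixpoint. Two points require care. First, because the $\sdcup$ forces player~1 to use one distribution for both purposes, I must verify rigorously that ``player~2 keeps $\lpre_1(W,X^{j-1})$-progress negligible'' is exactly the condition under which the inner formula reduces to the almost-expression for case~2, so that the almost-sure hypothesis of Theorem~\ref{theo-uniform-memless} becomes applicable; this is the analogue of ruling out $\apre_1(W,T_{m-i})$ in Lemma~\ref{lemm:step-2-limit1}, but with the $\forall\alpha\,\exists\dis_1$ quantifier structure of $\lpre_1$ making the reduction subtler. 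Second, the Basic $\lpre$ principle only delivers probability $1-\ve_j$ per chunk, so the quantitative bookkeeping across the $m$ chunks must be arranged as a product-convergent sequence of error parameters, and the level-by-level $\lpre$ reasoning inside a single chunk must be kept consistent with the across-chunk gluing.
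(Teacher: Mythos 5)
Your proposal matches the paper's proof essentially step for step: the same reformulation replacing $Y_n$ by $W$ to obtain a least fixpoint in $X_n$, the same decomposition into an increasing sequence of chunks, the same case split on whether player~2's response activates the $\lpre_1(W,\cdot)$ escape (with the formula collapsing to the almost-expression for case~2 when it does not, and the set of escape-activating positions treated as an auxiliary target), and the same application of the Basic $\lpre$ principle with $Z=W$ and $\cala=\ParityCond$ to glue the chunks. The only differences are cosmetic: forward versus backward indexing of the chunk sequence, and your explicit product bookkeeping of per-chunk error parameters where the paper absorbs the quantitative combination into the ``for all $\eta>0$'' form of the induction hypothesis.
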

\begin{proof}
We first reformulate the algorithm for computing $W$ in an equivalent form.
\[
\begin{array}{l}
\mu X_{n}. \nu Y_{n-1}.  \mu X_{n-1}.  \cdots \nu Y_1. \mu X_1. \nu Y_0. \mu X_0. 
\\ 
\left[
\begin{array}{c}
T \\
B_{2n} \cap \pre_1(Y_{n-1}) \sdcup \lpre_1(W,X_n) \\
\cup \\
B_{2n-1} \cap \lpreodd_1(0,Y_{n-1},X_{n-1}) \sdcup \lpre_1(W,X_{n}) \\
\cup \\ 
B_{2n-2} \cap \lpreeven_1(0,Y_{n-1},X_{n-1},Y_{n-2})  \sdcup \lpre_1(W,X_{n}) \\
\cup \\
B_{2n-3} \cap \lpreodd_1(1,Y_{n-1},X_{n-1},Y_{n-2},X_{n-2})  \sdcup \lpre_1(W,X_{n}) \\
\cup \\
\vdots \\
B_2 \cap \lpreeven_1(n-2,Y_{n-1},X_{n-1},\ldots,Y_1,X_1,Y_0)  \sdcup \lpre_1(W,X_{n}) \\
\cup \\
B_{1} \cap \lpreodd_1(n-1,Y_{n-1},X_{n-1}, \ldots,Y_0,X_0)  \sdcup \lpre_1(W,X_{n}) \\
\end{array}
\right]
\end{array}
\]
The reformulation is obtained as follows: since $W$ is the fixpoint of 
$Y_{n+1}$ we replace $Y_{n+1}$ by $W$ everywhere in the $\mu$-calculus formula,
and get rid of the outermost fixpoint.
The above mu-calculus formula is a least fixpoint and thus computes $W$ 
as an increasing sequence 
$T= T_0 \subset T_1 \subset T_2 \subset \cdots \subset T_m = W$ 
of states, where $m \geq 0$. 
Let $L_i = T_i \setm T_{i-1}$ and 
the sequence is computed by computing $T_i$ as follows, for $0 < i \leq m$: 
\[
\begin{array}{l} 
\nu Y_{n-1}.  \mu X_{n-1}.  \cdots \nu Y_1. \mu X_1. \nu Y_0. \mu X_0. \\ 
\left[
\begin{array}{c}
T \\
B_{2n} \cap \pre_1(Y_{n-1}) \sdcup \lpre_1(W,T_{i-1}) \\
\cup \\
B_{2n-1} \cap \lpreodd_1(0,Y_{n-1},X_{n-1}) \sdcup \lpre_1(W,T_{i-1}) \\
\cup \\ 
B_{2n-2} \cap \lpreeven_1(0,Y_{n-1},X_{n-1},Y_{n-2})  \sdcup \lpre_1(W,T_{i-1}) \\
\cup \\
B_{2n-3} \cap \lpreodd_1(1,Y_{n-1},X_{n-1},Y_{n-2},X_{n-2})  \sdcup \lpre_1(W,T_{i-1}) \\
\cup \\
\vdots \\
B_2 \cap \lpreeven_1(n-2,Y_{n-1},X_{n-1},\ldots,Y_1,X_1,Y_0)  \sdcup \lpre_1(W,T_{i-1}) \\
\cup \\
B_{1} \cap \lpreodd_1(n-1,Y_{n-1},X_{n-1}, \ldots,Y_0,X_0)  \sdcup \lpre_1(W,T_{i-1}) \\
\end{array}
\right]
\end{array}
\]
The above formula is obtained by simply replacing the variable $X_{n}$ by 
$T_{i-1}$.
The proof that $W \subs \limit_1(\IP,M, {\ParityCond \cup \diam T})$ 
is based on an induction on the sequence 
$T = T_0 \subset T_1 \subset T_2 \subset \cdots \subset T_m = W$.
For $1 \leq i \leq m$, let $V_i = W \setm T_{m-i}$, so that
$V_1$ consists of the last block of states that has been added, $V_2$
to the two last blocks, and so on until $V_m = W$. 
We prove by induction on $i \in \{1, \ldots, m\}$, from $i=1$ to $i=m$, 
that for all $s \in V_i$, for all $\eta >0$, 
there exists a memoryless strategy $\stra_1^\eta$ for player~1 such that for all 
$\stra_2 \in \bigstra_2$ we have
\[
\Prb_s^{\stra_1^\eta,\stra_2}\big(\diam T_{m-i} \cup \ParityCond \big) \geq 1-\eta. 
\]
Since the base case is a simplified version of the induction step, we
focus on the latter.

For $V_i \setm V_{i-1}$ we analyze the 
predecessor operator that $s\in V_i \setm V_{i-1}$ satisfies.
The predecessor operators are essentially the predecessor operators of 
the almost-expression for case~1 modified by the addition of the operator
$\lpre_1(W,T_{m-i}) \sdcup$. 
Note that since we fix memoryless strategies for player~1, the analysis
of counter-strategies for player~2 can be restricted to pure memoryless 
(as we have player-2 MDP).
We fix the memoryless strategy for player~1 according to the 
witness distribution of the predecessor operators, and 
consider a pure memoryless counter-strategy for player~2.
Let $Q$ be the set of states where player~2 plays such the 
$\lpre_1(W,T_{m-i})$ part of the predecessor operator
gets satisfied. 
Once we rule out the possibility of $\lpre_1(W,T_{m-i})$, then the 
$\mu$-calculus expression simplifies to the almost-expression of case~2
with $Q \cup T$ as the set of target, i.e.,
\beq 
\nonumber
\nu Y_{n-1}.  \mu X_{n-1}.  \cdots \nu Y_1. \mu X_1. \nu Y_0. \mu X_0. 
\\ 
\left[
\begin{array}{c}
(T \cup Q) \\
B_{2n} \cap \pre_1(Y_{n-1})  \\
\cup \\
B_{2n-1} \cap \lpreodd_1(0,Y_{n-1},X_{n-1})  \\
\cup \\ 
B_{2n-2} \cap \lpreeven_1(0,Y_{n-1},X_{n-1},Y_{n-2})  \\
\cup \\
B_{2n-3} \cap \lpreodd_1(1,Y_{n-1},X_{n-1},Y_{n-2},X_{n-2})  \\
\cup \\
\vdots \\
B_2 \cap \lpreeven_1(n-2,Y_{n-1},X_{n-1},\ldots,Y_1,X_1,Y_0)   \\
\cup \\
B_{1} \cap \lpreodd_1(n-1,Y_{n-1},X_{n-1}, \ldots,Y_0,X_0)  \\
\end{array}
\right]
\eeq
This ensures that if we rule out 
$\lpre_1(W,T_{m-i})$ from the predecessor operators 
and treat the set $Q$ as target, 
then by correctness of the almost-expression for case~2 we 
have that the $\ParityCond \cup \diam (Q \cup T)$ is 
satisfied with probability~1.
By applying the Basic Lpre Principle (Lemma~\ref{lemm:basiclpre}) 
with $Z=W$, $X=T_{m-i}$, $\cala=\ParityCond$ and $Y= X \cup Q$, 
we obtain that for all $\eta>0$ player~1 can ensure with a 
memoryless strategy that $\ParityCond \cup \diam T_{m-i}$ 
is satisfied with probability at least $1-\eta$.
This completes the inductive proof.
With $i=m$ we obtain that for all $\eta>0$, 
there exists a memoryless strategy 
$\stra_1^\eta$ such that for all states $s \in V_m=W$ and for all $\stra_2$ 
we have 
$\Prb_s^{\stra_1^\eta,\stra_2}(\diam T_0 \cup \ParityCond) 
\geq 1-\eta$.
Since $T_0=T$, the desired result follows.
\qed
\end{proof}

We now define the dual predecessor operators (the duality will be shown 
in Lemma~\ref{lemm:dual-limit}). 
We will first use the dual operators to characterize the complement of the 
set of limit-sure winning states for finite-memory infinite-precision 
strategies.
We now introduce two fractional predecessor operators as follows:
\begin{eqnarray*}
\lefteqn{\frpreodd_2(i,Y_n,X_n,\ldots,Y_{n-i},X_{n-i})} \\[1ex]
& = & \fpre_2(X_n,Y_n) \sdcup
\apre_2(X_{n},Y_{n-1}) \sdcup \cdots \sdcup \apre_2(X_{n-i+1},Y_{n-i}) \sdcup \pre_2(X_{n-i}) \\[2ex]
\lefteqn{\frpreeven_2(i,Y_n,X_n,\ldots,Y_{n-i},X_{n-i},Y_{n-i-1})} \\[1ex]
& = & \fpre_2(X_n,Y_n) \sdcup
\apre_2(X_{n},Y_{n-1}) \\[1ex]
& & \qquad \sdcup \cdots \sdcup \apre_2(X_{n-i+1},Y_{n-i}) \sdcup \apre_2(X_{n-i},Y_{n-i-1}) 
\end{eqnarray*}
The fractional operators are same as the $\fpreodd$ and $\fpreeven$ operators,
the difference is the $\epre_2(Y_n)$ is replaced by $\fpre_2(X_n,Y_n)$.
 
\begin{remark}
Observe that if we rule out the predicate $\fpre_2(X_{n},Y_n)$ 
the predecessor operator $\frpreodd_2(i,Y_{n},X_{n},Y_{n},\ldots,Y_{n-i},X_{n-i})$
(resp. $\frpreeven_2(i,Y_n,X_{n},Y_{n-1},\ldots,Y_{n-i},X_{n-i},Y_{n-i-1})$), then 
we obtain the simpler predecessor operator $\lpreeven_2(i,X_{n},Y_{n-1},\ldots,Y_{n-i},X_{n-i})$ 
(resp. $\lpreodd_2(i,X_{n},Y_{n-1},\ldots,Y_{n-i},X_{n-i},Y_{n-i-1})$). 
\end{remark}

The formal expanded definitions of the above operators are as follows:
\[
\begin{array}{l}
\lpreodd_1(i,Y_n,X_n,\ldots, Y_{n-i}, X_{n-i}) \sdcup 
\lpre_1(Y_{n+1}, X_{n+1}) =  \\[1ex] 
\setb{s \in S  \mid  \forall \alpha  > 0 \qdot 
\exists \dis_1 \in \sd^s_1.
\forall \dis_2 \in \sd^s_2. 
\left[\begin{array}{c}
  (\pr_s^{\dis_1,\dis_2}(X_{n+1}) > \alpha \cdot \pr_s^{\dis_1,\dis_2}(\no Y_{n+1})) \\
  \bigvee \\
  (\pr_s^{\dis_1,\dis_2}(X_n) > 0 \land \pr_s^{\dis_1,\dis_2}(Y_n)=1) \\
  \bigvee \\
  (\pr_s^{\dis_1,\dis_2}(X_{n-1}) > 0 \land \pr_s^{\dis_1,\dis_2}(Y_{n-1}) =1) \\
  \bigvee \\
  \vdots \\
  \bigvee \\
  (\pr_s^{\dis_1,\dis_2}(X_{n-i}) > 0 \land  \pr_s^{\dis_1,\dis_2}(Y_{n-i}) =1) 
\end{array} \right]
} \eqpun.
\end{array}
\]

\[
\begin{array}{l}
\lpreeven_1(i,Y_n,X_n,\ldots, Y_{n-i}, X_{n-i},Y_{n-i-1}) \sdcup 
\lpre_1(Y_{n+1},X_{n+1})=  \\[1ex] 
\setb{s \in S  \mid  \forall \alpha  > 0 \qdot 
\exists \dis_1 \in \sd^s_1.
\forall \dis_2 \in \sd^s_2. 
\left[\begin{array}{c}
  (\pr_s^{\dis_1,\dis_2}(X_{n+1}) > \alpha \cdot \pr_s^{\dis_1,\dis_2}(\no Y_{n+1})) \\
  \bigvee \\
  (\pr_s^{\dis_1,\dis_2}(X_n) > 0 \land \pr_s^{\dis_1,\dis_2}(Y_n)=1) \\
  \bigvee \\
  (\pr_s^{\dis_1,\dis_2}(X_{n-1}) > 0 \land \pr_s^{\dis_1,\dis_2}(Y_{n-1}) =1) \\
  \bigvee \\
  \vdots \\
  \bigvee \\
  (\pr_s^{\dis_1,\dis_2}(X_{n-i}) > 0 \land  \pr_s^{\dis_1,\dis_2}(Y_{n-i}) =1) \\
  \bigvee \\
  (\pr_s^{\dis_1,\dis_2}(Y_{n-i-1})=1)
\end{array} \right]
} \eqpun.
\end{array}
\]

The formal expanded definitions of the above operators are as follows:
\[
\begin{array}{l}
\frpreodd_2(i,Y_n,X_n,\ldots, Y_{n-i},X_{n-i}) = \\[1ex]
 \setb{s \in S \mid   
  \exists \beta > 0. 
  \forall \dis_1  \in \sd^s_1. 
  \exists \dis_2  \in \sd^s_2. 
  \left[ \begin{array}{c}
	(\pr_s^{\dis_1,\dis_2}(Y_n) \geq \beta \cdot \pr_s^{\dis_1,\dis_2}(\no X_n)) \\
	\bigvee \\
	(\pr_s^{\dis_1,\dis_2}(Y_{n-1})  > 0 \land \pr_s^{\dis_1,\dis_2}(X_{n}) =1) \\
        \bigvee \\
	(\pr_s^{\dis_1,\dis_2}(Y_{n-2})  > 0 \land \pr_s^{\dis_1,\dis_2}(X_{n-1}) =1) \\
        \bigvee \\
        \vdots \\
        \bigvee \\
	(\pr_s^{\dis_1,\dis_2}(Y_{n-i})  > 0 \land \pr_s^{\dis_1,\dis_2}(X_{n-i+1}) =1) \\
        \bigvee \\
	(\pr_s^{\dis_1,\dis_2}(X_{n-i})=1) 
  \end{array} \right]
 } \eqpun .
\end{array}
\]
\[
\begin{array}{l}
\frpreeven_2(i,Y_n,X_n,\ldots, Y_{n-i},X_{n-i},Y_{n-i-1}) = \\[1ex]
\setb{s \in S \mid   
\exists \beta > 0.  
\forall \dis_1  \in \sd^s_1. 
\exists \dis_2  \in \sd^s_2. 
  \left[ \begin{array}{c}
	(\pr_s^{\dis_1,\dis_2}(Y_n) \geq \beta\cdot \pr_s^{\dis_1,\dis_2}(\no X_n)) \\
	\bigvee \\
	(\pr_s^{\dis_1,\dis_2}(Y_{n-1})  > 0 \land \pr_s^{\dis_1,\dis_2}(X_{n}) =1) \\
        \bigvee \\
	(\pr_s^{\dis_1,\dis_2}(Y_{n-2})  > 0 \land \pr_s^{\dis_1,\dis_2}(X_{n-1}) =1) \\
        \bigvee \\
        \vdots \\
        \bigvee \\
	(\pr_s^{\dis_1,\dis_2}(Y_{n-i-1})  >0 \land \pr_s^{\dis_1,\dis_2}(\no X_{n-i}) =1) 
  \end{array} \right]
 } \eqpun .
\end{array}
\]

We now show the dual of Lemma~\ref{lemm:limit-infprec}.

\begin{lem}{}\label{lemm:limit-infprec1}
For a parity function $p:S \mapsto [1..2n]$  we have 
$Z \subs \no \limit_1(\IP,\FM, \ParityCond)$, where $Z$ is defined 
as follows:
\[
\begin{array}{l}
\mu Y_{n}.  \nu X_{n}. \mu Y_{n-1}. \nu X_{n-1}. \cdots \mu Y_1. \nu X_1. \mu Y_0. \nu X_0. \\
\left[
\begin{array}{c}
B_{2n} \cap \frpreeven_2(0,Y_n,X_n,Y_{n-1} \\
\cup \\
B_{2n-1} \cap \frpreodd_2(1,Y_{n},X_{n},Y_{n-1},X_{n-1}) \\
\cup \\ 
B_{2n-2} \cap \frpreeven_2(1,Y_{n},X_{n},Y_{n-1},X_{n-1},Y_{n-2}) \\
\cup \\
B_{2n-3} \cap \frpreodd_2(2,Y_{n},X_{n},Y_{n-1},X_{n-1},Y_{n-2},X_{n-2}) \\
\cup \\
B_{2n-4} \cap \frpreeven_2(2,Y_{n},X_{n},Y_{n-1},X_{n-1},Y_{n-2},X_{n-2},Y_{n-3}) \\
\vdots \\
B_{3} \cap \frpreodd_2(n-1,Y_{n},X_{n},Y_{n-1},X_{n-1}, \ldots,Y_1,X_1) \\
\cup \\
B_2 \cap \frpreeven_2(n-1,Y_{n},X_{n},Y_{n-1},X_{n-1},\ldots,Y_1,X_1,Y_0) \\
\cup \\
B_1 \cap \frpreodd_2(n,Y_{n},X_{n},Y_{n-1},X_{n-1},\ldots,Y_1,X_1,Y_0,X_0)
\end{array}
\right]
\end{array}
\]
\end{lem}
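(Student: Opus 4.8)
The plan is to dualize the argument of Lemma~\ref{lemm:limit-infprec}: where that proof built a memoryless player-1 strategy by repeatedly invoking the Basic $\lpre$ principle, here I will build a memoryless player-2 spoiling strategy and read the Basic $\lpre$ principle (Lemma~\ref{lemm:basiclpre}) from player~2's side. For $k \geq 0$ let $Z_k$ be the set of states of level $k$ in the $\mu$-calculus expression defining $Z$. Since $Z_0 = \emptyset$, it suffices to prove by induction on $k$ that there is a constant $\eta_k > 0$ such that player~2 has a memoryless strategy $\stra_2$ for which, from every $s \in Z_k$ and against \emph{every} player-1 strategy, either $Z_{k-1}$ is entered with probability at least a positive multiple of the probability of escaping the greatest-fixpoint region $X_n$, or $\Prb_s^{\stra_1,\stra_2}(\coParityCond) \geq \eta_k$. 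Crucially, because every predecessor predicate occurring in the formula quantifies over \emph{all} player-1 distributions $\dis_1$, the same memoryless $\stra_2$ witnesses the bound against finite-memory infinite-precision strategies as well; for a fixed $\stra_1 \in \bigstra_1^{\IP} \cap \bigstra_1^{\FM}$ the resulting probability estimate is then verified by passing to the finite Markov chain $G \times M$ and a closed-recurrent-class argument exactly as in Example~\ref{examp:limit-diff}. As $Z_0$ is empty, the induction yields $Z_k \cap \limit_1(\IP,\FM,\ParityCond) = \emptyset$, which is the claim.

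Writing $Z_k$ as obtained from $Z_{k-1}$ by substituting $Z_{k-1}$ for the outermost variable $Y_n$ in each fractional operator, the inductive step splits according to whether player~1 leaves the escape predicate $\fpre_2(X_n, Z_{k-1})$ satisfiable. If it does, then by the defining inequality of $\fpre_2$ player~2 can respond with a distribution making the probability of progress to $Z_{k-1}$ at least a fixed positive fraction $\beta$ of the probability of leaving $X_n$; driving this at every visit forces $Z_{k-1}$ to be reached with probability bounded below, and the inductive hypothesis for $Z_{k-1}$ closes this branch.

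If instead player~1 prevents the escape, the predicate $\fpre_2(X_n, Z_{k-1})$ becomes unsatisfiable, and by the Remark following the fractional operators $\frpreodd_2$ collapses to $\lpreeven_2$ and $\frpreeven_2$ collapses to $\lpreodd_2$. The residual formula is then exactly the dual almost-expression for case~2 with parity $p:S \mapsto [1..2n]$. By the correctness of that dual almost-expression --- obtained from the correctness of the almost-expression for case~2 by exchanging the two players' roles and letting player~2 spoil a memoryless player~1, precisely as the positive-expressions are handled in Lemma~\ref{lemm:step-1-limit2} and Lemma~\ref{lemm:step-2-limit2} --- player~2 ensures $\coParityCond$ with probability~$1$ on this residual region. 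The duality of the fractional operators (Lemma~\ref{lemm:dual-limit}), playing the role that Lemma~\ref{lemm:dual} played for the almost-sure operators, supplies the two facts that make the case split work: it shows the split is exhaustive over all player-1 distributions, and it furnishes a uniform witness distribution for player~2 in each case.

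The step I expect to be the main obstacle is producing a single constant $\eta_k > 0$, rather than a positive-but-vanishing bound. In the almost-sure positive-expression proofs it sufficed to reach $Z_{k-1}$ ``with positive probability'' and no quantitative control was required; here, because player~1 ranges over infinite-precision $\varepsilon$-optimal strategies, I must show that the fractional constants $\beta$ emitted by the several $\fpre_2$ predicates combine --- through the nested greatest fixpoint $\nu X_n$ and the inner least fixpoints --- into one bound $\eta_k$ that depends only on the finite structure of $G$ and not on player~1's distributions. This uniformity is exactly what distinguishes spoiling of \emph{limit-sure} winning (probability held away from~$1$) from spoiling of mere almost-sure winning, and checking that the composition of fractional guarantees does not degenerate is the technical crux.
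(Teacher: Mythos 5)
Your overall route is the paper's: induct on the level sets $Z_k$ of the outer least fixpoint, split on whether the fractional escape predicate $\fpre_2(X_n,Z_{k-1})$ can be made to fire, collapse $\frpreodd_2$ and $\frpreeven_2$ to $\lpreeven_2$ and $\lpreodd_2$ when it cannot, invoke the correctness of a previously established dual almost-expression to get $\coParityCond$ with probability~1 against memoryless player-1 strategies, and lift to finite-memory strategies through the product chain $G\times\mem$ and its closed recurrent classes. The uniformity issue you single out at the end is indeed where the quantitative content lives, and the paper resolves it the way you anticipate: the constant comes from the $\exists\beta>0$ built into the definition of $\fpre_2$ together with finiteness of states and moves, and the guarantee is stated \emph{relatively} --- progress to $Z_{k-1}$ is at least a $\beta_k$-fraction of the probability of leaving the $\nu X_n$ region --- rather than as an absolute lower bound on reaching $Z_{k-1}$; your phrase ``forces $Z_{k-1}$ to be reached with probability bounded below'' must be read in that relative sense, since player~1 may never leave the region, in which case only the residual analysis applies.

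The one step that fails as written is the identification of the residual formula. After $\fpre_2(X_n,Z_{k-1})$ vanishes, the residual is \emph{not} the dual almost-expression for case~2 with the original parity $p:S\mapsto[1..2n]$: it has $2n+1$ alternating fixpoint variables beginning with $\nu X_n$ (one more alternation than the case~2 expression), and its top-priority term is the two-argument operator $B_{2n}\cap\apre_2(X_n,Y_{n-1})$ rather than a one-argument $\pre$ term. It is the dual almost-expression for \emph{case~1} under the shifted priority function $p-1:S\mapsto[0..2n-1]$, chosen precisely so that $\Parity(p-1)=\coParityCond$; that shift is what makes the invocation yield player~2 achieving $\coParityCond$ with probability~1 against memoryless player-1 strategies. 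As you state it --- case~2 format with the unshifted $p$, ``exchanging the two players' roles'' --- the formula either does not match the residual or asserts that player~2 forces $\ParityCond$, which is player~1's objective, not $\coParityCond$. The repair is local and mechanical, but it needs to be made: the entire induction is organized by which case or step each residual reduces to, and the case/priority bookkeeping is exactly what guarantees the cited expression has already been proven correct at that point of the induction.
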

\begin{proof} 
For $k \geq 0$, let $Z_k$ be the set of states of level $k$ in the above 
$\mu$-calculus expression.
We will show that in $Z_k$, there exists constant $\beta_k>0$, 
such that for every finite-memory strategy for player~1, 
player~2 can ensure that either $Z_{k-1}$ is reached with probability at least $\beta_k$ 
or else $\coParityCond$ is satisfied with probability~1 by staying in 
$(Z_k\setm Z_{k-1})$.
Since $Z_0=\emptyset$, it would follow by induction that 
$Z_k \cap \limit_1(\IP,\FM,\ParityCond)=\emptyset$ and the desired result 
will follow.

 We obtain  $Z_k$ from $Z_{k-1}$ by adding a set of states 
satisfying the following condition:
\[
\begin{array}{l} 
\nu X_{n}. \mu Y_{n-1}. \nu X_{n-1}. \cdots \mu Y_1. \nu X_1. \mu Y_0. \nu X_0. \\
\left[
\begin{array}{c}
B_{2n} \cap \frpreeven_2(0,Z_{k-1},X_n,Y_{n-1} \\
\cup \\
B_{2n-1} \cap \frpreodd_2(1,Z_{k-1},X_{n},Y_{n-1},X_{n-1}) \\
\cup \\ 
B_{2n-2} \cap \frpreeven_2(1,Z_{k-1},X_{n},Y_{n-1},X_{n-1},Y_{n-2}) \\
\cup \\
B_{2n-3} \cap \frpreodd_2(2,Z_{k-1},X_{n},Y_{n-1},X_{n-1},Y_{n-2},X_{n-2}) \\
\cup \\
B_{2n-4} \cap \frpreeven_2(2,Z_{k-1},X_{n},Y_{n-1},X_{n-2},Y_{n-2},X_{n-2},Y_{n-3}) \\
\vdots \\
B_{3} \cap \frpreodd_2(n-1,Z_{k-1},X_{n},Y_{n-1},X_{n-1}, \ldots,Y_1,X_1) \\
\cup \\
B_2 \cap \frpreeven_2(n-1,Z_{k-1},X_{n},Y_{n-1},X_{n-1},\ldots,Y_1,X_1,Y_0) \\
\cup \\
B_1 \cap \frpreodd_2(n,Z_{k-1},X_{n},Y_{n-1},X_{n-1},\ldots,Y_1,X_1,Y_0,X_0)
\end{array}
\right]
\end{array}
\]
The formula is obtained by removing the outer $\mu$ operator, and replacing
$Y_{n+1}$ by $Z_{k-1}$ (i.e., we iteratively obtain the outer fixpoint 
of $Y_{n+1}$).
If the probability of reaching to $Z_{k-1}$ is not positive, then the 
following conditions hold:
\begin{itemize}
\item If the probability to reach $Z_{k-1}$ is not positive, then the 
predicate $\fpre_2(X_{n},Z_{k-1})$ vanishes from the predecessor operator 
$\frpreodd_2(i,Z_{k-1},X_{n},Y_{n-1},\ldots,Y_{n-i},X_{n-i})$, and thus 
the operator simplifies to the simpler predecessor operator 
$\lpreeven_2(i,X_{n},Y_{n-1},\ldots,Y_{n-i},X_{n-i})$.  
\item If the probability to reach $Z_{k-1}$ is not positive, then the 
predicate $\fpre_2(X_{n},Z_{k-1})$ vanishes from the predecessor operator
$\frpreeven_2(i,Z_{k-1},X_{n},Y_{n-1},\ldots,Y_{n-i},X_{n-i},Y_{n-i-1})$, 
and thus the operator simplifies to the simpler predecessor operator
$\lpreodd_2(i,X_{n},Y_{n-1},\ldots,Y_{n-i},X_{n-i},Y_{n-i-1})$. 
\end{itemize}
Hence either the probability to reach $Z_{k-1}$ is positive, and if the probability to reach $Z_{k-1}$ is
not positive,
then the above $\mu$-calculus expression simplifies to
\beq 
\nonumber
Z^*= \nu X_n. \mu Y_{m-1} \nu X_{m-1} \cdots \mu Y_1. \nu X_1. \mu Y_0. 
\left[
\begin{array}{c}
B_{2n} \cap \lpreodd_2(0,X_n,Y_{n-1}) \\
\cup \\
B_{2n-1} \cap \lpreeven_2(1,X_n,Y_{n-1},X_{n-1}) \\
\cup \\
B_{2n-2} \cap \lpreodd_2(1,X_n,Y_{n-1},X_{n-1},Y_{n-2}) \\
\vdots \\
B_{3} \cap \lpreeven_2(n-2,X_n, \ldots,Y_1,X_1) \\
\cup \\
B_2 \cap \lpreodd_2(n-1,X_n,\ldots,Y_1,X_1,Y_0) \\
\cup \\
B_1 \cap \lpreeven_2(n-1,X_n,\ldots,Y_1,X_1,Y_0,X_0) \\
\end{array}
\right].
\eeq
We now consider the parity function $p-1:S\mapsto[0 .. 2n-1]$, and 
observe that the above formula is same as the dual almost-expression for 
case~1.
By correctness of the dual almost-expression we we have $Z^* \subs 
\set{s \in S \mid \forall \stra_1 \in \bigstra_1^M. \exists \stra_2 
\in \bigstra_2. \Pr_s^{\stra_1,\stra_2} 
({\coParityCond})=1}$ 
(since $\Parity(p+1)=\coParityCond$).
It follows that if probability to reach $Z_{k-1}$ is not positive, then against every memoryless 
strategy for player~1, player~2 can fix a pure memoryless strategy to ensure that 
player~2 wins with probability~1. 
In other words, against every distribution of player~1, there is a counter-distribution 
for player~2 (to satisfy the respective $\lpreeven_2$ and $\lpreodd_2$ operators) 
to ensure to win with probability~1.
It follows that for every memoryless strategy for player~1, 
player~2 has a pure memoryless strategy to ensure that for every closed recurrent 
$C \subs Z^*$ we have $\min(p(C))$ is odd. 
It follows that for any finite-memory strategy for player~1 with $\mem$, 
player~2 has a finite-memory strategy to ensure that for every closed recurrent set 
$C'\times \mem'\subs Z^* \times \mem$, the closed recurrent set $C'$ is a union 
of closed recurrent sets $C$ of $Z^*$, and hence  $\min(p(C'))$ is odd
(also see Example~\ref{examp:limit-diff} as an illustration).
It follows that against all finite-memory strategies, player~2 can ensure if 
the game stays in $Z^*$, then $\coParityCond$ is satisfied with 
probability~1.
The $\fpre_2$ operator ensures that if $Z^*$ is left and $Z_{k-1}$ is 
reached, then the probability to reach $Z_{k-1}$ is at least a positive fraction 
$\beta_k$ of the probability to leave $Z_k$. 
In all cases it follows that 
$Z_{k} \subs \set{s \in S \mid \exists \beta_k >0. \forall \stra_1 
\in \bigstra_1^{\FM}. \exists \stra_2 \in \bigstra_2. 
\Prb_s^{\stra_1,\stra_2}(\coParityCond \cup \diam Z_{k-1}) \geq \beta_k
}$.
Thus the desired result follows.
\qed
\end{proof}

\begin{lem}{(Duality of limit predecessor operators).}\label{lemm:dual-limit}
The following assertions hold.
\begin{enumerate}
\item Given $X_{n+1} \subs X_n \subs X_{n-1} \subs \cdots \subs X_{n-i} \subs Y_{n-i} \subs
Y_{n-i+1} \subs \cdots \subs Y_n \subs Y_{n+1} $, 
we have 
\[
\begin{array}{rcl}
&\frpreodd_2 &(i+1,\no Y_{n+1},\no X_{n+1}, \no Y_n,\no X_n,  \ldots ,\no Y_{n-i},\no X_{n-i}) \\ & = &
\no (\lpreodd_1(i,Y_n,X_n, \ldots,Y_{n-i},X_{n-i}) \sdcup \lpre_1(Y_{n+1},X_{n+1}) ).
\end{array}
\]

\item Given $X_{n+1} \subs X_n \subs X_{n-1} \subs \cdots \subs X_{n-i} \subs Y_{n-i-1}
\subs Y_{n-i} \subs
Y_{n-i+1} \subs \cdots \subs Y_n \subs Y_{n+1} $ and $s \in S$,
we have
\[
\begin{array}{rcl}
&\frpreeven_2&(i+1,\no Y_{n+1}, \no X_{n+1},  \no Y_n,\no X_n, \ldots ,\no Y_{n-i},\no X_{n-i},\no Y_{n-i-1}) \\ 
& = &
\no (\lpreeven_1(i,Y_n,X_n, \ldots,Y_{n-i},X_{n-i},Y_{n-i-1}) \sdcup \lpre_1(Y_{n+1},X_{n+1}) ).
\end{array}
\]
\end{enumerate}
\end{lem}
\begin{proof} 
We present the proof for part 1, and the proof for second part is analogous.
To present the proof of the part 1, we present the proof for the 
case when $n=1$ and $i=1$. This proof already has all the ingredients of the 
general proof, and the generalization is straightforward as in 
Lemma~\ref{lemm:dual}.

\smallskip\noindent{\bf Claim.}
We show that for $X_1 \subs X_0 \subs Y_0 \subs Y_1$ we have 
$\fpre_2(\no X_1,\no Y_1) \sdcup \apre_2(\no X_1, \no Y_0) \sdcup \pre_2(\no X_0) = 
\no (\lpre_1(Y_1, X_1) \sdcup  \apre_1(Y_0,X_0))$.
We start with a few notations.
Let $\St \subs \mov_2(s)$ and $\Wk \subs \mov_2(s)$ be set of \emph{strongly} 
and \emph{weakly} covered actions for player~2.
Given $\St \subs \Wk \subs \mov_2(s)$, we say that 
a set $U \subs \mov_1(s)$ satisfy \emph{consistency} condition if 
\[ 
\begin{array}{l}
\forall b \in \St.\ \dest(s,U,b) \cap X_1 \neq \emptyset
\\
\forall b \in \Wk.\ 
(\dest(s,U,b) \cap X_1 \neq \emptyset) \lor 
(\dest(s,U,b) \subs Y_0 \land \dest(s,U,b) \cap X_0 \neq \emptyset) 
\end{array}
\]
A triple $(U,\St,\Wk)$ is consistent if $U$ satisfies the consistency condition. 
We define a function $f$ that takes as argument a triple $(U,\St,\Wk)$ that 
is consistent, and returns three sets 
$f(U,\St,\Wk)=(U',\St',\Wk')$ satisfying the following conditions:
\[
\begin{array}{l}
(1)\ \dest(s,U', \mov_2(s) \setm \Wk) \subs Y_1; \\
(2)\ \St'=\set{b \in \mov_2(s) \mid \dest(s,U',b) \cap X_1 \neq \emptyset} \\
(3)\ \Wk'=\set{b \in \mov_2(s) \mid 
(\dest(s,U',b) \cap X_1 \neq \emptyset) \lor 
(\dest(s,U',b) \subs Y_0 \land \dest(s,U',b) \cap X_0 \neq \emptyset) 
}
\end{array}
\]
We require that $(U,\St,\Wk) \subs (U',\St',\Wk')$ and also require $f$ to 
return a larger set than the input arguments, if possible.
We now consider a sequence of actions sets until a fixpoint is reached:
$\St_{-1}=\Wk_{-1}=U_{-1} =\emptyset$ and 
for $i \geq 0$ we have $(U_{i},\St_{i},\Wk_{i})= f(U_{i-1},\St_{i-1},\Wk_{i-1})$.
Let $(U_*, \St_*, \Wk_*)$ be the set fixpoints (that is $f$ cannot return 
any larger set).
Observe that every time $f$ is invoked it is ensured that the argument form 
a consistent triple.
Observe that we have $\St_i \subs \Wk_i$ and hence 
$\St_* \subs \Wk_*$.
We now show the following two claims.

\begin{enumerate}
\item We first show that if  $\Wk_*=\mov_2(s)$, then 
$s \in \lpre_1(Y_1,X_1) \sdcup \apre_1(Y_0,X_0)$. 
We first define the rank of actions: for an action 
$a \in U_*$ the rank $\ell(a)$ of the action is 
$\min_{i} a \in U_i$.
For an action $b \in \mov_2(s)$, if $b \in \St_*$, then
the strong rank $\ell_s(b)$ is defined as $\min_{i} b \in \St_i$; 
and for an action $b \in \Wk_*$, the weak rank $\ell_w(b)$ is 
defined as $\min_{i} b \in \Wk_i$.
For $\ve>0$, consider a distribution that plays actions in 
$U_i$ with probability proportional to $\ve^i$.
Consider an action $b$ for player~2. We consider the following cases:
(a) If $b \in \St_*$, then let $j=\ell_s(b)$. Then for all actions 
$a \in U_j$ we have $\dest(s,a,b) \subs Y_1$ and for some action $a \in U_j$
we have $\dest(s,a,b) \cap X_1 \neq \emptyset$, in other words, the 
probability to leave $Y_1$ is at most proportional to $\ve^{j+1}$ and 
the probability to goto $X_1$ is at least proportional to $\ve^j$, and 
the ratio is $\ve$.
Since $\ve>0$ is arbitrary, the $\lpre_1(Y_1,X_1)$ part can be ensured.
(b)~If $b \not\in \St_*$, then let $j=\ell_w(b)$. Then for all $a \in U_*$
we have $\dest(s,a,b) \subs Y_0$ and there exists $a \in U_*$
such that $\dest(s,a,b) \cap X_0 \neq \emptyset$.
It follows that in first case the condition for $\lpre_1(Y_1,X_1)$ is 
satisfied, and in the second case the condition for $\apre_1(Y_0,X_0)$ 
is satisfied. 
The desired result follows.

\item We now show that $\mov_2(s) \setm \Wk_* \neq \emptyset$, then 
$s \in \fpre_2(\no X_1,\no Y_1) \sdcup \apre_2(\no X_1, \no Y_0) \sdcup \pre_2(\no X_0)$. 
Let $\ov{U}=\mov_1(s) \setm U_*$, and let $B_k=\mov_2(s) \setm \Wk_*$ and 
$B_s =\mov_2(s) \setm \St_*$.
We first present the required properties about the actions that follows 
from the fixpoint characterization. 
\begin{enumerate}
\item \emph{Property~1.} For all $b \in B_k$, for all $a \in U_*$ we 
have 
\[ 
\dest(s,a,b) \subs \no X_1 \land (\dest(s,a,b) \subs \no X_0 \lor 
\dest(s,a,b) \cap \no Y_0 \neq \emptyset).
\]  
Otherwise the action $b$ would have been included in $\Wk_*$ and 
$\Wk_*$ could be enlarged.

\item \emph{Property~2.} For all $b \in B_s$ and for all $a \in U_*$ 
we have $\dest(s,a,b) \subs \no X_1$.
Otherwise $b$ would have been included in $\St_*$ and $\St_*$ could be
enlarged.

\item \emph{Property~3.} For all $a \in \ov{U}$, 
either 
\begin{enumerate}
\item $\dest(s,a,B_k) \cap \no Y_1 \neq \emptyset$; 
or 
\item  for all $b \in B_s$, $\dest(s,a,b) \subs \no X_1$ 
and for all $b \in B_k$, 
\[
\dest(s,a,b) \subs \no X_1 \land  
(\dest(s,a,b) \subs \no X_0 \lor \dest(s,a,b) \cap \no Y_0 \neq \emptyset)
\]
\end{enumerate}
The property is proved as follows: 
if $\dest(s,a, B_k) \subs Y_1$ and for some $b \in B_s$ we have 
$\dest(s,a,b) \cap X_1 \neq \emptyset$, then $a$ can be included in $U_*$ and 
$b$ can be included in $\St_*$;
if $\dest(s,a, B_k) \subs Y_1$ and for some $b \in B_k$ we have 
\[
(\dest(s,a,b) \cap X_1 \neq \emptyset) \lor 
(\dest(s,a,b) \cap X_0 \neq \emptyset \land \dest(s,a,b) \subs  Y_0)
\]
then $a$ can be included in $U_*$ and 
$b$ can be included in $\Wk_*$.
This would contradict that $(U_*,\St_*,\Wk_*)$ is a fixpoint.
\end{enumerate}

Let $\dis_1$ be a distribution for player~1. Let $Z=\supp(\dis_1)$. 
We consider the following cases to establish the result.
\begin{enumerate}
\item We first consider the case when $Z \subs U_*$. We consider the counter distribution 
$\dis_2$ that plays all actions in $B_k$ uniformly. Then by property~1 we have
(i)~$\dest(s,\dis_1,\dis_2) \subs \no X_1$; and 
(ii)~for all $a \in Z$ we have $\dest(s,a,\dis_2) \subs \no X_0$ or 
$\dest(s,a,\dis_2) \cap \no Y_0 \neq \emptyset$. 
If for all $a \in Z$ we have $\dest(s,a,\dis_2) \subs \no X_0$, then 
$\dest(s,\dis_1,\dis_2) \subs \no X_0$ and $\pre_2(\no X_0)$ is satisfied.
Otherwise we have $\dest(s,\dis_1,\dis_2) \subs \no X_1$ and $\dest(s,\dis_1,\dis_2) 
\cap \no Y_0 \neq \emptyset$, i.e., $\apre_2(\no X_1, \no Y_0)$ is 
satisfied.

\item We now consider the case when $Z \cap \ov{U} \neq \emptyset$. 
Let $U_0=U_*$, and we will iteratively compute sets $U_0 \subs U_i \subs Z$ 
such that (i)~$\dest(s,U_i,B_s) \subs \no X_1$ and 
(ii)~for all $a \in U_i$ we have $\dest(s,a,B_k) \subs \no X_0$ or 
$\dest(s,a,B_k) \subs \no Y_0$  
(unless we have already witnessed that player~2 can satisfy the predecessor 
operator).
In base case the result holds by property~2.
The argument of an iteration is as follows, and we use $\ov{U}_i =Z \setm U_i$. 
Among the actions of $Z \cap\ov{U}_i$, let $a^*$ be the action played 
with maximum probability. 
We have the following two cases.
\begin{enumerate}
\item 
If there exists $b \in B_s$ such that $\dest(s,a^*,b) \cap \no Y_1 
\neq \emptyset$, consider the counter action $b$. 
Since $b \in B_s$, by hypothesis we have $\dest(s,U_i,b) \subs \no X_1$.
Hence the probability to go out of $\no X_1$ is at most the total 
probability of the actions in $Z \cap \ov{U}_i$ and for the
maximum probability action $a^* \in Z \cap \ov{U}_i$ the set 
$\no Y_1$ is reached. Let $\eta>0$ be the minimum positive transition 
probability, then fraction of probability to go to $\no Y_1$ 
as compared to go out of $\no X_1$ is at least 
$\beta=\eta \cdot \frac{1}{|\mov_1(s)|}>0$.
Thus $\fpre_2(\no X_1,\no Y_1)$ can be ensured by playing $b$.  
 
\item Otherwise, by property~3, 
(i)~either $\dest(s,a^*,B_k) \cap \no Y_1 \neq \emptyset$, 
or 
(ii)~for all $b \in B_s$ we have $\dest(s,a_*,b) \subs \no X_1$ and for all $b \in B_k$ 
\[
\dest(s,a^*,b) \subs \no X_1 \land  
(\dest(s,a^*,b) \subs \no X_0 \lor \dest(s,a^*,b) \cap \no Y_0 \neq \emptyset)
\]
If $\dest(s,a^*,B_k) \cap \no Y_1 \neq \emptyset$, then chose the action 
$b \in B_k$ such that $\dest(s,a^*,b) \cap \no Y_1 \neq \emptyset$. 
Since $b \in B_k \subs B_s$, and by hypothesis $\dest(s,U_i,B_s) 
\subs \no X_1$,  we have $\dest(s,U_i,b) \subs \no X_1$.
Thus we have a witness action $b$ exactly as in the previous case, and 
like the proof above $\fpre_2(\no X_1,\no Y_1)$ can be ensured.
If $\dest(s,a^*,B_k) \subs Y_1$, then we claim that 
$\dest(s,a^*, B_s) \subs \no X_1$. 
The proof of the claim is as follows: if $\dest(s,a^*,B_k) \subs Y_1$
and $\dest(s,a^*,B_s) \cap X_1 \neq \emptyset$, then chose the action $b^*$ 
from $B_s$ such that $\dest(s,a^*,b^*) \cap X_1 \neq \emptyset$, and 
then we can include $a^*$ to $U_*$ and $b^*$ to $\St_*$ (contradicting that 
they are the fixpoints). 
It follows that we can include $a^* \in U_{i+1}$ and continue.
\end{enumerate}
Hence we have either already proved that player~2 can ensure the 
predecessor operator or $U_i=Z$ in the end. 
If $U_i$ is $Z$ in the end, then $Z$ satisfies the property used in the 
previous cases of $U_*$ (the proof of part a), and then as in the 
previous proof (of part a), the uniform distribution over $B_k$ is a 
witness that player~2 can ensure $\pre_2(X_0) \sdcup \apre_2(\no X_1, \no Y_0)$.

\end{enumerate}

\end{enumerate}

\smallskip\noindent{\bf General case.} The proof for the general case is a
tedious extension of the result presented for $n=1$ and $i=1$. 
We present the details for the sake of completeness.
We show that for $X_{n+1} \subs X_n \subs X_{n-1} \subs \cdots \subs X_{n-i} \subs Y_{n-i} \subs
Y_{n-i+1} \subs \cdots \subs Y_n \subs Y_{n+1} $, 
we have 
\[
\begin{array}{rcl}
&\frpreodd_2 &(i+1,\no Y_{n+1},\no X_{n+1}, \no Y_n,\no X_n,  \ldots ,\no Y_{n-i},\no X_{n-i}) \\ & = &
\no (\lpreodd_1(i,Y_n,X_n, \ldots,Y_{n-i},X_{n-i}) \sdcup \lpre_1(Y_{n+1},X_{n+1}) ).
\end{array}
\]
We use notations similar to the special case.
Let $\St \subs \mov_2(s)$ and $\Wk \subs \mov_2(s)$ be set of \emph{strongly} 
and \emph{weakly} covered actions for player~2.
Given $\St \subs \Wk \subs \mov_2(s)$, we say that 
a set $U \subs \mov_1(s)$ satisfy \emph{consistency} condition if 
\[ 
\begin{array}{l}
\forall b \in \St.\ \dest(s,U,b) \cap X_{n+1} \neq \emptyset
\\
\forall b \in \Wk.\ 
(\dest(s,U,b) \cap X_{n+1} \neq \emptyset) \lor 
\exists 0 \leq j \leq i. (\dest(s,U,b) \subs Y_{n-j} \land \dest(s,U,b) \cap 
X_{n-j} \neq \emptyset) 
\end{array}
\]
A triple $(U,\St,\Wk)$ is consistent if $U$ satisfies the consistency condition. 
We define a function $f$ that takes as argument a triple $(U,\St,\Wk)$ that 
is consistent, and returns three sets 
$f(U,\St,\Wk)=(U',\St',\Wk')$ satisfying the following conditions:
\[
\begin{array}{l}
(1)\ \dest(s,U', \mov_2(s) \setm \Wk) \subs Y_{n+1}; \\
(2)\ \St'=\set{b \in \mov_2(s) \mid \dest(s,U',b) \cap X_{n+1} \neq \emptyset} \\
(3)\ \Wk'=\set{b \in \mov_2(s) \mid 
(\dest(s,U',b) \cap X_{n+1} \neq \emptyset) \lor \\
\qquad \qquad \qquad \qquad \qquad \qquad \exists 0\leq j \leq i.  
(\dest(s,U',b) \subs Y_{n-j} \land \dest(s,U',b) \cap X_{n-j} \neq \emptyset) 
}
\end{array}
\]
We require that $(U,\St,\Wk) \subs (U',\St',\Wk')$ and also require $f$ to 
return a larger set than the input arguments, if possible.
We now consider a sequence of actions sets until a fixpoint is reached:
$\St_{-1}=\Wk_{-1}=U_{-1} =\emptyset$ and 
for $i \geq 0$ we have $(U_{i},\St_{i},\Wk_{i})= f(U_{i-1},\St_{i-1},\Wk_{i-1})$.
Let $(U_*, \St_*, \Wk_*)$ be the set fixpoints (that is $f$ cannot return 
any larger set).
Observe that every time $f$ is invoked it is ensured that the argument form 
a consistent triple.
Observe that we have $\St_i \subs \Wk_i$ and hence 
$\St_* \subs \Wk_*$.
We now show the following two claims.

\begin{enumerate}
\item We first show that if  $\Wk_*=\mov_2(s)$, then 
$s \in \lpre_1(Y_{n+1},X_{n+1}) \sdcup 
\lpreodd_1(i,Y_n,X_n,\ldots,Y_{n-i},X_{n-i})$. 
We first define the rank of actions: for an action 
$a \in U_*$ the rank $\ell(a)$ of the action is 
$\min_{i} a \in U_i$.
For an action $b \in \mov_2(s)$, if $b \in \St_*$, then
the strong rank $\ell_s(b)$ is defined as $\min_{i} b \in \St_i$; 
and for an action $b \in \Wk_*$, the weak rank $\ell_w(b)$ is 
defined as $\min_{i} b \in \Wk_i$.
For $\ve>0$, consider a distribution that plays actions in 
$U_i$ with probability proportional to $\ve^i$.
Consider an action $b$ for player~2. We consider the following cases:
(a) If $b \in \St_*$, then let $j=\ell_s(b)$. Then for all actions 
$a \in U_j$ we have $\dest(s,a,b) \subs Y_{n+1}$ and for some action $a \in U_j$
we have $\dest(s,a,b) \cap X_{n+1} \neq \emptyset$, in other words, the 
probability to leave $Y_{n+1}$ is at most proportional to $\ve^{j+1}$ and 
the probability to goto $X_{n+1}$ is at least proportional to $\ve^j$, and 
the ratio is $\ve$.
Since $\ve>0$ is arbitrary, the $\lpre_1(Y_{n+1},X_{n+1})$ part can be ensured.
(b)~If $b \not\in \St_*$, then let $j=\ell_w(b)$. Then for all $a \in U_*$ 
there exists $0 \leq j \leq i$ such that 
we have $\dest(s,a,b) \subs Y_{n-j}$ and there exists $a \in U_*$
such that $\dest(s,a,b) \cap X_{n-j} \neq \emptyset$.
It follows that in first case the condition for $\lpre_1(Y_{n+1},X_{n+1})$ is 
satisfied, and in the second case the condition for 
$\lpreodd_1(i,Y_n,X_n,\ldots,Y_{n-i},X_{n-i})$ is satisfied. 
The desired result follows.

\item We now show that $\mov_2(s) \setm \Wk_* \neq \emptyset$, then 
\[
s \in \frpreodd_2(i+1,\no Y_{n+1},\no X_{n+1}, \no Y_n,\no X_n,  \ldots ,\no Y_{n-i},\no X_{n-i}).
\] 
Let $\ov{U}=\mov_1(s) \setm U_*$, and let $B_k=\mov_2(s) \setm \Wk_*$ and 
$B_s =\mov_2(s) \setm \St_*$.
We first present the required properties about the actions that follows 
from the fixpoint characterization. 
\begin{enumerate}
\item \emph{Property~1.} For all $b \in B_k$, for all $a \in U_*$ we 
have 
\[ 
\dest(s,a,b) \subs \no X_{n+1} \land \exists 0\leq j \leq i. 
(\dest(s,a,b) \subs \no X_{n-j} \lor 
\dest(s,a,b) \cap \no Y_{n-j} \neq \emptyset).
\]  
Otherwise the action $b$ would have been included in $\Wk_*$ and 
$\Wk_*$ could be enlarged.

\item \emph{Property~2.} For all $b \in B_s$ and for all $a \in U_*$ 
we have $\dest(s,a,b) \subs \no X_{n+1}$.
Otherwise $b$ would have been included in $\St_*$ and $\St_*$ could be
enlarged.

\item \emph{Property~3.} For all $a \in \ov{U}$, 
either 
\begin{enumerate}
\item $\dest(s,a,B_k) \cap \no Y_{n+1} \neq \emptyset$; 
or 
\item  for all $b \in B_s$, $\dest(s,a,b) \subs \no X_{n+1}$ 
and for all $b \in B_k$, 
\[
\dest(s,a,b) \subs \no X_{n+1} \land \exists 0 \leq j \leq i. 
(\dest(s,a,b) \subs \no X_{n-j} \lor \dest(s,a,b) \cap \no Y_{n-j} \neq \emptyset)
\]
\end{enumerate}
The property is proved as follows: 
if $\dest(s,a, B_k) \subs Y_{n+1}$ and for some $b \in B_s$ we have 
$\dest(s,a,b) \cap X_{n+1} \neq \emptyset$, then $a$ can be included in $U_*$ and 
$b$ can be included in $\St_*$;
if $\dest(s,a, B_k) \subs Y_{n+1}$ and for some $b \in B_k$ we have 
\[
(\dest(s,a,b) \cap X_{n+1} \neq \emptyset) \lor 
\exists 0 \leq j \leq i. 
(\dest(s,a,b) \cap X_{n-j} \neq \emptyset \land \dest(s,a,b) \subs  Y_{n-j})
\]
then $a$ can be included in $U_*$ and 
$b$ can be included in $\Wk_*$.
This would contradict that $(U_*,\St_*,\Wk_*)$ is a fixpoint.
\end{enumerate}

Let $\dis_1$ be a distribution for player~1. Let $Z=\supp(\dis_1)$. 
We consider the following cases to establish the result.
\begin{enumerate}
\item We first consider the case when $Z \subs U_*$. We consider the counter distribution 
$\dis_2$ that plays all actions in $B_k$ uniformly. Then by property~1 we have
(i)~$\dest(s,\dis_1,\dis_2) \subs \no X_{n+1}$; and 
(ii)~for all $a \in Z$ there exists $j\leq i$ such that $\dest(s,a,\dis_2) \subs \no X_{n-j}$ or 
$\dest(s,a,\dis_2) \cap \no Y_{n-j} \neq \emptyset$. 
If for all $a \in Z$ we have $\dest(s,a,\dis_2) \subs \no X_{n-i}$, then 
$\dest(s,\dis_1,\dis_2) \subs \no X_{n-i}$ and $\pre_2(\no X_{n-i})$ is satisfied.
Otherwise, there must exists $j \leq i$ such that $\dest(s,\dis_1,\dis_2) \subs \no X_{n+1-j}$ 
and $\dest(s,\dis_1,\dis_2) \cap \no Y_{n-j} \neq \emptyset$, i.e., 
$\lpreodd_2(i,\no X_{n+1}, \no Y_{n}\ldots, \no X_{n-i+1}, \no Y_{n-i})$ is satisfied.

\item We now consider the case when $Z \cap \ov{U} \neq \emptyset$. 
Let $U_0=U_*$, and we will iteratively compute sets $U_0 \subs U_\ell \subs Z$ 
such that (i)~$\dest(s,U_\ell,B_s) \subs \no X_{n+1}$ and 
(ii)~for all $a \in U_\ell$ there exists $j \leq i$ such that 
$\dest(s,a,B_k) \subs \no X_{n-j}$ or $\dest(s,a,B_k) \subs \no Y_{n-j}$  
(unless we have already witnessed that player~2 can satisfy the predecessor 
operator).
In base case the result holds by property~2.
The argument of an iteration is as follows, and we use $\ov{U}_\ell =Z \setm U_\ell$. 
Among the actions of $Z \cap\ov{U}_\ell$, let $a^*$ be the action played 
with maximum probability. 
We have the following two cases.
\begin{enumerate}
\item 
If there exists $b \in B_s$ such that $\dest(s,a^*,b) \cap \no Y_{n+1} 
\neq \emptyset$, consider the counter action $b$. 
Since $b \in B_s$, by hypothesis we have $\dest(s,U_\ell,b) \subs \no X_{n+1}$.
Hence the probability to go out of $\no X_{n+1}$ is at most the total 
probability of the actions in $Z \cap \ov{U}_\ell$ and for the
maximum probability action $a^* \in Z \cap \ov{U}_\ell$ the set 
$\no Y_{n+1}$ is reached. Let $\eta>0$ be the minimum positive transition 
probability, then fraction of probability to go to $\no Y_{n+1}$ 
as compared to go out of $\no X_{n+1}$ is at least 
$\beta=\eta \cdot \frac{1}{|\mov_1(s)|}>0$.
Thus $\fpre_2(\no X_{n+1},\no Y_{n+1})$ can be ensured by playing $b$.  
 
\item Otherwise, by property~3, 
(i)~either $\dest(s,a^*,B_k) \cap \no Y_{n+1} \neq \emptyset$, 
or 
(ii)~for all $b \in B_s$ we have $\dest(s,a_*,b) \subs \no X_{n+1}$ and for all $b \in B_k$ 
\[
\dest(s,a^*,b) \subs \no X_{n+1} \land \exists 0 \leq j \leq i.  
(\dest(s,a^*,b) \subs \no X_{n-j} \lor \dest(s,a^*,b) \cap \no Y_{n-j} \neq \emptyset)
\]
If $\dest(s,a^*,B_k) \cap \no Y_{n+1} \neq \emptyset$, then chose the action 
$b \in B_k$ such that $\dest(s,a^*,b) \cap \no Y_{n+1} \neq \emptyset$. 
Since $b \in B_k \subs B_s$, and by hypothesis $\dest(s,U_\ell,B_s) 
\subs \no X_1$,  we have $\dest(s,U_\ell,b) \subs \no X_{n+1}$.
Thus we have a witness action $b$ exactly as in the previous case, and 
like the proof above $\fpre_2(\no X_{n+1},\no Y_{n+1})$ can be ensured.
If $\dest(s,a^*,B_k) \subs Y_{n+1}$, then we claim that 
$\dest(s,a^*, B_s) \subs \no X_{n+1}$. 
The proof of the claim is as follows: if $\dest(s,a^*,B_k) \subs Y_{n+1}$
and $\dest(s,a^*,B_s) \cap X_{n+1} \neq \emptyset$, then chose the action $b^*$ 
from $B_s$ such that $\dest(s,a^*,b^*) \cap X_{n+1} \neq \emptyset$, and 
then we can include $a^*$ to $U_*$ and $b^*$ to $\St_*$ (contradicting that 
they are the fixpoints). 
It follows that we can include $a^* \in U_{\ell+1}$ and continue.
\end{enumerate}
Hence we have either already proved that player~2 can ensure the 
predecessor operator or $U_\ell=Z$ in the end. 
If $U_\ell$ is $Z$ in the end, then $Z$ satisfies the property used in the 
previous cases of $U_*$ (the proof of part a), and then as in the 
previous proof (of part a), the uniform distribution over $B_k$ is a 
witness that player~2 can ensure $\pre_2(\no X_{n-i}) \sdcup 
\lpreodd_2(i,\no X_{n+1}, \no Y_{n}\ldots, \no X_{n-i+1}, \no Y_{n-i})$.

\end{enumerate}

\end{enumerate}

The desired result follows.
\qed
\end{proof}

\medskip\noindent{\bf Characterization of $\limit_1(\IP,M,\Phi)$ set.}
From Lemma~\ref{lemm:limit-infprec}, Lemma~\ref{lemm:limit-infprec1}, and
the duality of predecessor operators (Lemma~\ref{lemm:dual-limit})
we obtain the following result characterizing the limit-sure winning 
set for memoryless infinite-precision strategies for parity objectives.

\begin{theo}{}\label{theo-limit-fin}
For all concurrent game structures $\game$ over state space $S$, 
for all parity objectives $\Phi=\ParityCond$ for player~1, 
with $p: S \mapsto [1..2n]$, the following assertions hold.
\begin{enumerate}
\item 
We have $\limit_1(\IP,M,\Phi)= \limit_1(\IP,\FM,\Phi)$, 
and $\limit_1(\IP,\FM,\Phi)=W$, where 
$W$ is defined as the $\mu$-calculus formula in Fig~\ref{mu-formula},
and $B_i=p^{-1}(i)$ is the set of states with priority $i$, for 
$i \in [1..2n]$.
\begin{figure*}
\[
\begin{array}{l}
\nu Y_{n}. \mu X_{n}. \nu Y_{n-1}.  \mu X_{n-1}.  \cdots \nu Y_1. \mu X_1. \nu Y_0. \mu X_0. 
\\ 
\left[
\begin{array}{c}
B_{2n} \cap \pre_1(Y_{n-1}) \sdcup \lpre_1(Y_n,X_n) \\
\cup \\
B_{2n-1} \cap \lpreodd_1(0,Y_{n-1},X_{n-1}) \sdcup \lpre_1(Y_{n},X_{n}) \\
\cup \\ 
B_{2n-2} \cap \lpreeven_1(0,Y_{n-1},X_{n-1},Y_{n-2})  \sdcup \lpre_1(Y_{n},X_{n}) \\
\cup \\
B_{2n-3} \cap \lpreodd_1(1,Y_{n-1},X_{n-1},Y_{n-2},X_{n-2})  \sdcup \lpre_1(Y_{n},X_{n}) \\
\cup \\
\vdots \\
B_2 \cap \lpreeven_1(n-2,Y_{n-1},X_{n-1},\ldots,Y_1,X_1,Y_0)  \sdcup \lpre_1(Y_{n},X_{n}) \\
\cup \\
B_{1} \cap \lpreodd_1(n-1,Y_{n-1},X_{n-1}, \ldots,Y_0,X_0)  \sdcup \lpre_1(Y_{n},X_{n}) \\
\end{array}
\right]
\end{array}
\]
\caption{$\mu$-calculus formula for limit-sure winning with finite-memory 
infinite-precision strategies}\label{mu-formula}
\end{figure*}

\item 
The set  $\limit_1(\IP,\FM,\Phi)$ can be computed symbolically 
using the $\mu$-calculus expression of Fig~\ref{mu-formula} in time 
$\bigo(|S|^{2n+2} \cdot \sum_{s \in S} 2^{|\mov_1(s) \cup \mov_2(s)|})$.

\item For  $s \in S$ whether $s \in \limit_1(\IP,\FM,\Phi)$ 
can be decided in NP $\cap$ coNP.
\end{enumerate}
\end{theo}

The NP $\cap$ coNP bound follows directly from the $\mu$-calculus expressions:  
the players can guess the ranking function of the $\mu$-calculus formula and 
for each state the players guess the sequence of $(A_i,\St_i,\Wk_i)$ to witness that the 
predecessor operators are satisfied.
The witnesses are polynomial and can be verified in polynomial time.

\smallskip\noindent{\bf Construction of infinite-precision strategies.} 
Note that for infinite-precision strategies we are interested
in the limit-sure winning set, i.e., for every $\ve>0$, 
there is a strategy to win with probability $1-\ve$, but not
necessarily a strategy to win with probability~1. 
The proof of Theorem~\ref{theo-limit-fin} (Lemma~\ref{lemm:limit-infprec}) 
constructs for every $\ve>0$ a memoryless strategy that ensures winning 
with probability at least $1-\ve$.

\smallskip\noindent{\bf Equalities of finite and infinite-precision.}
We now establish the last set of equalities required to establish all the 
desired equalities and inequalities described in Section~\ref{sec-intro}.

\begin{theo}{} Given a concurrent game structure $G$ and a parity objective $\Phi$ 
we have $\limit_1(\IP,M,\Phi)=\limit_1(\FP,M,\Phi)=\limit_1(\FP,\FM,\Phi)=\limit_1(\FP,\IM,\Phi)$
\end{theo}
\begin{proof}
We need to show that following two inclusions: 
(1)~$\limit_1(\IP,M,\Phi) \subseteq \limit_1(\FP,M,\Phi)$ (note since trivially we have $\limit_1(\FP,M,\Phi) \subseteq \limit_1(\IP,M,\Phi)$,
it would follow that $\limit_1(\IP,M,\Phi) = \limit_1(\FP,M,\Phi)$);
and 
(2)~$\limit_1(\FP,\IM,\Phi) \subseteq \limit_1(\FP,M,\Phi)$ (note that since trivially we have 
$\limit_1(\FP,M,\Phi) \subseteq \limit_1(\FP,\FM,\Phi) \subseteq \limit_1(\FP,\IM,\Phi)$ it would follow
that $\limit_1(\FP,M,\Phi) = \limit_1(\FP,\FM,\Phi) = \limit_1(\FP,\IM,\Phi)$).
We establish the above inclusions below.

\begin{enumerate}
\item \emph{(First inclusion: $\limit_1(\IP,M,\Phi) \subseteq \limit_1(\FP,M,\Phi)$).}
Consider $\varepsilon>0$, and consider $j \in \Nats$ such that $\frac{1}{j} \leq \varepsilon$. 
Then the construction of a witness memoryless strategy for $\Phi$ for the set 
$\limit_1(\IP,M,\Phi)$ to ensure winning with probability at least $1-\varepsilon'$ 
(as established in Theorem~\ref{theo-limit-fin}), for $\varepsilon'=\frac{1}{j}$,
plays every action with probabilities multiple of $b$, where $b\leq j^{2^{\bigo(|S|\cdot |A|)}}$, 
where $S$ is the set of states and $A$ is the set of actions. 
It follows that for every $\varepsilon>0$, there is a memoryless finite-precision strategy to 
ensure that the objective $\Phi$ is satisfied with probability at least $1-\varepsilon$ from 
all states in $\limit_1(\IP,M,\Phi)$.
Note that the precision of the strategy depends on $\varepsilon>0$.
This establishes the first desired inclusion.

\item \emph{(Second inclusion: $\limit_1(\FP,\IM,\Phi) \subseteq \limit_1(\FP,M,\Phi)$).}
We now show that $\limit_1(\FP,\IM,\Phi) \subseteq \limit_1(\FP,M,\Phi)$.
From the previous item it follows that we have $U=\limit_1(\FP,M,\Phi)= \limit_1(\IP,M,\Phi)$.
We have the following fact (by definition): for every state $s \in S \setminus U$ there exists a constant $c>0$ such that 
for every memoryless strategy for player~1 there is counter strategy for player~2 to ensure
that $\Phi$ is not satisfied with probability at least $c$.
Assume towards contradtiction that there is a state $s \in S\setminus U$ such that $s \in \limit_1(\FP,\IM,\Phi)$.
Fix $\ve>0$ such that $\ve<c$, and since $s \in \limit_1(\FP,\IM,\Phi)$ there is some finite-precision 
(possible infinite-memory) strategy $\stra_1$ to ensure that $\Phi$ is satisfied with probability $1-\ve$,
and let the precision of the strategy be $b$. 
Then consider the turn-based game $\wt{G}$ constructed in Proposition~\ref{prop-uniform-fp} for $b$-precision. 
Then in $\wt{G}$ there is a strategy to ensure that $\Phi$ is satisfied with 
probability at least $1-\ve$. 
However since $\wt{G}$ is a turn-based stochastic game, and in turn-based 
stochastic parity games pure memoryless optimal strategies exist, there is a 
pure memoryless strategy in $\wt{G}$ that ensures winning with probability 
at least $1-\ve$, and from the pure memoryless strategy in $\wt{G}$ 
we obtain a memoryless strategy in $G$ that ensures winning with probability
at least $1-\ve> 1-c$. 
Thus we have a contradiction to the fact.
Thus the desired result follows.

\end{enumerate}
The desired result follows.
\qed
\end{proof}

\smallskip\noindent{\bf Independence from precise probabilities.}
Observe that the computation of all the predecessor operators
only depends on the supports of the transition function, and does not
depend on the precise transition probabilities. 
Hence the computation of the almost-sure and limit-sure 
winning sets is independent of the precise transition probabilities, 
and depends only on the supports.
We formalize this in the following result.

\begin{theo}{}\label{thrm:conc-noprob-equal}
Let $\game_1=(S,\moves,\mov_1,\mov_2,\trans_1)$ and 
$\game_2=(S,\moves,\mov_1,\mov_2,\trans_2)$ be two concurrent game structures  
that are equivalent, i.e., $\game_1 \equiv \game_2$. 
Then for all parity objectives $\Phi$, for all $C_1 \in \set{P,U,\FP,\IP}$ and 
$C_2 \in \set{M,\FM,\IM}$ we have 
(a)~$\almost_1^{\game_1}(C_1,C_2,\Phi)= \almost_1^{\game_2}(C_1,C_2,\Phi)$; and 
(b)~$\limit_1^{\game_1}(C_1,C_2,\Phi)= \limit_1^{\game_2}(C_1,C_2,\Phi)$. 
\end{theo}

All cases of the above theorem, other than when $C_1=\IP$ and $C_2=\IM$ follows 
from our results, and the result for $C_1=\IP$ and $C_2=\IM$ follows from the 
results of~\cite{dAH00}.

\section{Conclusion}
 In this work we studied the bounded rationality problem for qualitative
analysis in concurrent parity games, and presented a precise characterization.
The theory of bounded rationality for quantitative analysis is future 
work, and we believe the results of this paper will be helpful in 
developing the theory.

\paragraph{Acknowledgements.} We are indebted to and thank 
anonymous reviewers for extremely helpful comments.

\bibliographystyle{alpha}
\bibliography{MyArt,PRart}

\end{document}